%
%
\documentclass[12pt]{amsart}
\usepackage{amsmath,amsfonts,amssymb}
\usepackage{amsthm,amscd}
\newcommand{\A}{\ensuremath{\mathcal{A}}}
\newcommand{\B}{\ensuremath{\mathcal{B}}}

\newcommand{\R}{\ensuremath{\mathcal{R}}}
\newcommand{\KC}{\ensuremath{{\mathbb C}}}
\newcommand{\KR}{\ensuremath{{\mathbb R}}}
\newcommand{\KZ}{\ensuremath{{\mathbb Z}}}
\newcommand{\KN}{\ensuremath{{\mathbb N}}}
\newcommand{\KCP}{\ensuremath{{\mathbb C}{\mathbb P}}}
\newcommand{\KT}{\ensuremath{{\mathbb T}}}

\newcommand{\T}{\ensuremath{{\mathbb T}}}
\newcommand{\into}{\ensuremath{\hookrightarrow}}
\newcommand{\Aut}{\operatorname{Aut}}
\newcommand{\Ad}{\operatorname{Ad}}

\newcommand{\Br}{\operatorname{Br}}
\newcommand{\Dyn}{\operatorname{Dyn}}
\newcommand{\im}{\operatorname{im}}

\newcommand{\Cpct}{\ensuremath{\mathcal{K}}}
\newcommand{\CSet}{\ensuremath{\mathcal{S} \mathcal{E} \mathcal{T}}}
\newcommand{\CrPr}[3]{ \ensuremath{#1 \underset{#3}{\rtimes} #2} }
%
\newtheorem{theorem}{Theorem}[section] 
\newtheorem{definition}{Def}[section] 
\newtheorem{lemma}{Lemma}[section] 
\newtheorem{corollary}{Corollary}[section]

\begin{document} 
\title[Topological T-duality and Automorphisms]{Topological T-duality, 
Automorphisms and Classifying Spaces}
\author{Ashwin S. Pande}
\email{{ ashwin@hri.res.in} }
\date{\today}
\begin{abstract}
We extend the formalism of Topological T-duality to spaces
which are the total space of a principal $S^1$-bundle $p:E \to W$
with an $H$-flux in $H^3(E,\KZ)$ together 
the together with an automorphism of the continuous-trace algebra on $E$
determined by $H$.
The automorphism is a `topological approximation' to
a gerby gauge transformation of spacetime. 
We motivate this physically from Buscher's Rules for T-duality.
Using the Equivariant Brauer Group, we connect this problem to 
the $C^{\ast}$-algebraic
formalism of Topological T-duality of Mathai and Rosenberg \cite{MRCMP}.

We show that the study of this problem leads to the study of a purely 
topological problem, namely, Topological T-duality of
triples $(p,b,H)$ consisting of isomorphism classes of a principal
circle bundle $p:X \to B$  and classes $b \in H^2(X,\KZ)$ and 
$H \in H^3(X,\KZ).$ 
We construct a classifying space $R_{3,2}$ for triples 
in a manner similar to the work of Bunke and Schick \cite{Bunke}. 
We characterize $R_{3,2}$ up to homotopy and study some of its properties.
We show that it possesses a natural self-map which induces
T-duality for triples. We study some properties of this map. 
\end{abstract}
\maketitle

\section{Introduction \label{SecBr}}
Topological T-duality is an attempt to
study the T-duality symmetry of Type II String Theory \cite{Alvarez} 
using methods from Noncommutative and Algebraic 
Topology \cite{MRCMP, Bunke, BouEvMa, Daenzer}. 
In the simplest case, T-duality states that Type II A string theory 
on a certain background is equivalent to Type II B string theory on
another background, that is, acting on the original string theory by a 
canonical transformation (termed T-duality) transforms that theory
into the dual one\footnote{See Ref.\ \cite{Alvarez} item (6) on page (4) 
and Sec.\ (7) for a proof.}. 
For this to be possible, the background 
spacetime must carry a torus action (which need not be free). 
In this paper we only consider backgrounds which are principal circle bundles.

Both Type IIA and Type IIB String Theory backgrounds possess 
three massless bosonic fields: A graviton (associated to the metric),
an $H$-flux (associated to the Kalb-Ramond field $B$, $H=dB$) 
and a dilaton.  As is well known \cite{Minasian}, the $H$-flux is a 
closed integral three-form which is the gerbe curvature of a gerbe 
with connection form $B$ on that background. (The word `gerbe'
is used in the sense of Ref.\ \cite{Minasian}.)
It is a remarkable fact that the Topology and $H$-flux of the T-dual 
spacetime depend only on the Topology and $H$-flux of the original 
spacetime \footnote{See Ref.\ \cite{MRCMP, Bunke, BouEvMa, Daenzer} 
for a variety of approaches.}.  
This phenomenon is called Topological T-duality.

Let $X^m$ be a $m$-dimensional manifold which is a
principal circle bundle over a base $W$. Let $X^m \times Y^{10-m}$ be 
the manifold which is a model for the spacetime background. 
Suppose that there is a type II String Theory on this spacetime background.
We perform a T-duality along the circle orbits in $X^m$. 
Topological T-duality (see Refs.\ \cite{MRCMP, Bunke, BouEvMa, Daenzer})
claims that the underlying topological space of the physical T-dual 
background is $(X^{m})^{\#} \times Y^{10-m}:$
Here, $(X^{m})^{\#}$ is the Topological T-dual of the underlying topological
space of $X^{m}.$ 

Consider a spacetime with a $H$-flux of strength $H.$ We model the
spacetime by a manifold $X$ which is the total space of a principal
circle bundle $p:X \to W$ and the $H$-flux by a gerbe with connection 
on $X$ whose gerbe curvature form is $H.$ Let $V_{\alpha}$ be an open cover 
of $W = X/S^1$ and $U_{\alpha}$ be the lift of the cover to $X.$ 

For the convenience of the reader, we give the definition of a gerbe
and gerbe connection here. We follow the treatment in Minasian, 
Ref.\ \cite{Minasian}, Sec.\ (2.2) here\footnote{See also the treatment
in Chatterjee Ref.\ \cite{Chatterjee}}. Let $X$ and $\{ U_{\alpha} \}$ be
as above. For any $\alpha_1,\ldots,\alpha_n,$ let 
$U_{{\alpha_1}\ldots{\alpha_n}} = U_{\alpha_1} \cap \cdots \cap U_{\alpha_n}.$
\begin{definition}
A gerbe on $X$ is defined by the following data: 
\leavevmode
\begin{enumerate}
\item A line bundle $L_{\alpha \beta}$ on each two-fold intersection 
$U_{\alpha \beta}.$
\item An isomorphism $L_{\alpha \beta} \simeq L_{\beta \alpha}.$
\item A smooth nowhere-zero section
$f_{\alpha \beta \gamma}:X_{\alpha \beta \gamma}
\to \KC^{\ast}$ of the line bundle $L_{\alpha \beta}\otimes L_{\beta \gamma}
\otimes L_{\gamma \alpha}$ on each three-fold intersection
$U_{\alpha \beta \gamma}.$
\item $f_{\alpha \beta \gamma}$ satisfies the cocycle condition
$(\delta f)_{\alpha \beta \gamma \delta} = f_{\alpha \beta \gamma}
f^{-1}_{\beta \gamma \delta} f_{\gamma \delta \alpha} f^{-1}_{\delta
\alpha \beta} = 1$ on each four-fold intersection 
$U_{\alpha \beta \gamma \delta}.$
\end{enumerate}
\end{definition}

We now define a gerbe with connection also following Minasian:
\begin{definition}
A gerbe with connection on $X$ is a gerbe on $X$ together with a
connection $A_{\alpha \beta}$ on the line bundle $L_{\alpha \beta}$ in
each $U_{\alpha \beta}$ such that the section $f_{\alpha \beta \gamma}$
is covariantly constant with respect to the induced connection
on $L_{\alpha \beta}\otimes L_{\beta \gamma} \otimes L_{\gamma \alpha}:$
$$
A_{\alpha \beta} + A_{\beta \gamma} + A_{\gamma \alpha} =
\frac{1}{2\pi i} f^{-1}_{\alpha \beta \gamma} df_{\alpha \beta \gamma},
$$
and a two-form (the gerbe connection) $B_{\alpha} \in \Omega^2(U_{\alpha})$
such that $B_{\alpha} - B_{\beta} = dA_{\alpha \beta}$ on $U_{\alpha \beta}.$
\end{definition}

It is clear that $dB_{\alpha}=dB_{\beta}$ and hence the forms $dB_{\alpha}$ 
glue into a global three-form $H$ termed the gerbe curvature.
Physically, it models the field strength of the $B$-field.
This three-form is integral and defines a characteristic class 
$[H]$ in de Rham cohomology.

We now define the notion of a gauge transformation and large gauge
transformation of a gerbe. We 
follow Ref.\ \cite{Chatterjee} for the definition of a gauge transformation 
of a gerbe ( see Ref.\ \cite{Chatterjee}: 
A $1$-gauge transformation\footnote{ See Ref.\ \cite{Chatterjee}, 
Def.\ (2.2.5).} there is termed a gauge transformation here.
A $0$-gauge transformation\footnote{ See Ref.\ \cite{Chatterjee}, 
Def.\ (2.1.5).} there would be a family of automorphisms of each 
of the $L_{\alpha \beta},$ that is it would correspond to performing an
independent gauge transformation on each of the $L_{\alpha \beta}$
) and Ref.\ \cite{FMS} for the definition of a large
gauge transformation of the $B$-field ( see Ref.\ \cite{FMS}, pg. (11) before
Eq. (2.17)):
\begin{definition}
Let $X$ and $\{ U_{\alpha} \}$ be as above.
Let $U_{\alpha} \mapsto G_{\alpha}$ be an assignment of one-forms to the
open sets in the chart on $X.$
A gauge transformation of the gerbe on $X$ is the following transformation
of the $B$-field and its gauge field:
\leavevmode
\begin{itemize}
\item $A_{\alpha \beta} \mapsto A_{\alpha \beta} + 
(G_{\alpha}|_{U_{\alpha \beta}} - G_{\beta}|_{U_{\alpha \beta}})$,
\item $B_{\alpha} \mapsto B_{\alpha} + dG_{\alpha}.$
\end{itemize}
\end{definition}

In Ref.\ \cite{FMS} the authors 
point out\footnote{See pg.\ (11) before Eq.\ (2.17).}, 
that for a $l$-form field strength with $(l-1)$-form potential
$A,$ gauge transformations of the form $A \to A + \omega$ with
$[\omega]$ nontrivial are termed `large' gauge transformations.
For the $B$-field in Type II string theories, 
$l=3$ (see item no. (5) after Eq.\ (2.21) on page 13 
of Ref.\ \cite{FMS}). 

Hence for large gauge transformations, 
$dG_{\alpha}=\omega|_{U_{\alpha}},$ with $\omega$ a closed 
two-form on $X.$ Among these, there are gauge transformations for which
$\omega = F,$ with $F$ a closed integral two-form on $X.$
These transformations are special for a geometric reason: It is possible to 
tensor a gerbe with 
a line bundle $L:$ One sends $L_{\alpha \beta} \mapsto 
L|_{U_{\alpha \beta}} \otimes L_{\alpha \beta}, 
A_{\alpha \beta} \mapsto A|_{U_{\alpha \beta}} + A_{\alpha \beta},
B_{\alpha} \mapsto B + F|_{U_{\alpha}}$ where $A,F$ are the connection
and curvature forms of the connection on $L.$
(See Ref.\ \cite{Minasian} after Eq.\ (2.10).) 
It is these special gauge transformations that we consider in this
paper. Such a gauge transformation cannot be homotoped to zero 
without changing its cohomology class and is termed a large gauge 
transformation. These special transformations generate 
the gauge group. 

Consider a gerbe with connection on $X$ with gerbe curvature form $H$ which
is equivariant under the $S^1$-action on $X$: That is, for every
$t \in S^1,$ pulling the
gerbe connection back by the $S^1$-action map $\phi_t:X \to X$
causes the gerbe connection to undergo a gauge transformation.
Note that in this case the characteristic class $[H] \in H^3(X,\KZ)$ 
of the gerbe curvature form $H$ is {\em invariant} under pullback by
$\phi_t$ for every $t \in S^1$,
i.e., for every $t \in S^1,$ $[\phi_t^{\ast} H] = [H].$ 
(See Ref.\ \cite{Minasian}, before Eq.\ (2.3).) 

In this paper we study large gauge transformations
of the gerbe which are equivariant under the circle action on $X:$ We had noted above that a large gauge 
transformation naturally gives rise to a closed integral two-form $F$ on $X.$ 
Locally these transformations are of the form 
$B_{\alpha} \to B_{\alpha} + F|_{U_{\alpha}}$
where $F$ is the curvature of a line bundle $L$ naturally associated
to the large gauge transformation as described above. 
Such a gauge transformation has as a characteristic class the integral 
cohomology class of $F$, namely $[F] \in H^2(X,\KZ)$. 
We require that $L$ be equivariant under the circle action
on $X:$ That is, we require that the circle action $\phi_t$ on $X$ acts on $L$
by bundle automorphisms such that its curvature remains invariant 
under pullback by $\phi_t,$ 
$\phi_t^{\ast}(F) = F.$  It is clear that
(see Ref.\ \cite{Minasian} Appendix C for a treatment of
equivariant line bundles on a principal bundle in another context)
the Lie derivative of $F$ with respect to the
circle action vanishes. 

We now argue that for a smooth gerbe T-duality also T-dualizes
gerbe automorphisms. We use the description of T-duality in Type II string
theory using the theory of smooth gerbes given in the paper by Minasian 
(see Ref.\ \cite{Minasian}). After this, we will abstract this argument 
into a purely topological conjecture for automorphisms of gerbes on spaces 
and make a connection with the $C^{\ast}$-algebraic formalism of Topological
T-duality. 

Only for the purposes of this calculation, we use a different
notation to conform with the notation of Ref.\ \cite{Minasian}.
The notation will revert back to the previous notation as soon as this
calculation is over.
We let $\pi:X \to W$ be a smooth
principal circle bundle over a smooth base manifold $W.$
(See Ref.\ \cite{Minasian}, Sec.\ (2.1), note that $n=1$ in this paper,
and $M=W.$). On $X$ we assume given a metric and a $B$-field.
Let $\{U_{\alpha}\}$ be a cover of $W$ which we lift to a cover on $X.$
We choose coordinates $(\theta, x^{\alpha})$ locally 
on the cover $\{ U_{\alpha} \}$ where $\theta$ is the coordinate along
the circle direction. Greek subscripts on tensors now refer to the
$x^{\alpha}.$ (Hence, the connection form on $X$ is 
$\Theta = d\theta + A_{\alpha} dx^{\alpha}.$)

Let $\tilde{\pi}:X^{\#} \to W$ be the T-dual bundle.
We write the metric and $B$-field on $X$ in 
Kaluza-Klein form as
\begin{gather}
ds^2 = G_{00}(d \theta + A_{\alpha} dx^{\alpha})^2 + 
g_{\alpha \beta} dx^{\alpha} dx^{\beta} \nonumber \\
B = (d \theta + A_{\alpha}) \wedge B_{\beta} dx^{\beta} 
+ B_{\alpha \beta} dx^{\alpha} \wedge dx^{\beta}. \nonumber
\end{gather}
Here, $B_{\alpha}$ and $B_{\alpha \beta}$ are horizontal two-forms.

If we denote T-dual quantities by tilde superscripts, 
Buscher's rules take the form
\begin{gather}
\tilde{G}_{00} = 1/G_{00}, \nonumber \\
\tilde{A}_{\alpha} = B_{\alpha} \mbox{ , }
\tilde{B}_{\alpha} = A_{\alpha} ,\nonumber \\
\tilde{g}_{\alpha \beta} = g_{\alpha \beta} \mbox{ , }
\tilde{B}_{\alpha \beta} = B_{\alpha \beta}, \nonumber \\
\tilde{\Phi} = \Phi - \frac{1}{2} \ln(G_{00}). \nonumber
\end{gather}
(For example, see Ref.\ \cite{Kiritsis}, Ex.\ (6.12).)

Let $\Theta$ denote the connection form on the bundle
$X$ and $\tilde{\Theta}$ the connection form on the
T-dual bundle. From the above
\begin{align}
\Theta = d \theta + A_{\alpha} dx^{\alpha} \nonumber \\
\tilde{\Theta} = d \tilde{\theta} + B_{\alpha} dx^{\alpha} \nonumber
\end{align}
where we have used Buscher's Rules above.
Now, suppose $B \to B + F$ on $X$ where $dF=0$ and
$F$ is integral. 
We may write $F= \Theta \wedge F_1 + F_2$ where
$F_i$ are horizontal forms on $X$. The T-dual
$B$-field and metric will now be
\begin{eqnarray}
ds^2  =&&  G_{00}(d \tilde{\theta} + B_{\alpha} dx^{\alpha}+
F_{1\alpha} dx^{\alpha})^2 + 
g_{\alpha \beta} dx^{\alpha} dx^{\beta} \nonumber \\
\tilde{B} =&& (d \tilde{\theta} + B_{\alpha} dx^{\alpha} + 
F_{1\alpha} dx^{\alpha}) 
\wedge A_{\beta} dx^{\beta} + \nonumber \\
&& B_{\alpha \beta} dx^{\alpha} \wedge dx^{\beta}  + 
F_{2\alpha \beta} dx^{\alpha} \wedge dx^{\beta}. \nonumber
\end{eqnarray}
It is easy to see that since $dF=0$, $dF_1=0$ and, 
on the T-dual the form $dF_1$ is zero as well. 
However $F_2$ is not a closed form in
general.  

It can be seen directly from the definition
that $F_1 = p_!(F),$ and so, since $F$ is integral, 
$F_1$ is integral as well. 

Since $dF_1=0$, a coordinate transformation of $X^{\#}$ should be able
to remove the term $(d{\tilde\theta} + F_{1\alpha} dx^{\alpha})$ at least
locally. Locally, such a transformation would only rotate the circle fiber of 
$X^{\#}.$ The topological nontriviality of $F_1$ (the fact that it is integral)
implies that this transformation would act in a topologically nontrivial
manner on the total space of the bundle. Further,  since the gerbe 
connection on $X$ and on the T-dual are both equivariant
under the circle action, making such a coordinate transformation
would cause the T-dual $B$-field to undergo a gauge
transformation. 

This can be seen directly as follows: 
It is clear that this coordinate transformation would be 
a bundle automorphism (denoted $\phi$) of $\tilde{\pi}:X^{\#} \to W$ and so
$\tilde{\pi} \circ \phi = \tilde{\pi}$. 
Recall that $\tilde{B}_{\alpha} - \tilde{B}_{\beta} = 
d \tilde{A}_{\alpha \beta}$
by definition. We have that 
$$\phi^{\ast}\tilde{B}_{\alpha} - \phi^{\ast}\tilde{B}_{\beta} = 
\phi^{\ast}(d \tilde{A}_{\alpha \beta}) = 
d \phi^{\ast} \tilde{A}_{\alpha \beta}.$$
However, by equivariance\footnote{See 
Ref.\ \cite{Minasian}, Eq.\ (2.14).}
$$ \tilde{A}_{\alpha \beta}= q^{\ast} \tilde{a}_{\alpha \beta}
+ \tilde{h}_{\alpha \beta} \Theta^{\#}$$
where $\tilde{a}_{\alpha \beta}, \tilde{h}_{\alpha \beta}$ 
are horizontal one and zero forms on $X^{\#}.$
(See Ref.\ \cite{Minasian}, Eq.\ (2.14, 2.21) and Cor.\ (2.1).) Remember
that for a circle bundle, $m_{\alpha \beta} = 0$ so $\tilde{h}_{\alpha \beta}$
are actually the transition functions of the T-dual bundle $X^{\#} \to W.$)

Therefore, 
$$\phi^{\ast}\tilde{B}_{\alpha} - \phi^{\ast}\tilde{B}_{\beta} = 
\phi^{\ast} \tilde{A}_{\alpha \beta}= 
\phi^{\ast} \tilde{a}_{\alpha \beta}
+ \phi^{\ast} \tilde{h}_{\alpha \beta} \Theta^{\#}.$$
(Here $\phi^{\ast} \Theta^{\#} = \Theta^{\#}$
since $\phi$ is a bundle automorphism.)
If the bundle transformation $\phi$ was topologically
nontrivial, $\phi^{\ast}$ would
act nontrivially on the transition functions $\tilde{h}_{\alpha \beta}.$ 
As a result, 
$\phi^{\ast}\tilde{B_{\alpha}}$ transforms nontrivially on changing
charts.

The fact that $F$ is topologically nontrivial implies that 
the T-dual $B$-field has undergone a large gauge transformation.

As we had said earlier, a large gauge 
transformation of a gerbe 
has a characteristic class in $H^2(X,\KZ)$. 
We  may pick a large gauge transformation 
and ask for the characteristic class of
the gerbe gauge transformation on the T-dual.

From now on, $X$ will refer only to a $CW$-complex 
which is a model for a spacetime background. 

Thus, it is natural to ask the following: Given a class in $H^2(X,\KZ)$ 
does Topological T-duality naturally give a class in 
$H^2(X^{\#},\KZ)?$ We argued above that this phenomenon 
occurs for smooth gerbes. We will prove in this paper
that it occurs in the $C^{\ast}$-algebraic formalism of Topological
T-duality of Mathai and Rosenberg \cite{MRCMP} and also in the 
classifying space formalism of Bunke and coworkers \cite{Bunke}.
We will also attempt to obtain some information about the T-dual class.

Ref.\ \cite{MRCMP} suggests that given a spacetime with H-flux which also has 
a free $S^1$-action, one should attempt to construct an exterior
equivalence class of $C^{\ast}$-dynamical systems. 
In the associated $C^{\ast}$-dynamical system, the continuous-trace algebra 
together with its associated $\KR$-action should be viewed as
a `topological approximation' to the smooth gerbe on spacetime together
with a circle action on it. In Ref.\ \cite{TTDTF} such a 
$C^{\ast}$-dynamical system was naturally constructed from the 
data of a smooth gerbe with connection on $X$ in a large variety of examples.  

Further, Ref.\ \cite{MRCMP} demonstrated that the effect of the 
T-duality transformation on spacetime topology was given by the crossed 
product\cite{Will} construction in $C^{\ast}$-algebra theory in 
all the examples examined: The topological space underlying the T-dual 
spacetime was always the spectrum of the crossed-product 
algebra $\CrPr{\A}{\KR}{\alpha}.$ Based on this, Ref.\ \cite{MRCMP}
argued that it was natural to associate to the
T-dual the $C^{\ast}$-dynamical system 
$(\CrPr{\A}{\KR}{\alpha}, \hat{\KR},\alpha^{\#}).$

We may use the definition of Ref.\ \cite{MRCMP} and take as a 
model for a spacetime $X$ with $H$-flux, a continuous-trace algebra $\A$ with 
spectrum $X$ together with a lift $\alpha$ of the $S^1$-action on $X$ to 
a $\KR$-action on $\A$ \cite{RaeRos}. For simplicity, we restrict
ourselves to $S^1$-actions which have no fixed points.
Let $\Cpct$ denote the compact operators on a fixed seperable infinite
dimensional Hilbert space. Since the circle action has no fixed points, 
the spacetime $X$ is a principal circle 
bundle $p:X \to W.$ Let $\A = CT(X,\delta),$
$\delta \in H^3(X,\KZ).$ It can be shown that a lift $\alpha$ of the 
$S^1$-action on $X$ to a $\KR$-action on $\A$ exists and is unique up
to exterior equivalence. (See Ref.\ \cite{RosCBMS} Lemma (7.5) for a proof.)

We may model a large gauge transformation of the gerbe on spacetime 
by a locally unitary automorphism $\phi: \A \to \A$ of the associated 
continuous-trace algebra $\A.$ It is well known that such automorphisms
are determined up to exterior equivalence as automorphisms by their 
Phillips-Raeburn obstruction \cite{Phil} in $H^2(X,\KZ).$ 

There is a natural way to construct such a $\phi$ from the data above:
We noted above that a large gauge transformation of a gerbe on $X$
was naturally associated to a line bundle $L$ on $X.$ 
By a theorem of Phillips and Raeburn since
$X = \hat{\A}$ and we have a locally trivial principal circle bundle 
(the circle bundle $P \to X$ associated to the line bundle $L \to X$ ), 
we can naturally obtain a locally unitary automorphism group
$\alpha:\KZ \to \A$ such that the principal bundle 
$p:(\CrPr{\A}{\KZ}{\alpha})^{\wedge} \to X$ is isomorphic to $P$.
(See Ref.\ \cite{Phil}, Thm.\ (3.8), and let $G = S^1,$ and take
$\phi$ to be the generator of the group $\hat{G} = \KZ.$)
We may take $\phi$ to be the generator of the $\KZ$-action $\alpha.$
From the proof of the above theorem, it is clear that the characteristic
class of $L$ (which is the characteristic class of 
the large gauge transformation) will be the Phillips-Raeburn obstruction
of the automorphism $\phi.$ However, such an automorphism will not
commute with the $\KR$-action on $\A.$ In general, finding such an
automorphism is not a trivial matter. 

In the case when $X$ has an $S^1$-action which lifts to a $\KR$-action
on $\A$, the Phillips-Raeburn invariant does not take
the commutation of the automorphism with the $\KR$-action into account. 
We introduce a treatment based on the Equivariant Brauer Group in 
Sec.\ \ref{SecModel} below which fixes this problem. 

See Ref.\ \cite{Chase}, Sec.\ (2.3)
for an example of what would be needed to make such an automorphism
commute with the lift of the translation action on $X:$ If we take
$G=\KR$ there, the proof shows a way to obtain a $\KR$-action on 
$(\CrPr{\A}{\KZ}{\alpha})^{\wedge}$ making the
latter into a $\KR$-equivariant $S^1$-bundle over $\hat{\A}.$

In this paper we adopt a different approach to this problem by
an argument involving the Equivariant Brauer group. 

Recall that given a $S^1$-action $\psi$ on $X,$ there exists a lift
of $\psi$ to a $\KR$-action $\alpha$ on $\A$ which is unique up to
exterior equivalence such that $\alpha$ induces
the action $\psi$ on $X=\hat{\A}$ (See Ref.\ \cite{RosCBMS}, Lemma (7.5) 
for a proof.) 
If such a lift has been done, we may then ask if a given class in 
$H^2(X,\KZ)$ lifts to a unique automorphism
of $\A$ which commutes with the $\KR$-action on $\A.$ 
It is possible to prove that a lift exists. However, the lift will
in general be non-unique.
The following lemma is a slight generalization of Thm.\ (3.1) of 
Ref.\ \cite{ATMP}:
\begin{lemma}
Let $X$ be a principal bundle $p:X \to W.$ Let $\A = CT(X,\delta)$
for any $\delta \in H^3(X,\KZ).$ Let $\alpha_t$ be a lift of the $S^1$-action 
on $X$ to an $\KR$-action on $\A.$
\begin{enumerate}
\item Let $[\lambda] \in H^2(X,\KZ).$ Then, there is a $\KR$-action $\beta$
on $\A$ exterior equivalent to $\alpha$ and a spectrum-fixing $\KZ$-action
$\lambda$ on $\A$ which has Phillips-Raeburn obstruction $[\lambda]$
such that $\beta$ and $\lambda$ commute.
\item With the notation above, the action $\lambda$ induces a 
$\KZ$-action $\tilde{\lambda}$ on $\CrPr{\A}{\KR}{\alpha}.$ The induced
action on the crossed product is locally unitary on the spectrum
of the crossed product and is thus spectrum fixing.
\end{enumerate}
\label{LemATMP}
\end{lemma}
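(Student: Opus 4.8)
The plan is to build the $\KZ$-action $\lambda$ directly out of a line bundle realising $[\lambda]$, and then to spend the bulk of the work perturbing $\alpha$ inside its exterior equivalence class so that it commutes with $\lambda$; part~(2) should then fall out almost formally.

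For part~(1) I would first use the construction recalled before the statement: pick a line bundle $L\to X$ with $c_1(L)=[\lambda]$, let $P\to X$ be its associated principal circle bundle, and invoke \cite{Phil}, Thm.\ (3.8) (with $G=S^1$, $\hat G=\KZ$, and $\phi$ the generator of $\hat G$) to obtain a locally unitary $\KZ$-action $\lambda$ on $\A=CT(X,\delta)$ whose Phillips--Raeburn obstruction is $[\lambda]$ and for which $(\CrPr{\A}{\KZ}{\lambda})^{\wedge}\cong P$. Fix $\lambda$ (unique up to exterior equivalence given $[\lambda]$) together with a cover $\{U_i\}$ of $X$, refining the lift of a trivialising cover of $W$, on which $L$ is trivial, so that the generating automorphism $\sigma:=\lambda_1$ restricted to the ideal $\A(U_i)$ equals $\Ad(u_i)$ for unitary multipliers $u_i$. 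Then I would measure how far $\alpha$ is from commuting with $\lambda$: for $t\in\KR$ the automorphism $\alpha_t$ covers a rotation of $X$, and since $S^1$ is connected this rotation acts trivially on $H^{\ast}(X,\KZ)$ — exactly as was noted for $H^3$ in the discussion of $[\phi_t^{\ast}H]$, and likewise in degree $2$ — so $\gamma_t:=\alpha_t\sigma\alpha_t^{-1}$ is again locally unitary with Phillips--Raeburn obstruction $[\lambda]$, and hence $\gamma_t=\Ad(y_t)\sigma$ for a unitary multiplier $y_t$.

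The heart of the proof is then to absorb the family $\{y_t\}$ into an exterior perturbation of $\alpha$. Using that $\alpha$ is an $\KR$-action one checks that $t\mapsto y_t$ obeys the $\alpha$-cocycle identity up to a scalar-valued $2$-cocycle, i.e.\ a class in $H^2(\KR,C_b(X,\T))$ for the module structure coming from the $S^1$-action on $X$; this group vanishes — ultimately because the acting group $\KR$ is contractible — so after correcting $y$ by scalars it becomes an honest $\alpha$-cocycle, and perturbing $\alpha$ by it gives an $\KR$-action $\beta$, exterior equivalent to $\alpha$, still lifting the $S^1$-action on $X$, which commutes with the already-fixed $\lambda$. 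This is the step at which I would follow the proof of \cite{ATMP}, Thm.\ (3.1); the only new point is that the whole cocycle computation is local on $X$ and therefore insensitive both to the value of $\delta$ and to which class $[\lambda]$ was chosen, which is what upgrades that theorem to arbitrary $\delta\in H^3(X,\KZ)$. I expect this continuity-and-cocycle bookkeeping — checking that the $y_t$ can be taken strictly continuous and that the scalar obstruction really dies in the twisted coefficient module — to be the one genuine obstacle; the surrounding argument, phrased through the exact sequence for the equivariant Brauer group $\Br_{\KR}(X)$ of Section~\ref{SecModel}, is formal.

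For part~(2), since $\beta$ commutes with $\lambda$, functoriality of the crossed product turns $\lambda$ into a $\KZ$-action $\tilde\lambda$ on $\CrPr{\A}{\KR}{\beta}$, and transporting along the isomorphism $\CrPr{\A}{\KR}{\beta}\cong\CrPr{\A}{\KR}{\alpha}$ (exterior equivalent actions have isomorphic crossed products) gives the $\KZ$-action on $\CrPr{\A}{\KR}{\alpha}$ asserted in the statement. To see $\tilde\lambda$ is locally unitary on the spectrum $X^{\#}$ of the crossed product, I would work over a trivialising open set $V\subseteq W$: because $\beta$ lifts the \emph{free} $S^1$-action on $p^{-1}(V)\cong V\times S^1$, the Mathai--Rosenberg structure theory \cite{MRCMP} identifies the restriction of $\CrPr{\A}{\KR}{\beta}$ to the part of $X^{\#}$ lying over $V$ with a continuous-trace algebra over that piece of $X^{\#}$. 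On such a piece the relations $\beta_t\sigma\beta_t^{-1}=\sigma$ and $\sigma=\Ad(u_i)$ force $\beta_t(u_i)$ to coincide with $u_i$ up to a scalar, and after normalising those scalars (again using the vanishing of the relevant $\KR$-cohomology) the $u_i$ may be taken $\beta$-invariant; a $\beta$-invariant unitary multiplier of $\A(U_i)$ is a unitary multiplier of the crossed product over the corresponding piece of $X^{\#}$ and implements $\tilde\lambda_1$ there. Hence $\tilde\lambda$ is locally of the form $\Ad$ of a unitary on $X^{\#}$, i.e.\ locally unitary, and therefore spectrum-fixing.
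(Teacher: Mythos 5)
Your proposal is correct and takes essentially the same route as the paper: the paper's entire proof is the observation that Thm.\ (3.1) of \cite{ATMP} only uses that $\A$ is continuous-trace (never that $\delta=0$), and your argument is a fleshed-out reconstruction of that theorem's internals (build $\lambda$ from a line bundle via Phillips--Raeburn, write $\alpha_t\lambda_1\alpha_t^{-1}=\Ad(y_t)\lambda_1$ using triviality of the fibre rotation on $H^2(X,\KZ)$, kill the cocycle obstruction by vanishing of the relevant Moore cohomology of $\KR$) together with the same remark that none of it depends on $\delta$ or on $[\lambda]$. No gap; you have simply inlined the cited proof rather than quoting it.
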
 
\begin{proof}
The proof of Thm.\ (3.1) of Ref.\ \cite{ATMP} only uses the fact that 
$\A$ is a continuous-trace algebra. (See Ref.\ \cite{ATMP} and references 
therein.) The fact that $\A$ is $C_0(X,\Cpct)$ is not used anywhere in 
that theorem. Thus it applies even when the Dixmier-Douady invariant is nonzero.
\end{proof}

As a model for a gauge transformation of the gerbe associated to $\A$ 
we take a spectrum-fixing $C^{\ast}$-algebra automorphism $\phi$ on $\A$ which 
commutes with $\alpha$. We consider the $C^{\ast}$-dynamical system 
$(\A,\alpha \times \phi, \KR \times \KZ).$
The $C^{\ast}$-algebra automorphism $\phi$ then defines an automorphism 
$\psi$ of the crossed-product
algebra. (See Ref.\ \cite{ATMP} and Lemma (\ref{LemATMP}) above.)
What is the Phillips-Raeburn obstruction of this automorphism?
In this paper we obtain a solution to this problem using
a classifying space argument. (See Thm.\ (6.3), Part (2).)

Thus, given any class $[\lambda] \in H^2(X; \KZ)$, there exists a 
(not necessarily unique) action $\alpha$ of $\KR$ on $\A$ inducing the 
given action of $S^1=\KR/\KZ $ on $W$ and a commuting action $\lambda$ of
$\KZ$ on $\A$ with Phillips-Raeburn obstruction
$[\lambda].$ Also $\lambda$ passes to a locally unitary action
on ${\mathcal E} \simeq \CrPr{\A}{\KR}{\alpha}.$ The
actions $\alpha$ and $\lambda$ are (individually) unique up to
exterior equivalence, but unfortunately the pair
$(\alpha,\lambda)$, as an action of $\KR \times \KZ$, is 
not necessarily unique as a $C^{\ast}$-dynamical system, 
so this construction is not entirely
canonical. However, this non-uniqueness does {\bf not} change the 
Phillips-Raeburn invariant of the T-dual automorphism. Thus the
question asked above in the paragraph 
after Lemma (\ref{LemATMP}) is well-defined.

In Section (\ref{SecModel}), we introduce a point of
view based on the equivariant Brauer group, which measures 
precisely this lack of canonicity described above. 
We show that there is a natural\footnote{This does not contradict 
the statement in the previous paragraph,
as there we were trying to lift a class in $H^2(X,\KZ)$ to such a dynamical
system.} map, $T_R$, which sends a
$C^{\ast}$-dynamical system $(\A, \KR\times \KZ,\alpha \times \phi)$ to
a dual\footnote{We are using a different definition of dual dynamical 
system from Schneider's work, \cite{Schneider}, 
see the last part of Sec.\ (\ref{SecModel}).} dynamical system 
$(\CrPr{\A}{\KR}{\alpha}, \KR \times \KZ, \alpha^{\#} \times \phi^{\#})$.
To a $C^{\ast}$-dynamical system $(A,\KR \times \KZ, \alpha \times \phi)$ 
we asssociate a triple
\begin{center}
{\bf
 ($\hat{\A}$, Phillips-Raeburn invariant of $\phi$, 
Dixmier-Douady invariant of $\A$).
}
\end{center}
We argue that there is a well-defined map of triples, $T_{3,2}$, which 
sends a triple
\begin{center}
{\bf
(principal bundle $p:X \to B$,  $b \in H^2(X,\KZ)$, $\delta \in H^3(X,\KZ)$)
}
\end{center}
to another such triple which commutes with the map $T_R$ described above.

We then attempt to give an answer to the question raised in the paragraph 
after Lemma (\ref{LemATMP}) above.
In Sections (\ref{SecClass}-\ref{SecTriple}) we extend the formalism of 
Topological T-duality to triples of the type described above. We 
prove the existence of the map $T_{3,2}.$
Then we study some natural 
properties of the map $T_{3,2}.$ and show that
owing to the topological properties of this map, when 
the Dixmier-Douady invariant of $\A$  is fixed, there is a partition
of $H^2(X,\KZ)$ and $H^2(X^{\#},\KZ)$ into cosets such that the 
T-dual of an automorphism with Phillips-Raeburn invariant in a 
given coset  is an automorphism with Phillips-Raeburn invariant 
in another coset. This is enough to answer the question posed in the paragraph
after Lemma (\ref{LemATMP}) for several spaces.

\section{A model for the $H$-flux \label{SecModel}}
Let $X$ be a manifold which is a model for a spacetime with $H$-flux $H$ as in 
Sec.\ (1) and suppose $X$ is the total space of a circle bundle with 
$X/S^1 \simeq W.$ Let $V_{\alpha}$ be an open cover of $W$ and 
let $U_{\alpha}$ be the lift of the $V_{\alpha}$ to an open 
cover of $X.$ Then, the $H$-flux is the curvature of a smooth $S^1$-equivariant
gerbe on $X$ and $H=dB_{\alpha}$ locally, where $B_{\alpha}$ is 
the gerbe connection form in $U_{\alpha}.$
As discussed in Sec.\ (\ref{SecBr}),
changes in the $B$-field keeping $H$ fixed correspond to acting
on the gerbe on spacetime with a gerbe gauge transformation.
The gauge group of the gerbe is generated by the group of line bundles
on $X$ with connection. ( See Ref.\ \cite{Minasian}, Sec.\ (2) for example.)
That is, if $p:L \to X$ is a line bundle with curvature two-form $F,$
then, under such a gauge transformation, on each patch $U_{\alpha}$ 
we have $B'_{\alpha} = B_{\alpha} + F_{\alpha}$ where 
$F_{\alpha}=F|_{U_{\alpha}}$.
That is, $H=dB_{\alpha}$ locally, and after such a gauge transformation,
$H=dB'_{\alpha}$ with 
$d(B_{\alpha}-B'_{\alpha})=dF_{\alpha}= dF|_{U_{\alpha}}=0$ that is, 
for such gauge transformations, $(B_{\alpha}-B'_{\alpha})$ is closed. 
Note that $F$ is integral and hence so is $F_{\alpha}$. Therefore,
so is $(B_{\alpha}-B'_{\alpha})$. 
Since $X$ possesses a $S^1$-action (it is the total space of
a principal circle bundle with $W=X/S^1$), we require 
$L$ to be $S^1$-equivariant. 

As explained in the previous section, in the $C^{\ast}$-algebraic theory 
of Topological T-duality, the gerbe on spacetime is replaced by
a `topological approximation', a continuous-trace algebra $\A$ with
spectrum $X$ and Dixmier-Douady invariant equal to $H.$ It is well known that
spectrum-preserving automorphisms of $\A$ define a cohomology class 
(the Phillips-Raeburn invariant) in $H^2(X,\KZ).$ 
(See Ref.\ \cite{Phil}, also, see Ref.\ \cite{CKRW}, Lemma (4.4).)  

As described in Sec.\ (\ref{SecBr}) above, a gerbe gauge transformation of the 
type described above naturally defines an automorphism $\phi$ of the associated 
continuous-trace algebra with the Phillips-Raeburn class 
of the automorphism equal to the characteristic class of the line bundle 
$L$ above: See Ref.\ \cite{Phil} Thm.\ (3.8), take $E$ to be the circle bundle
associated to $L,$ $G = S^1$ and the 
automorphism $\phi$ to be the generator of the group $\hat{G} = \KZ.$ 
To study arbitrary changes in the $B$-field, that is, to study a 
general gerbe gauge transformation would probably require the introduction
of a smooth structure and would be difficult to do in the $C^{\ast}$-algebraic
picture of Topological T-duality. However, the theory of integral changes 
of the $B$-field is still quite interesting mathematically,
as the following sections show.

The above automorphism $\phi$ of $\A$ gives a $\KZ$-action 
on $\A.$  However, $\A$ possesses a $\KR$-action already 
which is a lift of the $S^1$-action on $X$ to $\A.$ Now, the gerbe
automorphism is equivariant under the $S^1$-action on $X,$ hence, $L \to X$ is
an equivariant line bundle. Therefore, we require that $\phi$ commute with 
the $\KR$-action on $\A.$ It is natural to ask if we obtain 
a $\KZ \times \KR$-action on $\A.$ The action obtained will, in general, 
depend on the exterior equivalence classes of the $\KR$-action on $\A$ 
and the $\KZ$-action induced by the automorphism $\phi$.  
The exterior equivalence class of $\phi$ is arbitrary,
since we obtained $\phi$ by lifting the characteristic class of $L$
to an spectrum-preserving automorphism of $\A$. Hence,
we need not obtain a unique $\KZ \times \KR$-action on $\A$.
However, this will not affect anything as the following shows.

It will be useful to recall the notion of the Equivariant
Brauer Group\footnote{We use Ref.\ \cite{CKRW} here.}: 
Let $X$ be a second countable, locally compact, Hausdorff
topological space and let $G$ be a second countable, locally compact,
Hausdorff topological group acting on $X$.
Following Ref.\ \cite{CKRW}, let $\mathfrak{B}\mathfrak{r}_{G}(X)$ denote the 
class of pairs $(A,\alpha)$ consisting of continuous-trace algebras $\A$ on 
$X=\hat{\A}$ together with a lift $\alpha$ of the $G$-action on $X$ to $\A.$
We say that the dynamical sytem  $(\A,\alpha)$ 
is equivalent to $(\B,\beta)$ if there
is a Morita equivalence bimodule $_{\A}M_{\B}$ together with a
strongly continuous $G$-action by linear transformations $\phi_s$ 
on $M$ such that for every $s \in G$,
$\alpha_s(\langle x,y \rangle_{\A}) = \langle \phi_s(x),\phi_s(y) 
\rangle_{\A}$,
and 
$\beta_s(\langle x,y \rangle_{\B}) = \langle \phi_s(x),
\phi_s(y) \rangle_{\B}$.
(Recall, the image of the inner product 
$\langle,\rangle_{\A}: M \times M \to \A$
is dense in $\A$ and similarly for $\langle,\rangle_{\B}$.)
It is shown in Ref.\ \cite{CKRW} that 
this is an equivalence relation and that 
the quotient is a {\em group}. This group is termed the
Equivariant Brauer Group $\Br_{G}(X)$ of $X$. The group operation 
is the $C_0(X)$-tensor product of continuous-trace algebras and 
group actions. 

The Equivariant Brauer Group is actually a special
case of the Brauer group of a groupoid applied to the transformation
groupoid $G \times X.$ It is shown in Ref.\ \cite{Kumjian} that the
Brauer group of a groupoid is isomorphic to a groupoid cohomology group. 
Hence the Equivariant Brauer Group is actually an abelian-group-valued 
functor contravariant in $G$ and in $X.$

Let $X$ be a space with a $S^1$-action $\psi$.
Suppose $\A$ was a continuous-trace $C^*$-algebra with spectrum $X$
which possessed an $\KR$-action $\alpha$ which induced the action
$\psi$ on $X=\hat{\A}$. Suppose we had a lift of a class in $H^2(X,\KZ)$ to a 
spectrum-preserving automorphism of $\A$ which commuted with this 
$\KR$-action on $X.$ We would obtain a $\KR \times \KZ$-action
on $\A$ and hence an element of $\Br_{\KR \times \KZ}(X)$. 
By Lemma (\ref{LemATMP}) above, 
there is an induced $\KZ$-action on $\CrPr{\A}{\KR}{\alpha}.$ 
Now, the lift of a class in $H^2(X,\KZ)$ to a $\KR \times \KZ$-action on
$\A$ need not be unique. Hence, the induced action on the crossed product
is not unique. However,we really only care about the restriction of this action 
to the $\KZ$-factor up to exterior equivalence, that is we wish to
determine the Phillips-Raeburn invariant of the induced $\KZ$-action on the 
crossed product $C^{\ast}$-algebra.
We outline a way to approach this problem in this Section. 
In order to quantify this lack of uniqueness we now 
study $\Br_{\KR \times \KZ}(X)$ in more detail. 

In Ref.\ \cite{CKRW}, the authors
show (see Ref.\ \cite{CKRW}, Thm.\ (5.1)) that there is a natural map
$\xi:H^2_M(\KR \times \KZ, C(X,\T)) \to \Br_{\KR \times \KZ}(X),$ where
$H^2_M$ is the Moore cohomology group mentioned in Ref.\ \cite{CKRW}.
It is really only the class of the above $\KR \times \KZ$-action in
$\Br_{\KR \times \KZ}(X)/\im(\xi)$ that we need
to determine the Phillips-Raeburn invariant of the induced $\KZ$-action
on the crossed product $C^{\ast}$-algebra. 
We compute this group in Lemma (\ref{LemRXZCD}) 
below.

Note that elements of $\Br_{\KR \times \KZ}(X)$ would 
consist of a Morita equivalence
bimodule between $\KR \times \KZ$-dynamical systems and an $\KR$-equivariant
automorphism of the module compatible with the $\KZ \times \KR$-action
on the dynamical systems. Thus, by studying $\Br_{\KR \times \KZ}(X)$
we are simply adding more structure to
$\Br_{\KR}(X).$ 

We conjecture that elements of 
this group are a good model for spacetimes with a possibly nonzero $B$-field. 
Given a dynamical system $(\A, \alpha \times \phi, \KR \times \KZ)$
in $\Br_{\KR \times \KZ}(X),$ we may forget the $\KZ$-action to obtain a 
$\KR$-dynamical system and we may forget the $\KR$-action to obtain a 
$\KZ$-dynamical system. Hence we obtain forgetful maps 
$F_1:\Br_{\KR \times \KZ}(X) \to \Br_{\KZ}(X)$ 
and $F:\Br_{\KR \times \KZ}(X) \to \Br_{\KR}(X).$  
We formalize the above discussion in the following 
\begin{definition}
Let $X$ be a locally compact, finite dimensional CW-complex homotopy
equivalent to a finite CW-complex. Let $X$ also be a free
$S^1$-space with $W=X/S^1,$ so that we have a principal $S^1$-bundle
$p:X \to W.$ An element\footnote{ Here $\alpha$ is a lift of the 
$S^1$-action on $X$ to a $\KR$-action on $\A,$ while 
$\phi$ is a commuting spectrum-fixing $\KZ$-action on $\A.$}
$y = [ \A, \alpha \times \phi]$ of
$\Br_{\KR \times \KZ}(X)$ is now defined to be a model for a space
with nonzero $H$-flux or zero $H$-flux and a nonzero integral $B$-field. 
The $H$-flux $H$ is\footnote{Recall that the forgetful map
$\Br_{\KR}(X) \to \Br(X) \simeq H^3(X,\KZ)$ is an isomorphism
\cite{CKRW}.}
$H=F(y) = [\A]$. If $H=0,$ the $B$-field is the unique 
class\footnote{Note that 
$\Br_{\KZ}(X) \simeq H^3(X,\KZ) \oplus H^2(X,\KZ)$.}
in $H^2(X,\KZ)$ which determines $F_1(y)$.
This class is equal to the Phillips-Raeburn class of $\phi$.
\label{DefSetup}
\end{definition}

By Ref.\ \cite{CKRW}, there is a natural filtration of 
$\Br_{\KR \times \KZ}(X)$ given by 
$0 < B_1 < \ker(F) < \Br_{\KR \times \KZ}(X),$
where $B_1$ is the quotient $H^2_M(\KR \times \KZ, C(X,\T))/\im(d_2^{''})$.
We argue below that each step in this filtration corresponds
to one of the gauge fields in the problem.
 
We need the following characterization of the groups and maps in the above
filtration. (Note that $B_1 = \ker(\eta)$ is determined in 
Lemma (\ref{LemRXZCD}) below).
\begin{theorem}
Let $p:X \to W$ be as above.
\begin{enumerate}
\item We have a split short exact sequence
$$
0 \to \ker(F) \to \Br_{\KR \times \KZ}(X) \overset{F}{\to} \Br(X) \cong
H^3(X,\KZ) \to 0
$$
where $F:\Br_{\KR \times \KZ}(X) \to \Br(X)$ is the forgetful map.
\item We have a surjective map $\eta:\ker(F) \to H^2(X,\KZ).$
\item We have a natural isomorphism $H^1_M(\KR, C(X,\T)_0) \simeq C(W,\KR).$
\item The group $H^2_M(\KR \times \KZ, C(X,\T))$ is connected
and there is a natural surjective map 
$q: H^2_M(\KR \times \KZ,C(X,\T)) \to C(W,\T)_0.$
\end{enumerate}
\label{ThmBrauer}
\end{theorem}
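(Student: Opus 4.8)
I would handle the four assertions in turn. Statements~(1) and~(2) follow from the structure theory of equivariant Brauer groups in \cite{CKRW} together with the Phillips--Raeburn classification, while~(3) and~(4) rest on Moore cohomology computations organized around the two short exact sequences of topological $\KR$-modules $0 \to \KZ \to C(X,\KR) \xrightarrow{\exp} C(X,\T)_0 \to 0$ and $0 \to C(X,\T)_0 \to C(X,\T) \to H^1(X,\KZ) \to 0$, where $\KR$ acts on $X$ (hence on every coefficient module) through $\KR \to \KR/\KZ = S^1$ and $\KZ$ acts trivially; I take $X$ and $W$ connected as in Def.\ \ref{DefSetup}, the general case merely replacing the constant module $\KZ$ by locally constant integer functions. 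For~(1) I compose the forgetful maps $\Br_{\KR\times\KZ}(X) \xrightarrow{F'} \Br_{\KR}(X) \xrightarrow{\ \sim\ } \Br(X) \cong H^3(X,\KZ)$, the middle isomorphism being the one recalled in \cite{CKRW}; a section is given by adjoining the trivial $\KZ$-action, $[\A,\alpha] \mapsto [\A, \alpha\times\id]$, which is a homomorphism because the trivial action on a $C_0(X)$-tensor product is the tensor product of trivial actions and equivariant equivalence bimodules extend by the trivial $\KZ$-action, and which is a right inverse of $F'$; since $\Br_{\KR}(X) \to \Br(X)$ is injective, $\ker F = \ker F'$ and the sequence is split exact. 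For~(2), every class in $\ker F$ has vanishing Dixmier--Douady invariant, hence is represented by $(C_0(X,\Cpct), \alpha\times\phi)$ with $\phi$ a spectrum-fixing $\KZ$-action; I define $\eta$ as the composite $\ker F \xrightarrow{F_1} \Br_{\KZ}(X) \cong H^3(X,\KZ)\oplus H^2(X,\KZ) \to H^2(X,\KZ)$ with $F_1$ forgetting the $\KR$-action, which on $\ker F$ records exactly the Phillips--Raeburn obstruction of $\phi$ and, being a composite of Brauer group homomorphisms, is automatically well defined. Surjectivity is Lemma \ref{LemATMP}(1): for $[\lambda]\in H^2(X,\KZ)$ it furnishes a commuting pair $(\beta,\lambda)$ on $C_0(X,\Cpct)$ with $\phi=\lambda$ of Phillips--Raeburn obstruction $[\lambda]$, so $[C_0(X,\Cpct),\beta\times\lambda]\in\ker F$ maps to $[\lambda]$.

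For~(3) and~(4) I first record the inputs $H^n_M(\KR, D)=0$ for $n\geq 1$ and $D$ a finitely generated discrete abelian group (from the sequences $\KZ\to\KR\to\T$ and $\KZ\xrightarrow{\times m}\KZ\to\KZ/m$ and Moore's values of $H^*_M(\KR,\T)$ and $H^*_M(\KR,\KR)$), and $H^n_M(K,V)=0$ for $n\geq 1$, $K$ compact and $V$ a Banach $\KR$-module (averaging over $K$, applicable since $C(X,\KR)$ is Banach for $X$ a finite complex). Then for~(3): a measurable $1$-cocycle $\KR\to C(X,\T)_0$ lifts, after correction by a $\KZ$-valued cochain (the obstruction lies in $H^2_M(\KR,\KZ)=0$), to a $1$-cocycle $\tilde c\colon\KR\to C(X,\KR)$, so $\exp_*\colon H^1_M(\KR,C(X,\KR))\to H^1_M(\KR,C(X,\T)_0)$ is an isomorphism; the Hochschild--Serre sequence for $\KZ\hookrightarrow\KR\twoheadrightarrow S^1$ collapses, because $H^{\geq 1}_M(S^1,-)$ vanishes on Banach modules and $\KZ$ has cohomological dimension one, and identifies $H^1_M(\KR,C(X,\KR))$ with $C(X,\KR)^{S^1}=C(W,\KR)$ naturally, the isomorphism being $[\tilde c]\mapsto\tilde c(1)$ (the cocycle identity forces $\tilde c(1)$ into the $\KR$-invariants).

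For~(4): the Hochschild--Serre sequence for $\KR\hookrightarrow\KR\times\KZ\twoheadrightarrow\KZ$ collapses for the same reason to $0\to H^1_M(\KR,C(X,\T))\to H^2_M(\KR\times\KZ,C(X,\T))\to H^2_M(\KR,C(X,\T))\to 0$. The last term vanishes: $H^2_M(\KR,C(X,\KR))=0$ by the same Hochschild--Serre/averaging argument one degree up, the discrete modules contribute nothing in positive degrees, so $H^2_M(\KR,C(X,\T)_0)=0$ and then $H^2_M(\KR,C(X,\T))=0$. Hence $H^2_M(\KR\times\KZ,C(X,\T))\cong H^1_M(\KR,C(X,\T))$, which by the long exact sequence of the second displayed sequence and part~(3) equals $C(W,\KR)/\im\delta$, where $\delta\colon H^1(X,\KZ)\to H^1_M(\KR,C(X,\T)_0)\cong C(W,\KR)$ is the connecting map. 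Tracing $\delta$ through the lift construction --- a representative $f\in C(X,\T)$ of $\xi$ yields the cocycle $s\mapsto(s\cdot f)f^{-1}$, whose lift $\tilde c$ has $\exp\tilde c(1)=1$ and $\tilde c(1)\in C(X,\KR)^{S^1}$ --- shows $\im\delta\subseteq C(W,\KZ)$, a discrete subgroup. Therefore $H^2_M(\KR\times\KZ,C(X,\T))$ is a real vector space modulo a discrete subgroup, hence connected, and the evident quotient map onto $C(W,\KR)/C(W,\KZ)=C(W,\T)_0$ is the required natural surjection $q$.

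The step I expect to be the main obstacle is assembling the Moore cohomology inputs --- the vanishing of $H^n_M(\KR,-)$ on discrete modules for $n\geq 1$, the vanishing of $H^{\geq 1}_M(S^1,-)$ on Banach modules, and the resulting degeneration of the two Hochschild--Serre spectral sequences --- together with checking that every identification, in particular the connecting map $\delta$, is natural in the required sense. The finiteness hypothesis on $X$ in Def.\ \ref{DefSetup} is precisely what keeps the coefficient modules regular enough for these tools to apply; granting them, the rest of the argument is formal.
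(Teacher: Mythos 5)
Your proposal is correct and reaches all four conclusions, but in parts (2)--(4) it travels by genuinely different technical routes than the paper. For (1) the two arguments coincide (forgetful map to $\Br_{\KR}(X)\cong H^3(X,\KZ)$, split by adjoining the trivial $\KZ$-action). For (2) the paper invokes the map $\eta$ of Thm.\ (5.1) of \cite{CKRW} landing in $H^1_M(\KR\times\KZ,H^2(X,\KZ))$, identifies that group with $H^2(X,\KZ)$ via $B(\KR\times\KZ)\simeq S^1$, and gets surjectivity from the vanishing of $H^3_M(\KR\times\KZ,C(X,\T))$; you instead build $\eta$ concretely as the composite $\ker(F)\to\Br_{\KZ}(X)\to H^2(X,\KZ)$ recording the Phillips--Raeburn obstruction and get surjectivity from Lemma \ref{LemATMP}(1), which is legitimate (that lemma is established independently) and arguably more transparent, at the cost of needing to know afterwards that your $\eta$ agrees with the \cite{CKRW} map when it is reused in Lemma \ref{LemRXZCD}. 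For (3) the paper passes through $H^1_M(\KR,C(X,\KR))\simeq H^1_{\mathrm{Lie}}(\KR,C(X,\KR)_\infty)$ and the averaging map $f\mapsto\int_{S^1}\phi_t\circ f\,dt$; you instead run the Hochschild--Serre sequence for $\KZ\subset\KR\twoheadrightarrow S^1$ and evaluate cocycles at $1$, which avoids the smooth-vector machinery of \cite{MRCMP} but leans on the vanishing of $H^{\geq 1}_M(S^1,-)$ on Banach modules and on the existence of the Moore-cohomology Lyndon--Hochschild--Serre sequence for Polish modules with Borel sections --- exactly the foundational points the paper delegates to \cite{MRCMP} and \cite{ATMP}, so you have traded one black box for another of comparable size. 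For (4) both arguments reduce to $H^1_M(\KR,C(X,\T))$ and the coefficient sequence $0\to C(X,\T)_0\to C(X,\T)\to H^1(X,\KZ)\to 0$; a point in your favour is that you explicitly verify $H^2_M(\KR,C(X,\T))=0$, which is needed for the degeneration and which the paper handles only implicitly through its citation of the spectral sequence of \cite{ATMP}; your tracing of the connecting map into $C(W,\KZ)$ (the fiberwise winding number, i.e.\ $p_!$) matches the paper's identification of its image inside $H^0(W,\KZ)$, and both yield the same surjection $q$ onto $C(W,\KR)/C(W,\KZ)\cong C(W,\T)_0$.
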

\begin{proof}
\begin{enumerate}
\item We have a forgetful homomorphism 
$F_1:\Br_{\KR \times \KZ}(X) \to \Br_{\KR}(X),$ where 
$F_1:[\A,\alpha \times \phi] \to [\A,\alpha]$. This map is
obviously surjective, since we have a section 
$s:[\A,\alpha] \to [\A, \alpha \times \operatorname{id}]$.

Since $\Br_{\KR}(X) = H^3(X,\KZ)$ (by Sec.\ (6.1) of Ref.\ \cite{CKRW}),
the kernel of $F_1$ consists of Morita equivalence classes of
dynamical systems $[\A,\alpha \times \phi]$ such that $\delta(\A) = 0$.
Thus, it actually consists of the group $\ker(F)$, where 
$F:\Br_{\KR \times \KZ}(X) \to \Br(X)$ is the map forgetting the
group action.

\item By Thm.\ (5.1) of Ref.\ \cite{CKRW}, we have a homomorphism
$\eta:\ker(F) \to H^1_M(\KR \times \KZ, H^2(X,\KZ))$. Now, by
Thm.\ (4.2) of Ref.\ \cite{MRCMP}, we have that 
$H^1_M(\KR \times \KZ,M) \simeq H^1(B(\KR \times \KZ),M)$ for
any discrete $\KR \times \KZ$ module $M$. Also, 
$B(\KR \times \KZ) \simeq S^1$, so 
$H^1_M(\KR \times \KZ, H^2(X,\KZ)) \simeq H^2(X,\KZ)$.

By Thm.\ (5.1) item (2) of \cite{CKRW}, the image of $\eta$ has
range which is all of $H^2(X,\KZ)$ since 
$H^3_M(\KR \times \KZ, C(X,S^1)) = 0$ by Thm.\ (3.1) of Ref.\ \cite{ATMP}. 
Hence we have a surjective homomorphism $\eta: \ker(F) \to H^2(X,\KZ)$.

\item We have the following short exact sequence of $\KR$-modules
$$
0 \to H^0(X,\KZ) \to C(X,\KR) \to C(X,\T)_0 \to 0
$$
From the associated long exact sequence for $H^{\ast}_M$, we find
that $H^i_M(\KR, H^0(X,\KZ)) \simeq 0$, $i = 1,2$ by Cor.\ (4.3)
of Ref.\ \cite{MRCMP}; hence 
$H^1_M(\KR, C(X,\T)_0) \simeq H^1_M(\KR, C(X,\KR))$.
By Thms.\ (4.5,4.6,4.7) of Ref.\ \cite{MRCMP}, 
$H^1_M(\KR,C(X,\KR)) \simeq H^1_{\text{Lie}}(\KR,C(X,\KR)_{\infty})$.
Here $C(X,\KR)_{\infty}$ are the $C^{\infty}$-vectors for the
$\KR$-action on $C(X,\KR)$ and so are the functions which are
smooth along the $S^1$-orbits.

The complex computing the Lie algebra cohomology of $\KR$ shows that
this group is exactly the functions in $C(X,\KR)_{\infty}$ modulo
derivatives of functions in $C(X,\KR)_{\infty}$ by the generator
of the $\KR$-action.

This group is isomorphic to $C(W,\KR)$ via the `averaging' map
$f \to \int_{S^1} \phi_t \circ f dt$ where $\phi_t \circ f$ is
$f$ shifted by the $S^1$-action on $X.$

\item  We use the spectral sequence calculation of Ref.\ \cite{ATMP},
Thm.\ (3.1) to
note that this group is isomorphic to $H^1_M(\KR,H^1_M(\KZ,C(X,\T))).$
Since $\KZ$ is discrete and acts trivially on $C(X,\T),$ we
have $H^1_M(\KZ,C(X,\T)) \simeq H^1(\KZ,C(X,\T)) \simeq C(X,\T).$
Hence we need to calculate $H^1_M(\KR,C(X,\T)).$

We have the following short exact sequence of $\KR$-modules
$$
0 \to C(X,\T)_0 \to C(X,\T) \to H^1(X,\KZ) \to 0
$$
where $C(X,\T)_0$ is the connected component of $C(X,\T)$ containing
the constant maps.

This gives us a long exact sequence
\begin{gather}
H^0_M(\KR,C(X,\T)) \to H^0_M(\KR,H^1(X,\KZ)) \to H^1_M(\KR,C(X,\T)_0) 
\to \nonumber \\
H^1_M(\KR,C(X,\T)) \to H^1_M(\KR,H^1(X,\KZ)) \to \ldots \nonumber \\
\end{gather}

Also, by Cor.\ (4.3) of Ref.\ \cite{MRCMP}, we have that 
$H^1_M(\KR,H^1(X,\KZ)) \simeq 0.$ Again, by Cor.\ (4.3) of
Ref.\ \cite{MRCMP}, we find that $H^0_M(\KR,H^1(X,\KZ)) \simeq H^1(X,\KZ)$
(since $B\KR$ is contractible). Also $H^0_M(\KR,C(X,\T))$ consists
of the $\KR$-invariant functions in $C(X,\T)$ and hence is
naturally isomorphic to $C(W,\T).$

Hence we find an exact sequence
\begin{gather}
C(W,\T) \to H^1(X,\KZ) \to H^1_M(\KR,C(X,\T)_0) \to H^1_M(\KR,C(X,\T)) \to 0.
\label{EqH2MRXZ}
\end{gather}

The map $C(W,\T) \to H^1(X,\KZ)$ is the composite
$C(W,\T) \to H^1(W,\KZ) \stackrel{p^{\ast}}{\to} H^1(X,\KZ).$
Its cokernel is $H^1(X,\KZ)/p^{\ast}(H^1(W,\KZ))$ which is the
image of $p_!:H^1(X,\KZ) \to H^0(W,\KZ)$ by the Gysin sequence.
(The image can only be $0$ or $\KZ$ if $X$ is connected.) Using
the isomorphism mentioned in the previous item of this lemma,
we see that we need to find the connecting map 
$\im(p_!) \to H^1_M(\KR,C(X,\T)_0) \simeq C(W,\KR)$. This map
sends any class in $H^0(W,\KZ)$ to a constant $\KZ$-valued function
on $W.$

The above exact sequence now becomes
\begin{gather}
0 \to \im(p_!) \to C(W,\KR) \to H^1_M(\KR,C(X,\T)) \to 0.
\end{gather}

So $H^1_M(\KR,C(X,\T))$ is isomorphic to the quotient of 
$C(W,\KR)$ by $\im(p_!).$ It surjects onto the quotient of
$C(W,\KR)$ by all of $H^0(W,\KZ)$ which is isomorphic to 
$C(W,\T)_0.$
\end{enumerate}
\end{proof}

\begin{lemma}
We have a commutative diagram with exact rows and columns
\begin{equation}
\CD
 @. 0 @. @. @. \\
@. @VVV @. @. @. \\
@. H^2_M(\KR \times \KZ, C(X,\T))/\im(d_2^{''}) @. @. @. \\
@. @VVV @. @. @. \\
0 @>>> \ker(F) @>>> \Br_{\KR \times \KZ}(X) @>F>> \Br_{\KR}(X) @>>> 0 \\
@. @V{\eta}VV @. @. @. \\
@. H^2(X,\KZ) @. @. @. \\
@. @VVV @. @. @. \\
@. 0 @. @. @. \\
\endCD
\nonumber
\end{equation}
where $d_2^{''}$ is defined in Ref.\ \cite{CKRW}, Thm.\ (5.1).
The group $\ker(\eta) = H^2_M(\KR \times \KZ, C(X,\T))/\im(d_2^{''})$ is 
isomorphic to a quotient of $C(W,\KR).$ 
\label{LemRXZCD}
\end{lemma}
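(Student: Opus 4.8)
The plan is to assemble the commutative diagram from pieces already established in Theorem~\ref{ThmBrauer} and in Ref.~\cite{CKRW}, and then identify the group $\ker(\eta)$ explicitly. First I would recall from Ref.~\cite{CKRW}, Thm.~(5.1), the exact sequence organizing $\Br_{\KR\times\KZ}(X)$: there is a homomorphism $\xi$ from $H^2_M(\KR\times\KZ,C(X,\T))$ into $\Br_{\KR\times\KZ}(X)$ whose image lands in $\ker(F)$, and the obstruction $d_2^{''}$ measures the failure of $\xi$ to be defined on the whole Moore cohomology group; after quotienting by $\im(d_2^{''})$ one gets an injection $H^2_M(\KR\times\KZ,C(X,\T))/\im(d_2^{''}) \hookrightarrow \ker(F)$, with cokernel computed by the next stage of the spectral sequence, namely $H^1_M(\KR\times\KZ,H^2(X,\KZ))$. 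By Theorem~\ref{ThmBrauer}, part~(2), this last group is naturally $H^2(X,\KZ)$ and the induced map $\eta:\ker(F)\to H^2(X,\KZ)$ is surjective. So the left-hand vertical column
$$
0 \to H^2_M(\KR\times\KZ,C(X,\T))/\im(d_2^{''}) \to \ker(F) \overset{\eta}{\to} H^2(X,\KZ) \to 0
$$
is exact. This gives the vertical column of the diagram; the horizontal row is Theorem~\ref{ThmBrauer}, part~(1). Commutativity of the diagram (really just that the column sits inside $\ker(F)$, which sits inside $\Br_{\KR\times\KZ}(X)$) is immediate from the definitions of the maps.

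Next I would identify $\ker(\eta)$. From the column above, $\ker(\eta) = H^2_M(\KR\times\KZ,C(X,\T))/\im(d_2^{''})$, which is the definition of the group $B_1$ mentioned before the statement. To see that it is a quotient of $C(W,\KR)$, I would use the spectral sequence calculation of Ref.~\cite{ATMP}, Thm.~(3.1), exactly as in the proof of Theorem~\ref{ThmBrauer}, part~(4): it collapses to give $H^2_M(\KR\times\KZ,C(X,\T)) \simeq H^1_M(\KR, H^1_M(\KZ,C(X,\T))) \simeq H^1_M(\KR,C(X,\T))$, since $\KZ$ is discrete and acts trivially. Then the chain of identifications already carried out in Theorem~\ref{ThmBrauer}, part~(4), shows $H^1_M(\KR,C(X,\T))$ is the quotient of $C(W,\KR)$ by $\im(p_!)$. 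Finally I would observe that passing to the further quotient by $\im(d_2^{''})$ keeps us in the category of quotients of $C(W,\KR)$: $\im(d_2^{''})$ is a subgroup of $H^2_M(\KR\times\KZ,C(X,\T))$, so the quotient by it is again a quotient of $C(W,\KR)$ (a quotient of a quotient). This establishes the last sentence of the lemma.

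The main obstacle, I expect, is bookkeeping around the map $d_2^{''}$ and making sure the column is genuinely exact rather than merely a complex — that is, verifying that the image of $\xi$ (after quotienting by $\im(d_2^{''})$) is \emph{exactly} $\ker(\eta)$ and not a proper subgroup. This is where one must invoke the precise statement of Ref.~\cite{CKRW}, Thm.~(5.1), item~(2), together with the vanishing $H^3_M(\KR\times\KZ,C(X,\T)) = 0$ from Ref.~\cite{ATMP}, Thm.~(3.1), which is what forces surjectivity of $\eta$ and closes up the sequence on the right; the same vanishing is what guarantees there is no further obstruction term sitting between $\im(\xi)$ and $\ker(\eta)$. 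A secondary, more routine point is checking that the identification $H^1_M(\KR\times\KZ,H^2(X,\KZ)) \simeq H^2(X,\KZ)$ used for $\eta$ is compatible with the one used implicitly in stating the lemma; this follows from $B(\KR\times\KZ)\simeq S^1$ together with Thm.~(4.2) of Ref.~\cite{MRCMP}, as in the proof of Theorem~\ref{ThmBrauer}. Everything else is diagram-chasing of the kind already rehearsed above.
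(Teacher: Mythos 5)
Your proposal is correct and follows essentially the same route as the paper: both assemble the row from Theorem~\ref{ThmBrauer}(1), obtain the exact column by citing Thm.~(5.1) of Ref.~\cite{CKRW} (so that $\ker(\eta)=\im(\xi)\simeq H^2_M(\KR\times\KZ,C(X,\T))/\im(d_2^{''})$, with surjectivity of $\eta$ resting on the vanishing $H^3_M(\KR\times\KZ,C(X,\T))=0$), and then identify $H^2_M(\KR\times\KZ,C(X,\T))\simeq H^1_M(\KR,C(X,\T))\simeq C(W,\KR)/\im(p_!)$ via the chain of isomorphisms already established in Theorem~\ref{ThmBrauer}(4). Your closing observation that the further quotient by $\im(d_2^{''})$ remains a quotient of $C(W,\KR)$ is exactly the step the paper's proof takes implicitly.
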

\begin{proof}
The vertical and horizontal short exact sequences above are of the
form $0 \to B_i \to B_{i+1} \to B_{i+1}/B_i \to 0$ where
the $B_i$ are the groups in the filtration of $\Br_{\KR \times \KZ}(W)$ 
described in the unnumbered Theorem on page (153) of Ref.\  \cite{CKRW}.

The horizontal short exact sequence is the sequence of 
part (1) of Thm.\ (\ref{ThmBrauer}). The map $\eta$ is the map
of part (2) of Thm.\ (\ref{ThmBrauer}). 
We are interested in the group $\ker(\eta).$ 
This is the collection of $C^{\ast}$-dynamical systems 
of the form $(C_0(X,\Cpct), \alpha)$ with $\alpha$ inner. (Note
that the Mackey obstruction of $\alpha$ may be nonzero.)
By Part (2) of Thm.\ (\ref{ThmBrauer}), above, the map $\eta$ in Thm.\ (5.1) of
Ref.\ \cite{CKRW} is the map $\eta:\ker(F) \to H^2(X,\KZ)$ above.
By part (3) of Thm.\ (5.1) in Ref.\ \cite{CKRW}, 
we have natural homomorphisms
$\xi:H^2_M(\KR \times \KZ,C(X,\T)) \to \Br_{\KR \times \KZ}(X),$ 
and  $d_2^{''}:H^2(X,\KZ) \to H^2_M(\KR \times \KZ, C(X,\T)).$
Also, from that theorem it follows that $\ker(\eta)$ is identical to 
$\im(\xi) \subseteq \Br_{\KR \times \KZ}(X)$ 
and\footnote{Note that the translation action on $X$ acts trivially
on $H^2(X,\KZ)$ and that the $\KR \times \KZ$-action covers this translation
action.} we have the identifications
$\im(\xi) \simeq H^2(\KR \times \KZ, C(T,\T))/\ker(\xi),$
$\ker(\xi) = \im(d_2^{''}).$
Hence, it follows that 
$\ker(\eta) = H^2_M( \KR \times \KZ, C(X, \T))/\im(d_2^{''}).$

We need to calculate this group. By the previous theorem
(in particular, Eq.\ (\ref{EqH2MRXZ}) above), $H^2_M(\KR \times \KZ,C(X,\T))$
is isomorphic to $H^1_M(\KR, C(X,\T))$ which is isomorphic to
$C(W,\KR)/\im(p_!).$ Therefore, $B_1$ should be isomorphic to
a quotient of $C(W,\KR).$ 

%
%

%
%

The maps $F,\eta,q$ were defined in the previous lemma.
\end{proof}

%

We now make the following dictionary
\begin{itemize}
\item $y \in \Br_{\KR \times \KZ}(X),$ $y$ not in $\ker(F)$ $\Leftrightarrow$
Space $X$ with $H \neq 0.$ Here, $H=F(y).$
\item Element $y \in \ker(F) \subseteq \Br_{\KR \times \KZ},$
$y$ not in $H^2_M(\KR \times \KZ, C(X,\T))$ $\Leftrightarrow$
Space $W$ with $H=0, B \neq 0.$ Here, $B = \eta(y).$
\item $y \in H^2_M(\KR \times \KZ, C(X,\T))/\im(d_2^{''}) \subseteq 
\Br_{\KR \times \KZ}(X)$ $\Leftrightarrow$ Space $X$ with $H=0,B=0,A \neq 0.$
We obtain an element of $C(W,\T),$
however, this may instead be viewed as a collection of elements 
in $C(X,\KR)$ which are constant on the $S^1$-orbits of $X.$ 
Any two of these elements differ by an element 
of $C(X,\KZ) \simeq H^0(X,\KZ).$ 

This should be compared with the allowed gauge transformations of the
$B$-field in string theory. 

\item $C_0(X,\Cpct)$ with the lift of the $\KR$-action and the trivial
$\KZ$-action $\Leftrightarrow$ Space $X$ with $H=0,B=0, A= 0$.
\end{itemize}

Note that the last item above is exactly the $C^{\ast}$-dynamical system
assigned to a space $X$ with zero $H$-flux in Ref.\ \cite{MRCMP}.

Suppose we had a string background which was a principal bundle
$X$ with $X/S^1 \simeq W$ together with a continuous-trace algebra $\A$ with
spectrum $X$ satisfying the conditions of Def.\ (\ref{DefSetup}) above.
By Def.\ (\ref{DefSetup}) we would obtain an element 
$y = [\A,\alpha \times \phi,\KR \times \KZ]$ 
in $\Br_{\KR \times \KZ}(X)$ (see Ref.\ \cite{CKRW}).
We remarked earlier that $\A$ should be viewed as 
a `topological approximation' to the gerbe on spacetime.
If $y \in \Br_{\KR \times \KZ}(X),$ we have $[H]=F(y).$
Then $F^{-1}([H])$ is the coset $y \circ \ker(F)$ and we note that
we may change $y$ by an element $x$ of $\ker(F)$ without changing the
$H$-flux.

That is, by definition of the
Brauer group (see Ref.\ \cite{CKRW} Prop. (3.3)), corresponding to the 
element $yx \in \Br_{\KR \times \KZ}(X),$ we obtain a new 
$C^{\ast}$-dynamical system 
$$[\A, \alpha \times \psi] \circ [\A,\alpha \times \phi] \simeq 
[\A \otimes_{C_0(X)} \A, (\alpha \times \psi)\circ(\alpha 
\times \phi)] \simeq [\A,\alpha \times (\psi \circ \phi )].$$
Here, the $\KR$-action obtained from the composite dynamical system 
$[\A, (\alpha \times (\phi \circ \psi)]$ is still the lift of the
circle action on $X,$ but the Phillips-Raeburn invariant of the $\KZ$-action 
has changed, it has shifted by $\eta(x).$ 

Given the above, if we act on the gerbe on spacetime by a gerbe 
gauge transformation $x,$ the Phillips-Raeburn invariant of 
the $\phi$ factor of $y=[\A, \alpha \times \phi]$ should be shifted by 
$\eta(x).$

We may view this action as affecting the $H$-flux on spacetime by shifting
the $B$-field, that is, if $H=dB$ locally, we act by a gerbe automorphism
to obtain $H=dB'$ locally.  As argued at the beginning of this section,
we are restricting ourselves to large gauge transformations of the $B$-field
so $(B-B')$ is actually a global quantity, in fact, it is a closed, integral
two-form on $F$ on $X.$ 

Hence, we only allow shifts of $B$ by
elements of $H^2(X,\KZ) \stackrel{i}{\hookrightarrow} H^2(X,\KR),$ 
where $i$ is the canonical inclusion.
Also, physically, $i\circ \eta(x)$ can only equal $[(B'-B)].$  
 
Similarly, if $y \in \ker(F),$ then $H=0, B=\eta(y)$ and
changing $y$ by an element $z$ of $B_1=H^2_M(\KR \times \KZ,C(X,\T))$ doesn't
change $B$ but might correspond to making a change in the $A$-field,
the gauge field of the $B$-field. 

We have argued here that it is natural to associate the group 
$\Br_{\KR \times \KZ}(X)$ to a spacetime $X$ with $H$-flux 
since it captures properties
of the $H$-flux on that spacetime. The assignment 
$X \to \Br_{\KR \times \KZ}(X)$ is a functor $\R$
on a certain category defined below (see paragraph before 
Lemma \ref{LemBrFilt} below). 

We now construct the functor $\R$ above. We also define several other
functors of interest which we will need later. 
In the rest of this section we will study these functors in more detail. 
For circle bundles, we obtain the functor of Bunke et al 
(see Ref.\ \cite{Bunke}) from the $C^{\ast}$-algebraic theory by
using a construction based on the Equivariant Brauer Group 
(before Lemma (\ref{LemTR}) below). We also explain how these constructions help
answer the question raised in the paragraph after Lemma (\ref{LemATMP})
above.

As was discussed before Def.\ (2.1) above,
the Equivariant Brauer Group of $X$ is a special case of the Brauer
Group of a groupoid applied to the transformation groupoid 
$G \times X.$ It is shown in Ref.\ \cite{Kumjian} 
that the Brauer Group of a groupoid is isomorphic to 
a groupoid cohomology group. Hence, the Equivariant Brauer Group is 
an abelian-group-valued functor contravariant in $G$ and $X.$ 
The following lemma is then obvious:
\begin{lemma}
Let $Y \to Z$ be a principal bundle of $CW$-complexes.
Let $f: W \to Z$ be a continuous map. Let $X \to W$ be the pullback
bundle and $\phi:X \to Y$ be the map induced by pullback. 
Consider $\Br_{\KR \times \KZ}(X)$ as described above.
Then, pullback of $C^{\ast}$-algebras induces the following maps
\leavevmode
\begin{enumerate}
\item $\lambda:\Br_{\KR}(Y) \to \Br_{\KR}(X)$. 
\item $\kappa:\Br_{\KR \times \KZ}(Y) \to \Br_{\KR \times \KZ}(X)$.
\end{enumerate}
\label{LemBrRRZ}
\end{lemma}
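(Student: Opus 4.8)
The plan is to realize both maps as pullback of $C^{\ast}$-dynamical systems along $\phi$ and then to invoke the contravariant functoriality of the equivariant Brauer group in the space variable, which (as recalled just before the statement) follows from the identification of $\Br_G(X)$ with a groupoid cohomology group of the transformation groupoid $G \times X$ in Ref.\ \cite{Kumjian}. The one thing that genuinely needs to be checked is that $\phi$ is equivariant for the relevant group actions. By construction $X = W \times_Z Y$ and $\phi\colon X \to Y$ is the projection onto the second factor; the free $S^1$-action on $Y$ lifts to a free $S^1$-action on $X$ making $X \to W$ a principal circle bundle, and with respect to these actions $\phi(w,y\cdot t) = \phi(w,y)\cdot t$, so $\phi$ is $S^1$-equivariant. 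Composing with $\KR \to S^1$ makes $\phi$ an $\KR$-equivariant map, and letting $\KZ$ act trivially on both $X$ and $Y$ it is $(\KR \times \KZ)$-equivariant as well.

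For part (1) I would define $\lambda$ on representatives by $\lambda[\A,\alpha] = [\phi^{\ast}\A,\phi^{\ast}\alpha]$, where $\phi^{\ast}\A = C_0(X)\otimes_{C_0(Y)}\A$ is the balanced tensor product over the homomorphism $\phi^{\ast}$ making $C_0(X)$ a $C_0(Y)$-algebra, and $\phi^{\ast}\alpha$ is obtained by tensoring $\alpha$ with the $\KR$-action on $C_0(X)$ coming from the $S^1$-action on $X$; equivariance of $\phi$ is precisely what makes the two $C_0(Y)$-module structures agree, so that the tensor product and its action are well defined. Standard Dixmier--Douady theory shows that $\phi^{\ast}\A$ is again a continuous-trace algebra with spectrum $X$ (indeed with Dixmier--Douady class $\phi^{\ast}\delta(\A)$) and that $\phi^{\ast}\alpha$ is a lift of the $\KR$-action on $X$, so $[\phi^{\ast}\A,\phi^{\ast}\alpha]\in\Br_{\KR}(X)$. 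If $(\A,\alpha)$ is equivalent to $(\B,\beta)$ via an equivariant Morita equivalence bimodule $_{\A}M_{\B}$, then $\phi^{\ast}M = C_0(X)\otimes_{C_0(Y)}M$ is an equivariant $(\phi^{\ast}\A,\phi^{\ast}\B)$-bimodule implementing the equivalence, so $\lambda$ descends to Brauer classes; and since pullback commutes with the $C_0$-balanced tensor product, $\phi^{\ast}(\A\otimes_{C_0(Y)}\B)\cong\phi^{\ast}\A\otimes_{C_0(X)}\phi^{\ast}\B$ compatibly with the actions, so $\lambda$ is a group homomorphism.

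Part (2) is the identical argument with $\KR\times\KZ$ in place of $\KR$. The extra datum is a spectrum-fixing $\KZ$-action $\phi$ on $\A$ commuting with $\alpha$; its pullback $\phi\otimes\id_{C_0(X)}$ fixes the $C_0(X)$-module structure, hence is again spectrum fixing, and it commutes with $\phi^{\ast}\alpha$ because $\phi$ commutes with $\alpha$, so $[\phi^{\ast}\A,\phi^{\ast}(\alpha\times\phi)]\in\Br_{\KR\times\KZ}(X)$; well-definedness on equivalence classes and the homomorphism property then follow exactly as in part (1). The main obstacle in the whole argument is really just bookkeeping: verifying that pullback along $\phi$ preserves the continuous-trace property, preserves the spectrum, sends lifts of the $\KR$- (resp.\ $\KR\times\KZ$-) action to lifts, and is monoidal with respect to the $C_0$-balanced tensor product. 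All of this is routine given the equivariant Brauer group machinery of Ref.\ \cite{CKRW} together with the groupoid-cohomology picture of Ref.\ \cite{Kumjian}, which is precisely why the statement can be regarded as obvious.
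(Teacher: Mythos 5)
Your proposal is correct and follows the same route the paper takes: the paper simply declares the lemma obvious from the contravariant functoriality of the equivariant Brauer group (via the groupoid Brauer group of Ref.\ \cite{Kumjian}), and your argument is just the explicit unpacking of that functoriality as pullback of $C^{\ast}$-dynamical systems by the balanced tensor product, with the equivariance of the bundle map checked. The only cosmetic point is the clash of notation between the bundle map $\phi$ and the spectrum-fixing $\KZ$-action $\phi$; otherwise nothing is missing.
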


Consider the category
of principal $S^1$-bundles over $CW$-complexes with morphisms pullback maps.
Let $X \to W$ be a principal $S^1$-bundle over a $CW$-complex
$W$. Let $(\A,\alpha,\KR \times \KZ)$ be a $\KR \times \KZ$
$C^{\ast}$-dynamical system with spectrum $X$ with $\alpha$ a lift
 of the $S^1$-action on $X$ to a $\KR \times \KZ$ action on $\A$ as 
described in the previous paragraph. If $Y \to Z$ is another such
principal $S^1$-bundle, then $(\A,\alpha,\KR \times \KZ)$ pulls back 
to a $C^{\ast}$-dynamical system on $Y$. This induces a pullback morphism
$\phi:\Br_{\KR \times \KZ}(X) \to \Br_{\KR \times \KZ}(Y)$.
Under pullback, $\Br_{\KR \times \KZ}$ gives an abelian-group valued 
functor on the category of principal $S^1$-bundles of $CW$-complexes
with arrows pullback squares of bundles:
Given a pullback square 
\begin{gather}
\begin{CD}
Y @>\tilde{f}>> X\\
@VVV  @VVV \\
Z @>f>> W
\end{CD}
\end{gather}
the functor assigns $\Br_{\KR \times \KZ}(X)$ to the
any object $P \to W$ and the pullback map $\phi$ to the above
pullback square.

\begin{lemma}
Let $X,Y,Z,W$ be as described in the previous paragraph. Let $\phi$
be as above. The map $\phi$ preserves the groups in the natural filtration on 
$\Br_{\KR \times \KZ}(X)$ in Ref.\ \cite{CKRW}. 
\label{LemBrFilt}
\end{lemma}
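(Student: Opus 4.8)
The plan is to show that the pullback map $\phi$ is compatible with each of the three layers of the filtration
$$ 0 < B_1 < \ker(F) < \Br_{\KR \times \KZ}(X), $$
by tracing through the naturality of the Crocker--Kumjian--Raeburn--Williams (CKRW) machinery used to build that filtration. Concretely, I would argue that $\phi$ carries $\ker(F_X)$ into $\ker(F_Y)$ and $B_1^X$ into $B_1^Y$, where the superscripts record the ambient space; once both of these containments are established the statement follows, since the filtration is by definition these two subgroups together with the whole group.

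First I would treat the outermost step. The forgetful map $F:\Br_{\KR\times\KZ}(X)\to\Br(X)\cong H^3(X,\KZ)$ of Theorem \ref{ThmBrauer}(1) is visibly natural: forgetting the $\KR\times\KZ$-action commutes with pulling back a $C^{\ast}$-dynamical system, and on Dixmier--Douady invariants the pullback map is the usual $\tilde f^{\ast}:H^3(X,\KZ)\to H^3(Y,\KZ)$ induced by $\tilde f:Y\to X$ in the pullback square. Hence $F_Y\circ\phi = \tilde f^{\ast}\circ F_X$, which immediately gives $\phi(\ker F_X)\subseteq\ker F_Y$. This is essentially Lemma \ref{LemBrRRZ} combined with the description of $\ker(F)$ in Theorem \ref{ThmBrauer}(1).

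Next I would handle the innermost step, $B_1 = \operatorname{im}(\xi) = H^2_M(\KR\times\KZ,C(X,\T))/\operatorname{im}(d_2'')$. Here the key point is the naturality of the map $\xi$ from CKRW Theorem (5.1): a class in $H^2_M(\KR\times\KZ,C(X,\T))$ represented by a Moore cocycle pulls back under $\tilde f$ to a cocycle in $H^2_M(\KR\times\KZ,C(Y,\T))$ (the coefficient map $C(X,\T)\to C(Y,\T)$ is induced by $\tilde f$, and it is $\KR\times\KZ$-equivariant because $\tilde f$ intertwines the circle actions), and the assignment $\xi$ of a stabilized dynamical system to such a cocycle is manifestly compatible with this pullback. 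Likewise $d_2''$ is natural in the space. Therefore $\phi$ restricts to a map $B_1^X\to B_1^Y$, and by passing to subquotients one also gets that $\phi$ is compatible with the map $\eta:\ker(F)\to H^2(X,\KZ)$ of Theorem \ref{ThmBrauer}(2), in the sense that $\eta_Y\circ\phi = \tilde f^{\ast}\circ\eta_X$ on $\ker(F_X)$. Assembling these three compatibilities shows $\phi$ preserves every group in the filtration.

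The main obstacle I anticipate is not conceptual but bookkeeping: one must verify that all the auxiliary isomorphisms appearing in Theorem \ref{ThmBrauer} and Lemma \ref{LemRXZCD} (for instance $H^1_M(\KR,C(X,\T)_0)\simeq C(W,\KR)$ via the averaging map, and $B(\KR\times\KZ)\simeq S^1$) are themselves natural with respect to the pullback square $Y\to X$, $Z\to W$, so that the identifications on $X$ and on $Y$ are genuinely compatible. The averaging isomorphism is natural because $\tilde f$ commutes with the $S^1$-actions, and the descent $C(X,\KR)\to C(W,\KR)$ is natural in $(f,\tilde f)$; one should just check that no step in the spectral-sequence arguments of \cite{ATMP} and \cite{CKRW} breaks this. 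Since every ingredient is ultimately a functor of the transformation groupoid $(\KR\times\KZ)\times X$ (recall the Brauer group is groupoid cohomology, by \cite{Kumjian}), functoriality of the whole filtration is essentially forced, and the proof reduces to invoking this functoriality layer by layer.
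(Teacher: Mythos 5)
Your proposal is correct and follows essentially the same route as the paper: both arguments proceed layer by layer through the CKRW filtration, using naturality of the forgetful map and of the Dixmier--Douady class to preserve $\ker(F)$, and naturality of the Moore-cohomology ingredients to preserve $B_1$. The only cosmetic difference is that you argue the $B_1$ step via naturality of $\xi$ and $d_2''$ directly, whereas the paper invokes its computed identification of $B_1$ with a quotient of $C(W,\KR)$; both amount to the same functoriality.
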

\begin{proof}
The natural filtration on $\Br_{\KR \times \KZ}(X)$ is 
$$H^2(\KR \times \KZ, C(X,\T))/\im(d_2^{''}) \subseteq \ker(F) \subseteq 
\Br_{\KR \times \KZ}(X)$$ (see Lemma (\ref{LemRXZCD}) above). Under pullback,
as shown above, $\Br_{\KR \times \KZ}$ maps naturally. Under pullback,
the Dixmier-Douady invariant also pulls back, so, if an element 
in $\ker(F) \subseteq \Br_{\KR \times \KZ}(Y)$ corresponded to a 
dynamical system of the form 
$(C_0(Y,\Cpct),\alpha \times \beta,\KR \times \KZ)$ its pullback 
would be of the form $(C_0(X,\Cpct),\alpha' \times \beta',\KR \times \KZ)$.
Also, a locally unitary map pulls back to another locally unitary map. 
Thus, under pullback an element of $\ker(F)$ maps to another element
of $\ker(F)$. By Lemma (\ref{LemRXZCD}) above, 
$H^2(\KR \times \KZ, C(X,\T))/\im(d_2^{''})$
is a quotient of $C(W,\KR)$ and pullback of continous-trace algebras
induces a natural map to
$H^2(\KR \times \KZ, C(Y,\T))/\im(d_2^{''})$ 
(which is a quotient of $C(Z,\KR)$). 
\end{proof}

Let $X \to W$ be a principal circle bundle as described just before
Lemma (\ref{LemBrFilt}) above. For any $i,$ let $G_i$ be 
any of the following contravariant abelian-group-valued 
functors\footnote{ By naturality, (See Ref.\ \cite{CKRW}), 
these are the groups that would appear in the discussion of 
$\KR \times \KZ$-actions on $CT(X,\delta)$ which on $X$ cover the $S^1$-action
on $X$. } which assign to $X$ the value $\Br_{\KR}(X),\Br_{\KR \times \KZ}(X)$,
any of the groups in the filtration on $\Br_{\KR \times \KZ}(X)$, any
of the cohomology groups $H^{q}(X,\KZ), H^{q}_M(\KR \times \KZ,C(X,\T))$ or 
$H^p(X,\KZ)\oplus H^q(X,\KZ)$. Let 
$N:G_1 \to G_2$ be a natural transformation between the $G_i$.
\begin{theorem}
Let $W$ be a $CW$-complex. Choose a principal $S^1$-bundle $E_p$ on $W$
for every $[p] \in H^2(W,\KZ)$. Let $G,G_i$ be as described before
this theorem and let $N:G_1 \to G_2$ be a
natural transformation as described above. 
Then we have the following
\leavevmode
\begin{enumerate}
\item The group $\Aut(E_p)$ acts on $G(E_p)$ for every $E_p$. 
\item For any of the choices of $G$ listed before this theorem, let $Y_p$ 
denote the set $G(E_p)/\Aut(E_p)$. Let 
$P_G(W) = \amalg_{[p] \in H^2(W,\KZ)} Y_p$. 
Then, the set $P_G(W)$ doesn't depend on the choice of $E_p$.
\item  Let {\bf Set} be the category whose objects are sets and whose
morphisms are functions. $P_G(W)$ is a contravariant {\bf Set}-valued 
functor\footnote{See also the definitions of $\mbox{Par}(B)$ after 
Rem.\ (2.2) and $\Dyn(E,B)$ after Prop.\ (2.8) in Ref.\ \cite{Schneider}}
on $CW$-complexes for every choice of $G$.
\item The natural transformation $N:G_1 \to G_2$ yields a natural 
transformation $P_N:P_{G_1} \to P_{G_2}$.
\item Let $N:G_1 \to G_2$ be the natural transformation mentioned above.
If the map $N(E_p):G_1(E_p) \to G_2(E_p)$ is always surjective for every 
principal bundle $E_p \to W$ for every $CW$-complex $W,$ 
then the induced map $P_N(W):P_{G_1}(W) \to P_{G_2}(W)$ is
surjective for every $CW$-complex $W.$
Similarly if $N(E_p)$ is injective or bijective then so is $P_N(W)$ for
every $CW$-complex $W.$
\end{enumerate}
\label{ThmPG}
\end{theorem}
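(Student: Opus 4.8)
The plan is to establish the five items essentially in order, since each feeds the next. For item (1), given $[p] \in H^2(W,\KZ)$ and a chosen model bundle $E_p \to W$, recall that $\Aut(E_p)$ is the group of bundle automorphisms covering $\id_W$; every such automorphism is in particular a homeomorphism of $E_p$, and by functoriality of each of the listed functors $G$ (each is contravariant and abelian-group valued, by Lemma \ref{LemBrRRZ}, Lemma \ref{LemBrFilt}, and the standard functoriality of $H^\ast(-,\KZ)$ and $H^\ast_M(\KR\times\KZ,C(-,\T))$), an automorphism $\psi \in \Aut(E_p)$ induces $\psi^\ast : G(E_p) \to G(E_p)$. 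Since $(\psi_1\psi_2)^\ast = \psi_2^\ast\psi_1^\ast$, this gives a right action of $\Aut(E_p)$ on $G(E_p)$ (or a left action after inverting); either convention works for forming the quotient set $Y_p := G(E_p)/\Aut(E_p)$.

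For item (2), I would show $P_G(W) = \amalg_{[p]} Y_p$ is independent of the choices $E_p$. If $E_p'$ is a second choice of bundle with class $[p]$, then $E_p \cong E_p'$ as bundles over $W$; pick an isomorphism $\theta$. Conjugation by $\theta$ identifies $\Aut(E_p)$ with $\Aut(E_p')$, and $\theta^\ast : G(E_p') \to G(E_p)$ is equivariant for this identification, hence descends to a bijection $Y_p' \to Y_p$. The only subtlety is that $\theta$ is not unique: two choices differ by an element of $\Aut(E_p)$, which acts trivially on the quotient $Y_p$ by construction — so the bijection $Y_p' \to Y_p$ is \emph{canonical}. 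Taking the disjoint union over $[p]$ gives a canonical bijection $P_G(W)$ (for the primed choices) $\to P_G(W)$ (for the unprimed), which is the desired independence.

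For item (3), functoriality: given $f : W' \to W$, I must produce $P_G(f) : P_G(W) \to P_G(W')$. Pullback of bundles sends $E_p \to W$ to $f^\ast E_p \to W'$, a bundle with class $f^\ast[p] \in H^2(W',\KZ)$; so after fixing model bundles on $W'$ too, I get a map of index sets $H^2(W,\KZ) \to H^2(W',\KZ)$. For the fiber over $[p]$: pullback induces $f^\ast : G(E_p) \to G(f^\ast E_p)$, and there is a natural homomorphism $\Aut(E_p) \to \Aut(f^\ast E_p)$ (an automorphism of $E_p$ covering $\id_W$ pulls back to one of $f^\ast E_p$ covering $\id_{W'}$) for which $f^\ast$ is equivariant; hence $f^\ast$ descends to $Y_p \to Y_{f^\ast[p]}$, and assembling over $[p]$ gives $P_G(f)$. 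Checking $P_G(\id) = \id$ and $P_G(g\circ f) = P_G(f)\circ P_G(g)$ is routine from the corresponding identities for $G$ and for bundle pullback, together with the independence-of-choices result from item (2) used to compare the two a priori different model-bundle systems on an intermediate complex. For item (4): a natural transformation $N : G_1 \to G_2$ gives $N(E_p) : G_1(E_p) \to G_2(E_p)$ commuting with all pullback maps, in particular with the $\Aut(E_p)$-action (which is realized by pullback along homeomorphisms); hence $N(E_p)$ descends to $Y_p^{(1)} \to Y_p^{(2)}$, and assembling gives $P_N(W) : P_{G_1}(W) \to P_{G_2}(W)$; naturality in $W$ follows because $N$ commutes with the pullback maps that define $P_{G_i}(f)$.

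Finally, item (5) is where the one genuine subtlety lies, so I would treat it last and carefully. The map $P_N(W)$ is a disjoint union over $[p] \in H^2(W,\KZ)$ of the descended maps $\overline{N(E_p)} : G_1(E_p)/\Aut(E_p) \to G_2(E_p)/\Aut(E_p)$. Surjectivity of each $N(E_p)$ immediately forces surjectivity of each $\overline{N(E_p)}$ (quotient maps preserve surjectivity), so $P_N(W)$ is surjective — this direction is clean. Bijectivity likewise descends with no trouble. The one place to be careful is \emph{injectivity}: a priori, $\overline{N(E_p)}$ could fail to be injective even when $N(E_p)$ is, if some $\psi \in \Aut(E_p)$ identifies two $G_1$-classes that were distinct. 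But that cannot happen: if $N(E_p)(x)$ and $N(E_p)(y)$ lie in the same $\Aut(E_p)$-orbit, say $N(E_p)(y) = \psi^\ast N(E_p)(x) = N(E_p)(\psi^\ast x)$ by naturality of $N$ with respect to the homeomorphism $\psi$, then injectivity of $N(E_p)$ gives $y = \psi^\ast x$, so $x$ and $y$ are already in the same orbit. Hence $\overline{N(E_p)}$ is injective, and taking the disjoint union, $P_N(W)$ is injective; combining the two, bijectivity of all $N(E_p)$ yields bijectivity of $P_N(W)$. The key point throughout — and the thing that makes item (5) work — is precisely that the $\Aut(E_p)$-action on $G_i(E_p)$ is induced by \emph{pullback}, so $N$, being natural with respect to all continuous (in particular all homeomorphic) maps, automatically commutes with it.
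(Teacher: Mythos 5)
Your proposal is correct and follows essentially the same route as the paper's proof: functoriality of each $G$ gives the $\Aut(E_p)$-action, a choice of bundle isomorphism identifies the quotients canonically, pullback induces the functor structure, and naturality of $N$ descends to the quotients. The only point where you go beyond the paper is the injectivity case of item (5), which the paper dismisses as obvious and which you correctly settle by using equivariance of $N$ under the $\Aut(E_p)$-action to show that $N(E_p)$ reflects orbits.
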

\begin{proof}
\leavevmode
\begin{enumerate}
\item For $\Br_{\KR}$ and $\Br_{\KR \times \KZ}$ this follows from 
Lemma (\ref{LemBrRRZ}) above and functoriality.
For the groups in the filtration on $\Br_{\KR \times \KZ}$ this follows
from Lemma (\ref{LemBrFilt}) above. For 
$H^{\ast},H_M^{\ast}, H^{\ast} \oplus H^{\ast}$ this is obvious.
\item Suppose we pick another bundle $E^{'}_p$ for each
$p \in H^2(W,\KZ).$ Since $E_p,E^{'}_p$ have the same characteristic class,
they are all isomorphic. Let $\phi_p:E^{'}_p \to E_p$ be choices of isomorphisms
one for each $p \in H^2(W,\KZ).$ Each $\phi_p$ induces isomorphisms 
$\Aut(E_p) \to \Aut(E^{'}_p).$ Let $\lambda$ be any element of $\Aut(E_p)$ 
and $\lambda^{'}$ the corresponding element of $\Aut(E^{'}_p).$
Then, we have that the following diagram commutes
\begin{equation}
\CD
E_p @>{\phi}>> E^{'}_p\\
@V{\lambda}VV  @V{\lambda^{'}}VV \\
E_p @>{\phi}>> E^{'}_p.
\endCD
\nonumber
\end{equation}
Applying $G$ to each element of the above diagram shows that the sets
$G(E_p)/\Aut(E_p)$ are in bijection. Changing the isomorphisms $\phi_p$ 
doesn't affect the result.
\item It is clear that $P_G(W)$ is always a set.
All the $G_i$ described above are contravariant abelian group 
valued functors. If $f:V \to W$ is a continuous map, 
pulling back $E_p \to W$ along $f$ induces a bundle $f^{\ast}(E_p) \to V.$ 
Hence, there is a well-defined natural map $P_G(f):P_G(W) \to P_G(V)$ 
which is induced by pullback. This map doesn't depend on the choices of 
the $E_p$ as in the previous part.

If we consider $id:W \to W,$ the pullback of $E_p$ is naturally isomorphic
to $E_p$ and so the induced natural map $P_G(id):P_G(W) \to P_G(W)$ 
is the identity. This is independent of the choice of $E_p$ for the same
reason as in the previous part.

It is clear that if $f:U \to V$ and $g:V \to W,$ we have that
$f^{\ast}(g^{\ast}(E_p)) \simeq (f \circ g)^{\ast}(E_p).$ Hence,
the induced map $P_G(g \circ f)$ satisfies
$P_G(g \circ f) = P_G(f) \circ P_G(g)$ independently of the choice of $E_p.$ 
\item  Suppose we had a natural transformation $N:G_1 \to G_2.$ 
We have, for every continuous map $f:V \to W,$
and every commutative square
\begin{equation}
\CD
E^{'}_p @>F>> E_p\\
@VVV  @VVV \\
V @>f>> W
\endCD
\nonumber
\end{equation}

maps $N(E_p), N(E^{'}_p)$ such that the following
diagram commutes
\begin{equation}
\CD
G_1(E_p) @>N(E_p)>> G_2(E_p)\\
@VG_1(F)VV  @VG_2(F)VV \\
G_1(E^{'}_p) @>N(E^{'}_p)>> G_2(E^{'}_p).
\endCD
\nonumber
\end{equation}

Let $Y^1_p = G_1(E_p)/\Aut(E_p)$ and $Y^2_p = G_2(E_p)/\Aut(E_p).$
Then, we have that $N(E_p)$ induces maps $Y^1_p \to Y^2_p.$
This, in turn, induces maps $P_N(W):P_{G_1}(W) \to P_{G_2}(W).$

Similarly, from the commutative diagram in Part (2) above, 
elements of $\Aut(E_p)$ give rise to elements of $\Aut(E^{'}_p)$
by composition. Hence, the above diagram remains commutative when
we quotient each $E_p$ by $\Aut(E_p).$

Hence, for every $W$ we have that the 
following diagram commutes for every $f:V \to W$
\begin{equation}
\CD
P_{G_1}(W) @>P_N(W)>> P_{G_2}(W)\\
@V{P_{G_1}(f)}VV  @V{P_{G_2}(f)}VV \\
P_{G_1}(V) @>P_N(W)>> P_{G_2}(V).
\endCD
\nonumber
\end{equation}
Hence, we have a natural transformation $P_N:P_{G_1} \to P_{G_2}.$ 

\item Suppose $N(E_p):G_1(E_p) \to G_2(E_p)$ was always surjective for 
every principal bundle $E_p \to W,$ for every $CW$-complex $W.$ 
Then, the induced map $Y^1_p \to Y^2_p$ is always surjective (since 
quotienting both sides of $N(E_p):G_1(E_p) \to G_2(E_p)$ by $\Aut(E_p)$ 
will give a surjective map). Hence, the induced map 
$P_N(W):P_{G_1}(W) \to P_{G_2}(W)$ would be surjective as well.

The proof in the case $N(E_p)$ is injective or bijective is obvious.
\end{enumerate}
\end{proof}

Using the previous Theorem, let $P,P_2,P_3$ be the functors associated
to $\Br_{\KR},H^2,H^3$ respectively. Similarly, let $P_{3,2}, \R$ be
the functors associated to $H^2 \oplus H^3, \Br_{\KR \times \KZ}$
respectively.

\begin{lemma}
The functor $P_3$ above is the functor of Ref.\ \cite{Bunke}.
\end{lemma}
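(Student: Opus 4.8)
The plan is to unwind the definition of $P_3$ given by Theorem \ref{ThmPG} applied to the functor $G = H^3(-,\KZ)$ and check that it reproduces, on the nose, the set-valued functor that Bunke--Schick use to classify pairs (principal circle bundle, $H$-flux) up to isomorphism. First I would recall that for a $CW$-complex $W$ the functor $P_3$ assigns the set
$$
P_3(W) = \coprod_{[p] \in H^2(W,\KZ)} H^3(E_p,\KZ)/\Aut(E_p),
$$
where $E_p \to W$ is a chosen principal $S^1$-bundle with Euler class $[p]$, and $\Aut(E_p)$ is the group of bundle automorphisms of $E_p$ covering $\id_W$. By Theorem \ref{ThmPG}, Part (2), this set is independent of the choices of $E_p$, and by Part (3) it is a contravariant {\bf Set}-valued functor under pullback. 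The goal is to identify this with the functor $R$ (or $\mathrm{Triples}$, depending on the notation of Ref.\ \cite{Bunke}) whose value on $W$ is the set of isomorphism classes of pairs $(E \to W, h)$ with $E$ a principal $S^1$-bundle over $W$ and $h \in H^3(E,\KZ)$, with pullback as the functorial structure.

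The key step is a bijection at the level of each $W$, natural in $W$. Given an isomorphism class of pairs $(E \to W, h)$, the bundle $E$ has some Euler class $[p] \in H^2(W,\KZ)$, so $E \cong E_p$; fixing such an isomorphism $\psi: E \xrightarrow{\sim} E_p$ transports $h$ to a class $\psi_*(h) \in H^3(E_p,\KZ)$, and any two choices of $\psi$ differ by an element of $\Aut(E_p)$, so the class of $\psi_*(h)$ in $H^3(E_p,\KZ)/\Aut(E_p)$ is well defined. This gives a map from Bunke--Schick's set to $P_3(W)$. Conversely, an element of $P_3(W)$ lives in the summand indexed by some $[p]$ and is an $\Aut(E_p)$-orbit of a class in $H^3(E_p,\KZ)$; picking a representative $h$ gives the pair $(E_p \to W, h)$, whose isomorphism class is independent of the representative precisely because isomorphisms of the pair that fix $W$ are elements of $\Aut(E_p)$, and independent of the choice of model $E_p$ by Part (2) of Theorem \ref{ThmPG}. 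These two maps are mutually inverse. Naturality in $W$ follows from Part (3): pullback of a pair $(E \to W, h)$ along $f: V \to W$ is $(f^*E \to V, \tilde f^* h)$, which is exactly how $P_3(f)$ acts on the corresponding element, since pulling back $E_p$ along $f$ gives a bundle isomorphic to $E_{f^*p}$ and the induced map on $H^3$ is the pullback.

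The main obstacle — really the only subtle point — is matching the equivalence relations: one must verify that Bunke--Schick's notion of isomorphism of pairs (an $S^1$-bundle isomorphism over $\id_W$ carrying one $H$-flux to the other) corresponds exactly to the $\Aut(E_p)$-action used in defining $P_3$, with no coarser or finer identification hiding in either definition. This requires checking that Ref.\ \cite{Bunke} indeed quotients only by bundle automorphisms covering the identity on the base (as opposed to, say, also quotienting by the action of $H^2(W,\KZ)$ on the fibers, which would change the problem), and that the $H$-flux is allowed to be an arbitrary class in $H^3(E,\KZ)$ rather than, e.g., one satisfying a compatibility condition with the Euler class. Once the definitions are lined up, the bijection and its naturality are formal. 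I would therefore devote the bulk of the proof to carefully transcribing the relevant definitions from \cite{Bunke} and from Theorem \ref{ThmPG}, and then simply observe that the two constructions coincide term by term.
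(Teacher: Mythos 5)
Your proposal is correct and follows essentially the same route as the paper, which simply observes that both functors take the same value on $CW$-complexes and act on morphisms by pullback of pairs in the sense of Ref.\ \cite{Bunke}. You have merely spelled out the identification of the sets and the matching of the equivalence relations that the paper's one-line proof leaves implicit.
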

\begin{proof}
This is obvious: Both functors take the same value on $CW$-complexes
and, for any $f:V \to X,$ both act on objects via
pullback of pairs as defined in Ref.\ \cite{Bunke}.
\end{proof}

We will study the functors $P_2,P_3,P_{3,2}$ in more detail in 
Sec.\ (\ref{SecClass}).
\begin{corollary}
Suppose $W,E_p$ were as in Thm.\ (\ref{ThmPG}).
\leavevmode
\begin{enumerate}
\item The natural isomorphism $F:\Br_{\KR}(E_p) \to H^3(E_p,\KZ)$
induces a natural transformation $P \to P_3$.
\item The forgetful map (see Def.\ (\ref{DefSetup}) above) 
$F:\Br_{\KR \times \KZ}(E_p) \to \Br_{\KR}(E_p)$
induces a natural transformation $F:\R \to P_3$.
\item There is a natural surjective map $\Br_{\KR \times \KZ}(E_p) \to 
H^2(E_p,\KZ) \oplus H^3(E_p,\KZ)$. This induces a natural transformation
$\pi:\R \to P_{3,2}$. For every $CW$-complex $W,$ the induced map
$\R(W) \to P_{3,2}(W)$ is surjective.
\end{enumerate}
\label{CorPRMaps}
\end{corollary}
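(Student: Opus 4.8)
The strategy is to obtain all three maps on the chosen bundles $E_p$ as natural transformations of the abelian-group-valued functors appearing in Theorem~\ref{ThmPG}, and then to invoke parts~(4) and~(5) of that theorem, which promote such a transformation to a natural transformation of the associated $\mathbf{Set}$-valued functors and preserve surjectivity.

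For part~(1): the Dixmier--Douady identification $\Br_{\KR}(X) \xrightarrow{\sim} H^3(X,\KZ)$ of Ref.~\cite{CKRW}, Sec.~(6.1) is natural under pullback of $C^{\ast}$-dynamical systems, hence is a natural transformation $F \colon \Br_{\KR} \to H^3$ of functors on the category of principal $S^1$-bundles. Theorem~\ref{ThmPG}(4) then yields $P \to P_3$, and since $F$ is an isomorphism, Theorem~\ref{ThmPG}(5) shows the induced map is in fact a natural bijection. For part~(2): the forgetful map $F_1 \colon [\A,\alpha \times \phi] \mapsto [\A,\alpha]$, from $\Br_{\KR \times \KZ}(X)$ to $\Br_{\KR}(X)$ (see Def.~\ref{DefSetup} and Thm.~\ref{ThmBrauer}(1)), is likewise natural, so the composite $\Br_{\KR \times \KZ} \xrightarrow{F_1} \Br_{\KR} \xrightarrow{F} H^3$ is a natural transformation; Theorem~\ref{ThmPG}(4) produces the desired $F \colon \R \to P_3$, equivalently $P_{F_1}\colon \R \to P$ followed by the transformation of part~(1).

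For part~(3) I first construct a natural homomorphism $\mu \colon \Br_{\KR \times \KZ}(X) \to H^2(X,\KZ) \oplus H^3(X,\KZ)$. Let $s \colon \Br(X) \cong H^3(X,\KZ) \to \Br_{\KR \times \KZ}(X)$, $[\A,\alpha] \mapsto [\A, \alpha \times \operatorname{id}]$, be the splitting of the short exact sequence of Thm.~\ref{ThmBrauer}(1), so $F \circ s = \operatorname{id}$, and let $\eta \colon \ker(F) \to H^2(X,\KZ)$ be the surjection of Thm.~\ref{ThmBrauer}(2). Then $\operatorname{id} - sF$ is a homomorphism with image in $\ker(F)$, so $\tilde\eta := \eta \circ (\operatorname{id} - sF)$ is a homomorphism $\Br_{\KR \times \KZ}(X) \to H^2(X,\KZ)$, and I set $\mu := (\tilde\eta, F)$. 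Naturality of $\mu$ in $X$ follows from naturality of $F$, of $\eta$ (Ref.~\cite{CKRW}, Thm.~(5.1)), and of $s$. Surjectivity of $\mu(X)$ is a direct check: given $(a,b)$, pick $y_0 \in \ker(F)$ with $\eta(y_0)=a$ and set $y := y_0 + s(b)$; then $F(y) = b$ and $\tilde\eta(y) = \eta(y_0 + s(b) - s(b)) = a$. Feeding $N = \mu$ into Theorem~\ref{ThmPG}(4) gives $\pi \colon \R \to P_{3,2}$, and since $\mu(E_p)$ is surjective for every principal bundle over every $CW$-complex, Theorem~\ref{ThmPG}(5) makes $\pi(W) \colon \R(W) \to P_{3,2}(W)$ surjective for all $W$.

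The only step that requires genuine care — and the main obstacle — is checking that $\tilde\eta$, hence $\mu$, is a well-defined \emph{natural group homomorphism} rather than merely a map of pointed sets; this rests on the splitting $s$ being functorial under pullback, which holds because adjoining the trivial $\KZ$-action commutes with pullback of $C^{\ast}$-dynamical systems. Once that is granted, all three parts follow formally from Theorem~\ref{ThmPG}.
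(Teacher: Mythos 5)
Your proposal is correct and follows essentially the same route as the paper: parts (1) and (2) are obtained by feeding the natural isomorphism $\Br_{\KR}\cong H^3$ and the forgetful map into Theorem~\ref{ThmPG}, and for part (3) your map $(\eta\circ(\operatorname{id}-sF),\,F)$ is exactly the map the paper builds from the splitting $\Br_{\KR\times\KZ}(E_p)\simeq H^3(E_p,\KZ)\oplus\ker(F)$ composed with $\eta$ on the second factor. Your extra care about $\tilde\eta$ being a genuine natural group homomorphism (via functoriality of the section $s$) is a detail the paper leaves implicit, but it is not a different argument.
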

\begin{proof}
\leavevmode
\begin{enumerate}
\item By Ref.\ \cite{MRCMP}, Sec.\ (4.1), the forgetful map induces a 
natural isomorphism $F:\Br_{\KR}(E_p) \to H^3(E_p,\KZ)$. By the previous
Theorem, this induces a natural transformation $P \to P_3$.
\item This follows from Thm.\ (\ref{ThmPG}) items (3,4) applied to the
forgetful map (see Def.\ (\ref{DefSetup}) above) 
$F:\Br_{\KR \times \KZ}(E_p) \to \Br_{\KR}(E_p)$.
\item By Thm.\ (\ref{ThmBrauer}) item (1), we have a split short exact 
sequence
$$0 \to \ker(F) \to \Br_{\KR \times \KZ}(E_p) \underset{F}{\to} \Br(E_p)
\to 0.$$ Thus, we have a natural isomorphism 
$$\Br_{\KR \times \KZ}(E_p) \simeq H^3(E_p,\KZ) \oplus \ker(F)$$
where we have used the fact that $\Br(E_p) \simeq H^3(E_p,\KZ).$
By Thm.\ (\ref{ThmBrauer}) item (2), there is a natural surjective map
$\eta:\ker(F) \to H^2(E_p,\KZ).$ 

Hence, we have a natural map $\Br_{\KR \times \KZ}(E_p) \to H^3(E_p,\KZ) \oplus
H^2(E_p,\KZ).$ Since $F,\eta$ are both surjective, the above map is
surjective. By Thm.\ (\ref{ThmPG}) items (3,4) applied to the above
map, this induces a natural transformation $\pi:\R \to P_{3,2}.$
By the same Theorem, item (4), for every $CW$-complex $W,$ 
the functor $\pi$ induces a surjective map $\R(W) \to P_{3,2}(W).$
\end{enumerate}
\end{proof}

For a given string background $p:E_p \to W$, $P_3(W)$ encodes the
data important for T-duality, namely the $H$-flux on $E_p$ and the
characteristic class of $E_p.$ 

As we noted at the beginning of this section, for a given string background 
if we fix a closed three-form $H$ such that
$[H]$ is the $H$-flux and a two-form field $B$ such that
$H=dB$ then a large gauge transformation of the $H$-flux will change $B$
to $B'$ such that $H=dB'$. By definition, $d(B-B')=0$, i.e., $(B-B')$ is
a cohomology class in $H^2_{\text{de Rham}}(X)$. In the discussion
at the beginning of this section, it has been pointed out that for a 
large gauge transformation, this class is actually integral. 

To the data encoded in $P_3(W)$, we may add, in addition, the 
characteristic class of a large gauge transformation of the $H$-flux. 
For any $CW$-complex $W,$ this is encoded in $P_{3,2}(W):$ Here to $W$ 
we associate the triple $([p],b,H).$ The class $b$ parametrizes large 
gauge transformations of the $H$-flux and should not be 
identified with the physical Kalb-Ramond field unless $H=0.$ 

With this caveat in mind we will refer to the class $b$ as `the $B$-field'
or `the $B$-class' in what follows. It should be clear, however, 
that this should not, in fact, be interpreted as 
the physical Kalb-Ramond field, 
but should be viewed as a shift in the Kalb-Ramond field $B$ 
that is, as the characteristic
class of a large gauge transformation of the $H$-flux.

We now point out some relations among the functors 
$P_2,P_3,P_{3,2}$ described above. We point out the action of
the transformations induced by Topological T-duality on these functors. 
We show how these indicate a way to
answer the question posed in Sec.\ (1) in the paragraph 
after Lemma (\ref{LemATMP}) (see Eq.\ (\ref{EqDefT32}) below).
We close with a comparision to the work of Schneider 
(see Ref.\ \cite{Schneider}). 

Let $W$ be as above. Consider the functor 
$P_2$ above. As pointed out above after 
Lemma (\ref{LemTR}), for $W,E_p$ as above, by considering 
$\Br_{\KR \times \KZ}(E_p)/B_1,$ we are led to consider `triples' 
of the form {\bf (principal bundle $E_p$, Class in $H^2(E_p,\KZ)$,
Class in $H^3(E_p,\KZ)$)}. The isomorphism classes of such triples
over $W$ are $P_{3,2}(W)$ where
$P_{3,2}$ is the functor defined before Cor.\ (\ref{CorPRMaps}) above. 
Also, by the discussion in Cor.\ (\ref{CorPRMaps}) item (3) above, 
there is a natural transformation $\pi:\R \to P_{3,2}.$

In Sec.\ (\ref{SecTriple}) we prove that there is a well-defined
map $T_{3,2}:R_{3,2} \to R_{3,2}$ inducing a natural transformation
$T_{3,2}(W):P_{3,2}(W) \to P_{3,2}(W)$ by an argument involving the 
classifying space of triples $R_{3,2}$. We also show that this map
induces the following commutative diagram.
\begin{equation}
\CD
T_{3,2}(W) @>{T_{3,2}(W)}>> T_{3,2}(W)\\
@V{\pi(W)}VV  @V{\pi(W)}VV \\
T_3(W) @>{T(W)}>> T_3(W)
\endCD
\label{EqDefT3}
\end{equation}
i.e., $\pi \circ T_{3,2} = T \circ \pi$ as natural transformations.

Given a $C^{\ast}$-dynamical system 
$(\A, \KR \times \KZ, \alpha)$ corresponding to a class
$a \in \Br_{\KR \times \KZ}(E_p),$ adding an element of $B_1$ to
$a$ will not change the Phillips-Raeburn invariant of $\alpha|_{\KZ}$
or the $H$-flux $F(a).$ We showed in
Ref.\ \cite{ATMP} that under T-duality the Phillips-Raeburn invariant of the
$\KZ$-action on $\CrPr{\A}{\KR}{\alpha|_{\KR}}$ associated to $T(a)$ 
only depends on the $H$-flux and the Phillips-Raeburn invariant
of the dynamical system associated to $a.$ It doesn't depend on the
lift of these data to a $\KR \times \KZ$-action on $\A.$ 
Hence, the following diagram commutes
\begin{equation}
\CD
\R(W) @>{T_R}(W)>> \R(W)\\
@V{\pi(W)}VV  @V{\pi(W)}VV \\
P_{3,2}(W) @>{T_{3,2}(W)}>> P_{3,2}(W).
\endCD
\label{EqDefT32}
\end{equation}

Also, by Cor.\ (\ref{CorPRMaps}), item(3), the map 
$\pi(W):\R(W) \to P_{3,2}(W)$ 
induced by the functor $\pi$ in Cor.\ (\ref{CorPRMaps}) item (3), is 
always surjective. Thus we may infer properties 
of $T_R$ from those of $T_{3,2}$ but it
should be clear that they are not the same.
In this paper we mainly study $P_{3,2}$ and $T_{3,2}.$ This is sufficient
to answer the question we raised in Section (1) in the paragraph 
after Lemma (\ref{LemATMP}).

Let $W$ be as above. Consider the functor $P_3$ above.
In Ref.\ \cite{Bunke},
the authors show that there is a classifying space $R_3$ for the functor
$P_3$ and $P_3(W) = [W,R_3].$ Further, Ref.\ \cite{Bunke} shows
that that T-duality is a natural transformation from $P_3$ to itself
giving a map we denote as $T_3(W): P_3(W) \to P_3(W)$. 

By Cor.\ (\ref{CorPRMaps}) above, there is a natural transformation 
$P \to P_3$ induced by the isomorphism $\Br_{\KR}(E_p) \to H^3(E_p,\KZ)$
which by Part (5) of Thm.\ (\ref{ThmPG}) induces a natural isomorphism
$P(W) \to P_3(W)$ for every $CW$-complex $W.$

Hence, for every $CW$-complex $W$, we have an isomorphism of sets
\begin{gather}
\amalg_{[p]\in H^2(W,\KZ)} H^3(E_p,\KZ)/\Aut(E_p) \simeq
\amalg_{[p] \in H^2(W,\KZ)} \Br_{\KR}(E_p)/\Aut(E_p).\nonumber
\end{gather}
Therefore, an isomorphism class of a pair over $W$ 
(in the sense of Ref.\ \cite{Bunke}) 
determines and is determined by an element of $P.$ 
In particular, for any CW complex $W,$ we have that $[W,R_3] = P(W)$ 
as well. Thus, the Topological T-duality functor of Bunke et al may
be easily derived from the $C^{\ast}$-algebraic theory of Ref.\ \cite{MRCMP}
for circle bundles (see also Ref.\ \cite{Schneider} Prop.\ (2.8)).

Let $\R$ be as defined above. By the above, this is a set-valued functor on the
category of unbased $CW$-complexes. Then, we have the following theorem
\begin{lemma}
There is a well-defined map $T_R: \R(W) \to \R(W)$ 
induced by the crossed product.
\label{LemTR}
\end{lemma}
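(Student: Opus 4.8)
The plan is to define $T_R$ on the system of representatives used to build $\R$ and then check that all auxiliary choices are absorbed by the $\Aut$-quotients appearing in its definition. Fix, as in Theorem~\ref{ThmPG}, a principal $S^1$-bundle $E_q \to W$ for each $[q]\in H^2(W,\KZ)$, so that $\R(W) = \amalg_{[p]\in H^2(W,\KZ)} \Br_{\KR\times\KZ}(E_p)/\Aut(E_p)$. Let $[p]\in H^2(W,\KZ)$ and let $y=[\A,\alpha\times\phi]$ represent a class in $\Br_{\KR\times\KZ}(E_p)$, where $\A = CT(E_p,\delta)$ for some $\delta\in H^3(E_p,\KZ)$, the $\KR$-action $\alpha$ is a lift of the (free) $S^1$-action on $E_p$, and the $\KZ$-action $\phi$ is spectrum-fixing and commutes with $\alpha$ (Definition~\ref{DefSetup}). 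Form the crossed product $\E=\CrPr{\A}{\KR}{\alpha}$. Since the $S^1$-action is free, $\alpha$ has no fixed points, and by the structure theorem of Mathai and Rosenberg~\cite{MRCMP} the algebra $\E$ is again a continuous-trace algebra whose spectrum is the total space of a principal circle bundle $p^\#:E_p^\#\to W$ with Euler class $[p^\#]=p_!(\delta)$, carrying a dual $\hat\KR\cong\KR$-action $\hat\alpha$ lifting the $S^1$-action on $E_p^\#$. By Lemma~\ref{LemATMP}(2) the commuting $\KZ$-action $\phi$ induces a $\KZ$-action $\phi^\#$ on $\E$ that is locally unitary on $E_p^\#$, hence spectrum-fixing; and since $\phi$ commutes with $\alpha$, functoriality of the crossed product together with naturality of the dual action force $\phi^\#$ to commute with $\hat\alpha$. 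Thus $(\E,\hat\alpha\times\phi^\#)$ is a $\KR\times\KZ$-dynamical system of the type permitted in Definition~\ref{DefSetup}, with spectrum the principal circle bundle $E_p^\#\to W$.

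Next I would verify that this assignment descends to equivalence classes in $\Br_{\KR\times\KZ}$. If $(\A,\alpha\times\phi)$ and $(\B,\beta\times\psi)$ are equivalent in $\Br_{\KR\times\KZ}(E_p)$, witnessed by an $(\KR\times\KZ)$-equivariant Morita equivalence bimodule $M$, then the induced module $M\otimes_{\A}\CrPr{\A}{\KR}{\alpha}$ is a Morita equivalence between $\CrPr{\A}{\KR}{\alpha}$ and $\CrPr{\B}{\KR}{\beta}$ which is simultaneously equivariant for the dual $\KR$-actions and for the induced $\KZ$-actions; this is the standard compatibility of crossed products with equivariant Morita equivalence (cf.\ \cite{MRCMP,CKRW} and the references therein), i.e.\ exactly the statement that $\CrPr{\ \cdot\ }{\KR}{}$ defines a homomorphism of equivariant Brauer groups. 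Equivalent systems share the same Dixmier–Douady class $\delta$, so the two T-dual spectra agree as principal circle bundles over $W$. Hence $[\E,\hat\alpha\times\phi^\#]\in\Br_{\KR\times\KZ}(E_p^\#)$ depends only on $y$ and on the choice of identification of the (a priori only isomorphism class) $E_p^\#$ with one of the fixed representatives.

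To land in $\R(W)$ I would then transport along such an identification: since $[p^\#]=p_!(\delta)$, choose an isomorphism $\chi:E_p^\#\to E_{p_!(\delta)}$ of $S^1$-bundles over $W$ and push the dynamical system forward to $\Br_{\KR\times\KZ}(E_{p_!(\delta)})$. Two choices of $\chi$ differ by an element of $\Aut(E_{p_!(\delta)})$, so the image is well defined in $\Br_{\KR\times\KZ}(E_{p_!(\delta)})/\Aut(E_{p_!(\delta)})\subseteq\R(W)$. Finally, if $g\in\Aut(E_p)$ and we replace $y$ by $g^{\ast}y$, the Euler class of the T-dual is unchanged, because $p\circ g=p$ gives $p_!\circ g^{\ast}=p_!$ on $H^3(E_p,\KZ)$; moreover the crossed product of $g^{\ast}(\A,\alpha\times\phi)$ is canonically the pullback of $(\E,\hat\alpha\times\phi^\#)$ along the bundle automorphism $g^\#$ of $E_p^\#$ induced by $g$, which again changes the representative only by an element of $\Aut$. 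Therefore the composite descends to a well-defined map $T_R:\R(W)\to\R(W)$, $T_R(y)=\bigl[\chi_{\ast}\bigl(\CrPr{\A}{\KR}{\alpha},\ \hat\alpha\times\phi^\#\bigr)\bigr]$, and, by the uniqueness up to exterior equivalence of the lift $\alpha$ and the compatibility statements above, it is independent of all the choices made.

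The step I expect to be the main obstacle is the middle one: checking that the crossed product genuinely passes to equivalence classes in $\Br_{\KR\times\KZ}$, i.e.\ that an $(\KR\times\KZ)$-equivariant Morita equivalence induces, after applying $\CrPr{\ \cdot\ }{\KR}{}$, an equivalence that is at once equivariant for the \emph{dual} $\KR$-action and for the $\KZ$-action coming from Lemma~\ref{LemATMP}, and that the bookkeeping of $\Aut$-orbits on the source bundle $E_p$ and on the T-dual bundle $E_p^\#$ is mutually consistent. Everything else is either cited structure theory (\cite{MRCMP} for the shape of $E_p^\#$ and $\hat\alpha$, Lemma~\ref{LemATMP} for $\phi^\#$) or formal naturality of the kind already packaged into Theorem~\ref{ThmPG}.
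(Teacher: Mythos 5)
Your proposal is correct and its core is the same as the paper's: define $T_R$ on a representative $(\A,\alpha\times\phi)$ by passing to $(\CrPr{\A}{\KR}{\alpha},\alpha^{\#}\times\phi^{\#})$, with $\phi^{\#}$ the functorially induced automorphism, and then check independence of choices. The genuine difference lies in how Morita-invariance is established. The paper invokes Lemma (3.1) of \cite{CKRW} to upgrade the equivariant Morita equivalence to an outer conjugacy of the $\KR\times\KZ$-actions, and then ``suspends'' the implementing unitary cocycle $u:\KR\times\KZ\to UM(\A)$ to an explicit exterior equivalence between $\alpha^{\#}\times\phi^{\#}$ and $\alpha'^{\#}\times\phi'^{\#}$. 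You instead appeal to the general principle that an $(\KR\times\KZ)$-equivariant Morita equivalence bimodule induces, after taking the crossed product by the $\KR$-factor, a Morita equivalence of the crossed products that is equivariant for the dual action and the induced $\KZ$-action --- the standard statement that crossed products preserve equivariant Morita equivalence and hence define a homomorphism of equivariant Brauer groups (cf.\ \cite{CKRW}, \cite{Will}). Both arguments work; yours is more canonical (no choice of cocycle), the paper's is more explicit and elementary once the CKRW lemma is granted. You correctly identify this as the only step needing real care. You also supply bookkeeping the paper leaves implicit: the identification of the dual spectrum $E_p^{\#}$ with the fixed representative bundle $E_{p_!(\delta)}$ up to $\Aut(E_{p_!(\delta)})$, and the compatibility with the $\Aut(E_p)$-quotient entering the definition of $\R(W)$ via Theorem \ref{ThmPG}; this is actually needed for the map to land in $\R(W)$ as defined, so including it strengthens rather than departs from the paper's argument.
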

\begin{proof}
We need the following well-known fact.
Let $\A, \B$ be $C^{\ast}$-algebras with $G$-action $\alpha,\beta$ 
respectively. Let $C_c(G,\A)$ the $\alpha$-twisted convolution algebra
of $\A$-valued functions on $G$ which are of compact support on $G$.
Similarly, let $C_c(G,\B)$ be the $\beta$-twisted convolution algebra
of $\B$-valued functions on $G$ which are of compact support on $G$.
Give $C_c(G,\A)$ and $C_c(G,\B)$ the inductive limit topology\footnote{ 
See Ref.\ \cite{Will}, Corollary 2.48.}.
\begin{theorem}
Suppose that $(\A,G,\alpha)$ and $(\B,G,\beta)$ are dynamical systems
and $\phi:\A \to \B$ is an equivariant homomorphism. Then, there is a
homomorphism 
$\phi \rtimes \operatorname{id}: \CrPr{\A}{G}{\alpha} \to \CrPr{\B}{G}{\beta}$
mapping $C_c(G,\A)$ into $C_c(G,\B)$ such that 
$\phi \rtimes \operatorname{id}(f)(s) = \phi(f(s))$.
\end{theorem}
From the proof of this theorem, it is clear that the extension
$\phi \rtimes \operatorname{id}$ is unique.

Here, $\A = \B$ and $G = \KR$. 
Suppose $y \in \Br_{\KR \times \KZ}(W),$ and we pick a 
representative $x=(\A, \alpha \times \phi)$ of $y$. 
We may define $T(y)$ to be the dynamical system
$(\CrPr{\A}{\KR}{\alpha}, \alpha^{\#} \times \phi^{\#})$
where $\phi^{\#}$ is the map induced on the crossed product
by $\phi^{\#}(f)(t) = \phi \rtimes \operatorname{id}(f)(t)$.
It is clear that it is unique and commutes with the
$\KR$-action.

Changing the representative to a Morita equivalent one 
$y=(\A', \alpha' \times \phi')$ will not change the
Morita equivalence class of the answer because, by 
Lemma (3.1) of Ref.\ \cite{CKRW}, $\alpha' \times \phi'$ is
outer conjugate to $\alpha \times \phi.$ 
Hence, there is an isomorphism $\Phi:\A \to \A$ such that
$\Phi \circ (\alpha \times \phi) \Phi^{-1}$ is exterior 
equivalent to $(\alpha' \times \phi').$ Hence, there is a
continuous map $u:\KR \times \KZ \to UM(\A)$ such that
$u_{gh} = u_g (\alpha \times \phi)_g(u_h)$ such that 
$(\alpha' \times \phi')_g = \Ad u_g \circ (\alpha \times \phi)_g.$
Then the suspension $u'(t) = u, \forall t \in \KR$ gives an
exterior equivalence between $\alpha^{\#} \times \phi^{\#}$ and
$\alpha'^{\#} \times \phi'^{\#}.$


\end{proof}

We now remark on some properties of the functor $\R$ and their
connection with Topological T-duality. 
We begin by considering the functor $P_{3,2}$ described above.
Apart from the construction above, there is another reason to look 
at $P_{3,2}(W):$ By Thm.\ (\ref{ThmBrauer}) above,  $\Br_{\KR \times \KZ}(E_p)$ 
has a large continuous part, namely the elements of 
$H^2_M(\KR \times \KZ,C(E_p,\KT)).$ If we quotient 
$\Br_{\KR \times \KZ}(E_p)$ by this continuous part, we obtain 
the group $H^2(E_p,\KZ) \oplus H^3(E_p,\KZ)$ associated to $E_p$. 
Alternatively, we may consider $\im(F_1)$ where $F_1:\Br_{\KR \times \KZ}(E_p)
\to \Br_{\KZ}(E_p)$ is the forgetful map of Def.\ (\ref{DefSetup}).
In either case, for any $CW$-complex $W,$ we have to consider the functor
$P_{3,2}(W)=${\bf$($principal bundle, closed integral two-form, $H$-flux $)$ }
over $W.$ We have the following property of $P_{3,2}:$

\begin{lemma}
Let $T,T_3$ be as defined above.
The map $F: \Br_{\KR \times \KZ}(X) \to \Br_{\KR}(X)$ in Def.\ (\ref{DefSetup})
above gives a natural transformation $F$ between the functors $\R$ and $P_3$
defined above. We have that $F \circ T = T_3 \circ F.$
\end{lemma}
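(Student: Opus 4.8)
The statement asks us to show that the forgetful natural transformation $F\colon \R \to P_3$ intertwines the T-duality maps $T_R$ (on $\R$) and $T_3$ (on $P_3$): that is, $F \circ T_R = T_3 \circ F$ as natural transformations of \textbf{Set}-valued functors on $CW$-complexes.

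The plan is to reduce the equality of natural transformations to an equality of maps on each set $\Br_{\KR\times\KZ}(E_p)$, before taking the quotient by $\Aut(E_p)$; by Theorem \ref{ThmPG} item (4) (applied to the natural transformations assembled from these level maps) it suffices to check commutativity of
\begin{equation}
\CD
\Br_{\KR \times \KZ}(E_p) @>{T_R}>> \Br_{\KR \times \KZ}(E_p)\\
@V{F}VV  @V{F}VV \\
\Br_{\KR}(E_p) @>{}>> \Br_{\KR}(E_p)
\endCD
\nonumber
\end{equation}
where the bottom arrow is the Mathai--Rosenberg crossed-product map computing $T_3$ (under the identification $\Br_{\KR}(E_p) \simeq H^3(E_p,\KZ)$ of Corollary \ref{CorPRMaps} item (1), this is exactly Bunke's $T_3$). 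The key observation is that both $T_R$ and the bottom map are defined by the \emph{same} construction: take a representative dynamical system and pass to the crossed product by the $\KR$-factor. Concretely, by the construction in the proof of Lemma \ref{LemTR}, $T_R$ sends $[\A,\alpha\times\phi]$ to $[\CrPr{\A}{\KR}{\alpha},\alpha^{\#}\times\phi^{\#}]$; applying $F$ (which forgets the $\KZ$-action $\phi^{\#}$) yields $[\CrPr{\A}{\KR}{\alpha},\alpha^{\#}]$. Going the other way, $F$ first sends $[\A,\alpha\times\phi]$ to $[\A,\alpha]$, and then the $\KR$-crossed product gives $[\CrPr{\A}{\KR}{\alpha},\alpha^{\#}]$. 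These agree on the nose, since forgetting $\phi$ and then crossing with $\KR$ along $\alpha$ produces the identical dynamical system as crossing along $\alpha$ and then forgetting $\phi^{\#}$ --- the crossed product $\CrPr{\A}{\KR}{\alpha}$ and its dual $\KR$-action $\alpha^{\#}$ depend only on $(\A,\alpha)$, not on the commuting $\KZ$-action.

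The steps I would carry out, in order: first, recall that $T_3$ on $P_3$ is induced (via Corollary \ref{CorPRMaps} item (1) and Theorem \ref{ThmPG}) from the crossed-product map $\Br_{\KR}(E_p)\to\Br_{\KR}(E_p)$, $[\A,\alpha]\mapsto[\CrPr{\A}{\KR}{\alpha},\alpha^{\#}]$, which is Mathai--Rosenberg T-duality and agrees with Bunke's functorial $T_3$; second, spell out that $F$ at the level of Brauer groups is the forgetful map $[\A,\alpha\times\phi]\mapsto[\A,\alpha]$, which is well-defined on Morita-equivalence classes because a $\KR\times\KZ$-equivariant Morita bimodule restricts to a $\KR$-equivariant one; third, observe that both composites $F\circ T_R$ and $T_3\circ F$, evaluated on a representative $[\A,\alpha\times\phi]$, produce $[\CrPr{\A}{\KR}{\alpha},\alpha^{\#}]$ --- this is immediate from the explicit formulas, since the two operations ``forget $\KZ$'' and ``cross with $\KR$ along the $\KR$-factor'' manifestly commute; fourth, check well-definedness on Morita classes: changing the representative of $y\in\Br_{\KR\times\KZ}(E_p)$ changes $T_R(y)$ only within its Morita class by the suspension-of-cocycle argument already given in the proof of Lemma \ref{LemTR}, and applying the forgetful map preserves this Morita equivalence; fifth, pass to $\Aut(E_p)$-orbits and assemble over $[p]\in H^2(W,\KZ)$ using naturality of the crossed product under pullback (Lemma \ref{LemBrRRZ}) together with Theorem \ref{ThmPG}.

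I do not expect a serious obstacle here; the content is essentially bookkeeping, and the only point requiring a little care is verifying that $F$ is well-defined on Morita-equivalence classes and compatible with the crossed product at the level of equivariant bimodules --- i.e. that if $_{\A}M_{\B}$ implements a $\KR\times\KZ$-equivariant equivalence of $(\A,\alpha\times\phi)$ with $(\B,\beta\times\psi)$, then the induced bimodule $_{\CrPr{\A}{\KR}{\alpha}}(\,\cdot\,)_{\CrPr{\B}{\KR}{\beta}}$ implements a $\KR\times\KZ$-equivariant equivalence of the crossed-product systems, which is the standard functoriality of the crossed product for equivariant Morita equivalences (as in Ref.\ \cite{CKRW}). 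Granting that, the diagram commutes tautologically because forgetting a group action and inducing along a complementary group action are independent operations. The remainder of the argument is a direct application of Theorem \ref{ThmPG} to the natural transformations obtained by quotienting and disjoint-unioning over $H^2(W,\KZ)$.
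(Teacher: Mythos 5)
Your proposal is correct and follows essentially the same route as the paper: both arguments reduce to the observation that $\CrPr{\A}{\KR}{\alpha}$ and the dual action $\alpha^{\#}$ depend only on $(\A,\alpha)$, so forgetting the commuting $\KZ$-action before or after taking the $\KR$-crossed product yields the same $\KR$-dynamical system, whence $F\circ T = T_3\circ F$. Your additional bookkeeping (well-definedness on Morita classes and descent to $\Aut(E_p)$-orbits via Theorem \ref{ThmPG}) only makes explicit what the paper leaves implicit.
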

\begin{proof}
By Part (2) of Cor.\ (\ref{CorPRMaps}) above, the map 
$F:\Br_{\KR \times \KZ}(X) \to \Br_{\KR}(X)$
gives a natural transformation between the functors $\R$
and $P_3.$ By construction, we know that if we `forget' the $\KZ$-action 
on $\A$, the crossed product by the $\KR$-action doesn't change in 
either $H$-flux or topology. That is, if we have two T-dual (in the sense above)
$C^{\ast}$-dynamical systems $(\A,\KR \times \KZ, \alpha)$ and
$(\B, \hat{\KR} \times \KZ, \beta)$ then the dynamical systems
$(\A, \KR, \alpha|_{\KR})$ and $(\B, \hat{\KR}, \beta|_{\hat{\KR}})$ are
T-dual in the sense of Ref.\ \cite{MRCMP}. 
Hence, the following diagram commutes

\begin{equation}
\CD
\R(W) @>T(W)>> \R(W)\\
@VF(W)VV  @VF(W)VV \\
P_3(W) @>{{T_3}(W)}>> P_3(W).
\endCD
\nonumber
\end{equation}
This implies that $F \circ T = T_3 \circ F.$ 
\end{proof}

Now, by the discussion in Cor.\ (\ref{CorPRMaps}) item (3) above, 
there is a natural transformation $\pi:\R \to P_{3,2}.$

In Sec.\ (\ref{SecTriple}) we prove that there is a classifying space
for $P_{3,2}$ denoted $R_{3,2}$ such that $P_{3,2}(W) = [W,R_{3,2}].$
We show that there is a well-defined map $T_{3,2}:R_{3,2} \to R_{3,2}$ 
inducing a natural transformation $T_{3,2}(W):P_{3,2}(W) \to P_{3,2}(W)$. 
We also show that this map induces the following commutative diagram.
\begin{equation}
\CD
T_{3,2}(W) @>{T_{3,2}(W)}>> T_{3,2}(W)\\
@V{\pi(W)}VV  @V{\pi(W)}VV \\
T_3(W) @>{T(W)}>> T_3(W)
\endCD
\label{EqDefT3}
\end{equation}
i.e., $\pi \circ T_{3,2} = T \circ \pi$ as natural transformations.

Given a $C^{\ast}$-dynamical system 
$(\A, \KR \times \KZ, \alpha)$ corresponding to a class
$a \in \Br_{\KR \times \KZ}(E_p),$ adding an element of $B_1$ to
$a$ will not change the Phillips-Raeburn invariant of $\alpha|_{\KZ}$
or the $H$-flux $F(a).$ We showed in
Ref.\ \cite{ATMP} that under T-duality the Phillips-Raeburn invariant of the
$\KZ$-action on $\CrPr{\A}{\KR}{\alpha|_{\KR}}$ associated to $T(a)$ 
only depends on the $H$-flux and the Phillips-Raeburn invariant
of the dynamical system associated to $a.$ It doesn't depend on the
lift of these data to a $\KR \times \KZ$-action on $\A.$ 
Hence, the following diagram commutes
\begin{equation}
\CD
\R(W) @>{T_R(W)}>> \R(W)\\
@V{\pi(W)}VV  @V{\pi(W)}VV \\
P_{3,2}(W) @>{T_{3,2}(W)}>> P_{3,2}(W).
\endCD
\label{EqDefT32}
\end{equation}

Also, by Cor.\ (\ref{CorPRMaps}), item(3), the map 
$\pi(W):\R(W) \to P_{3,2}(W)$ 
induced by the functor $\pi$ in Cor.\ (\ref{CorPRMaps}) item (3), is 
always surjective. Thus we may infer properties 
of $T_R$ from those of $T_{3,2}$ but it
should be clear that they are not the same.
In this paper we mainly study $P_{3,2}$ and $T_{3,2}.$ This is sufficient
to answer the question we raised in Section (1) in the paragraph 
after Lemma (\ref{LemATMP}).

In a recent thesis by Schneider \cite{Schneider} the author defines
the collection of equivalence classes of $C^{\ast}$-dynamical systems 
$(\A, G, \alpha)$ whose spectrum is a principal $G/N$- bundle 
over $W$ (denoted $\text{Dyn}^{+}(W)$).
He then defines a T-duality map induced by the crossed product between
equivalence classes of such dynamical systems. The resulting dynamical
system is $(\CrPr{\A}{\hat{G}}{\alpha^{\#}},\hat{G},\alpha^{\#})$ and
has spectrum a principal $\hat{G}/N^{\perp}$-bundle over $W.$ 

Thus the duality map in Ref.\ \cite{Schneider} would map 
systems of the form $(\A, \KR \times \KZ, \alpha)$ with spectrum
a principal $S^1$-bundle over $W$ to
those of the form $(\A^{\#}, \hat{\KR} \times \KT, \hat{\alpha})$
with a spectrum-fixing $\KT$-action and spectrum a principal
$S^1$-bundle over $W.$

This map is not the same as the
T-duality map we are considering here, since
we map equivalence classes of $C^{\ast}$-dynamical
systems of the form $(\A, \KR \times \KZ, \alpha)$ with spectrum
a principal $S^1$-bundle over $W$ and a spectrum-fixing $\KZ$-action
on $\A$ to
those of the form $(\A^{\#}, \hat{\KR} \times \KZ, \hat{\alpha}).$
Here $\A^{\#}$ has spectrum a principal $S^1$-bundle over $W$ but
there is a {\em spectrum-fixing} $\KZ$-action on $\A^{\#}.$

\section{The Classifying Space of $k$-pairs\label{SecClass}}
In this section, we study the functors $P_2,P_{3,2}$ defined above.
We show that they are representable and possess classifying spaces
$R_2,R_{3,2}$ respectively. We calculate some properties of these
classifying spaces.

Let $\CSet$ be the category of sets with functions as morphisms.
Let $\mathcal{C}$ be the category of unbased CW complexes with
unbased homotopy classes of continuous maps as morphisms.
Let $\mathcal{C}_0$ be the category of CW complexes which
are finite subcomplexes of some fixed countably infinite dimensional
standard simplicial complex.
$(\mathcal{C},\mathcal{C}_0)$ is a homotopy category in the sense of 
Ref.\ \cite{Brown} (see Thm.\ (2.5) in \cite{Brown}).

Let $W$ be a {\em fixed} CW complex. We define a $k$-pair over $W$
to consist of a principal $S^1$-bundle $p:E \to W$ together
with a cohomology class $b \in H^k(W,\KZ)$.
We denote a $k$-pair as $([p],b)$. (Here the space $W$ is understood
from the context as is the value of $k$.)
Note that a `pair' in the sense of Ref.\ \cite{Bunke} would be a 
termed a $3$-pair here.

\begin{definition}
We declare two $k$-pairs ({\bf same} $k$) $([p],b)$ and $([q],b')$  over $W$ 
equivalent if 
\begin{itemize}
\item We are given two principal $S^1$-bundles
$p:E \to W$ and $q:E' \to W$ such that 
\begin{equation}
\CD
E @>\phi>> E'\\
@VpVV  @VqVV \\
W @>id>> W
\endCD
\nonumber
\end{equation}
commutes.

\item We also require that $b' = \phi^{\ast}(b).$
\end{itemize}
\end{definition}

It is clear that the collection of equivalence classes of $k$-pairs
over a fixed space $W$ (denoted $P_k(W)$) is a {\em set}. For all $W$,
we have a distinguished pair consisting of the trivial $S^1$-bundle
over $W$ with the zero class in $H^k(W,\KZ)$. Thus, $P_k(W)$ is actually
a pointed set.

\begin{definition}
Let $W$ and $Y$ be two CW-complexes and let $f:W \to Y$ be a continuous map.
Let $([p],b) \in P_k(Y)$ be represented by
a principal $S^1$-bundle $p:E \to Y$ and a class $b \in H^k(Y,\KZ)$.
We define the pullback of $([p],b)$ via $f,$
denoted $f^{\ast}([p],b),$ to be the following data
\begin{itemize}
\item The unique principal $S^1$-bundle $f^{\ast}p: f^{\ast}E \to W$ such that
the following diagram commutes
\begin{equation}
\CD
f^{\ast}E @>\tilde{\phi_f}>> E\\
@Vf^{\ast}pVV  @VpVV \\
W @>f>> Y.
\endCD
\nonumber
\end{equation}
\item The cohomology class $\phi_f^{\ast}(b)$ in $H^k(f^{\ast}E,\KZ)$.
\end{itemize}
That is,we define $f^{\ast}([p],b) = (f^{\ast}[p], \phi_f^{\ast}(b))$.
\end{definition}

\begin{lemma}
Let $f_0,f_1:W \to Y$ be freely homotopic. For any pair
$([p],b) \in P_2(Y),$ $f_0^{\ast}([p],b)$ is equivalent
to $f_1^{\ast}([p],b)$.
\end{lemma}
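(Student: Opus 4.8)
The plan is to prove homotopy invariance of the pullback of a $2$-pair by the standard device: use the homotopy $F\colon W\times[0,1]\to Y$ as a single map, pull back the pair along it, and then restrict to the two ends, exploiting the fact that the two inclusions $i_0,i_1\colon W\hookrightarrow W\times[0,1]$ are homotopic. First I would recall that for principal $S^1$-bundles this is already classical: $f_0^{\ast}E$ and $f_1^{\ast}E$ are isomorphic as bundles over $W$, since $S^1$-bundles are classified by $[W,BS^1]=[W,\KCP^{\infty}]$ and homotopic maps induce the same homotopy class. So it remains only to track the cohomology class and to check that the bundle isomorphism one obtains is compatible with it, i.e.\ that it carries $\phi_{f_1}^{\ast}(b)$ to $\phi_{f_0}^{\ast}(b)$.

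The key steps, in order, are as follows. Let $F\colon W\times I\to Y$ be a homotopy with $F\circ i_0=f_0$ and $F\circ i_1=f_1$, where $i_t(w)=(w,t)$. Form the pullback bundle $F^{\ast}p\colon F^{\ast}E\to W\times I$ with its induced map $\phi_F\colon F^{\ast}E\to E$, and set $\beta=\phi_F^{\ast}(b)\in H^2(F^{\ast}E,\KZ)$. Then I would observe that the pullback of $(F^{\ast}p,\beta)$ along $i_t\colon W\to W\times I$ is, by functoriality of pullback of bundles and of cohomology (uniqueness of the pullback square), exactly $f_t^{\ast}([p],b)$ for $t=0,1$; this uses the composition $\phi_{f_t}=\phi_F\circ\widetilde{(i_t)}$. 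Next, since $i_0$ and $i_1$ are homotopic maps $W\to W\times I$ (the obvious affine homotopy $(w,s,t)\mapsto(w,(1-t)s)$... more precisely $(w,t)$ itself realizes it), it suffices to prove the statement in the special case $Y=W\times I$, $f_t=i_t$; equivalently, it suffices to show that for any $2$-pair $([q],c)$ over $W\times I$, the two restrictions $i_0^{\ast}([q],c)$ and $i_1^{\ast}([q],c)$ are equivalent. For the bundle part, $W\times I$ deformation retracts onto $W\times\{0\}$, so $q$ is isomorphic to $(\text{pr}_W)^{\ast}$ of its restriction to $W\times\{0\}$, and hence $i_0^{\ast}E'\cong i_1^{\ast}E'$ canonically via a bundle isomorphism $\psi\colon i_0^{\ast}E'\to i_1^{\ast}E'$ covering $\id_W$. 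For the cohomology part, one checks that this same $\psi$ satisfies $\psi^{\ast}(i_1^{\ast}c)=i_0^{\ast}c$: both sides are pullbacks of $c$ along homotopic maps $i_0^{\ast}E'\to E'$ (namely $\widetilde{i_0}$ and $\widetilde{i_1}\circ\psi$), which induce the same map on $H^2(-,\KZ)$.

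The main obstacle I anticipate is precisely the last compatibility check: one must verify that the bundle isomorphism $\psi$ extracted from the deformation retraction is the \emph{same} map over which the two cohomology classes agree, i.e.\ that the square
\begin{equation}
\CD
i_0^{\ast}E' @>\psi>> i_1^{\ast}E'\\
@V\widetilde{i_0}VV @VV\widetilde{i_1}V\\
E' @>\id>> E'
\endCD
\nonumber
\end{equation}
commutes up to homotopy of maps into $E'$. This follows because both composites $i_0^{\ast}E'\to E'$ cover the homotopic maps $i_0,i_1\colon W\to W\times I$ and a bundle map is determined up to fiber-homotopy by the map it covers on the base, but it should be spelled out carefully, since it is the crux of why the cohomology class transports correctly rather than being merely abstractly isomorphic. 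Once that is in place, combining the bundle isomorphism and the cohomology identity gives an equivalence of $2$-pairs $i_0^{\ast}([q],c)\simeq i_1^{\ast}([q],c)$, and pulling back along $F$ as above yields $f_0^{\ast}([p],b)\simeq f_1^{\ast}([p],b)$, completing the proof. I note that nothing here is special to $k=2$; the identical argument works for $k$-pairs for every $k$, which is presumably why the statement is phrased for general pullbacks even though it is stated for $P_2$.
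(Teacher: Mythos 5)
Your strategy coincides with the paper's: the paper simply cites Hatcher's vector-bundle notes (Cor.\ 1.8 of Ref.\ \cite{VBKT}) for the existence of a bundle isomorphism $\psi$ induced by the homotopy together with the compatibility $\psi^{\ast}\circ\widetilde{f_1}^{\ast}=\widetilde{f_0}^{\ast}$ on cohomology, and your argument is essentially the proof of that citation unpacked via $F\colon W\times I\to Y$ and the two end inclusions. So the route is the same; the only issue is the justification you offer for the step you yourself flag as the crux.

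That justification --- ``a bundle map is determined up to fiber-homotopy by the map it covers on the base'' --- is false as a general principle. For the trivial bundle $W\times S^1\to W$, the bundle self-maps covering $\id_W$ are $(w,z)\mapsto(w,g(w)z)$ for $g\colon W\to S^1$, and two such maps are homotopic (even as unrestricted maps) only when the classes $[g]\in H^1(W,\KZ)$ agree; e.g.\ for $W=S^1$ the map $(w,z)\mapsto(w,wz)$ of the torus covers $\id_{S^1}$ but is not homotopic to the identity, since it acts nontrivially on $H^1$. Hence for an \emph{arbitrary} isomorphism $\psi\colon i_0^{\ast}E'\to i_1^{\ast}E'$ covering $\id_W$ one cannot conclude $\widetilde{i_1}\circ\psi\simeq\widetilde{i_0}$, and the transported cohomology class could genuinely differ. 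What rescues the argument is that the \emph{particular} $\psi$ produced by the covering homotopy theorem comes packaged with the required homotopy: lifting the affine homotopy $R_t(w,s)=(w,(1-t)s+t)$ from $\id_{W\times I}$ to the retraction onto $W\times\{1\}$ yields $\tilde R_t\colon E'\to E'$ starting at the identity, and restricting $\tilde R_t$ to $E'|_{W\times\{0\}}$ is precisely a homotopy from $\widetilde{i_0}$ to $\widetilde{i_1}\circ\psi$ with $\psi=\tilde R_1|_{E'|_{W\times\{0\}}}$. Replacing your general principle by this explicit homotopy closes the gap, and the rest of your proof (including the remark that nothing is special to $k=2$) is correct and matches the paper's intent.
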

\begin{proof}
Let $p:E \to Y$ be a principal $S^1$-bundle. 
We have pullback squares for $i=0,1$
\begin{equation}
\CD
f_i^{\ast}E @>\tilde{\phi_i}>> E\\
@Vf_i^{\ast}pVV  @VpVV \\
W @>f_i>> Y.
\endCD
\nonumber
\end{equation}

Then, by Ref.\ \cite{VBKT},
Cor.\ (1.8), the pullback bundles $f_0^{\ast}p:f_0^{\ast}E \to W$ and
$f_1^{\ast}:f_1^{\ast}E \to W$ are isomorphic. Further, by the same
lemma, this isomorphism is implemented by a map 
$\psi:f_0^{\ast}E \to f_1^{\ast}E$. This map induces isomorphisms
on the cohomology groups such that 
$ \psi \circ f_0^{\ast} = f_1^{\ast}. $
As a result, by the above definition, 
$f_0^{\ast}([p],b) = f_1^{\ast}([p],b)$.
\end{proof}

Hence, $P_k(W)$ is a pointed set depending only on the homotopy type
of $W$. Given a map $f:W \to Y,$ define $P_k(f): P_k(Y) \to P_k(W)$ to
be the map induced by pullback of pairs. 
It is clear that $P_k(1) = \operatorname{Id}$. 
(This is just the condition that two pairs be equivalent).
Hence, $P_k$ extends
to a functor (also denoted $P_k$) $P_k: \mathcal{C} \to \CSet$.

\begin{theorem}
For every $k,$ the functor $P_k$ above satisfies the conditions
of the Brown Representability 
Theorem.  Hence, for every $k$, there exists a classifying space $R_k$ for 
$P_k$.
\end{theorem}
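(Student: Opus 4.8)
The plan is to verify the three conditions of Brown Representability (in the form given in Brown, Ref.~\cite{Brown}, for the homotopy category $(\mathcal{C},\mathcal{C}_0)$): that $P_k$ is a contravariant functor to pointed sets depending only on homotopy type (already established in the preceding lemmas), that $P_k$ carries coproducts (disjoint unions) to products, and that $P_k$ satisfies the Mayer--Vietoris / pushout gluing axiom. The functoriality and homotopy-invariance are done, so the work is entirely in the last two axioms, both of which I would prove by unwinding the definition of a $k$-pair as a principal $S^1$-bundle plus a class in $H^k$ of the \emph{total space} and invoking the corresponding classical facts for bundles and for singular cohomology separately.

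First I would check the wedge/coproduct axiom: given CW-complexes $W_\alpha$ with disjoint union $W=\coprod_\alpha W_\alpha$, a principal $S^1$-bundle on $W$ is the same as a family of principal $S^1$-bundles $E_\alpha \to W_\alpha$, and $H^k(E,\KZ) \cong \prod_\alpha H^k(E_\alpha,\KZ)$; moreover an isomorphism of $k$-pairs over $W$ is precisely a family of isomorphisms over the $W_\alpha$. Hence the natural map $P_k(W) \to \prod_\alpha P_k(W_\alpha)$ is a bijection. Second, and more substantively, I would check the gluing axiom: if $W = W_1 \cup W_2$ with $W_0 = W_1 \cap W_2$ (a CW triad, which up to homotopy is the general pushout case), and we are given $k$-pairs over $W_1$ and $W_2$ whose restrictions to $W_0$ are equivalent, I must produce a $k$-pair over $W$ restricting to each. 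On the bundle level this is the standard statement that principal $S^1$-bundles satisfy the Mayer--Vietoris gluing (they are classified by $[{-},\KCP^\infty]=[{-},B S^1]$, which is half-exact); so one first glues $E_1 \to W_1$ and $E_2 \to W_2$ along a chosen bundle isomorphism over $E_0$ to get $p: E \to W$. Then $E = E_1 \cup_{E_0} E_2$ is itself a pushout of CW-complexes, so the Mayer--Vietoris sequence for $(E;E_1,E_2)$ shows $H^k(E,\KZ) \to H^k(E_1,\KZ) \times_{H^k(E_0,\KZ)} H^k(E_2,\KZ)$ is surjective; choosing a class mapping to the given pair $(b_1,b_2)$ finishes the construction.

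The main obstacle is the gluing axiom, and within it the subtlety that the classifying data live on the total space $E$ rather than on the base $W$: the chosen bundle isomorphism over $E_0$ used to glue $E_1$ and $E_2$ is not unique, and one must check that the resulting $k$-pair over $W$, as an element of $P_k(W)$, restricts to the prescribed classes in $P_k(W_1)$ and $P_k(W_2)$ --- i.e.\ that the construction is well defined modulo the equivalence relation. This is handled by noting that any two gluing isomorphisms over $E_0$ differ by an automorphism of the restricted bundle, which extends (after possibly modifying by a homotopy, using that $S^1$ is a topological group and $W_0 \hookrightarrow W_i$ is a cofibration) to an automorphism realizing the required equivalence; the non-uniqueness of the lift of $(b_1,b_2)$ along the Mayer--Vietoris surjection is likewise absorbed into the equivalence relation since the indeterminacy lies in the image of $H^{k-1}(E_0,\KZ)$ and can be shifted by a bundle automorphism supported near the gluing region. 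Granting these compatibility checks, Brown Representability applies verbatim and yields the classifying space $R_k$ with $P_k(W) = [W,R_k]$, completing the proof.
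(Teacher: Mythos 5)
Your proposal is correct and follows essentially the same route as the paper: verifying the coproduct axiom via the product decomposition of bundles and cohomology over a disjoint union, and the gluing axiom by first gluing the principal $S^1$-bundles over the pushout and then using exactness of the Mayer--Vietoris sequence on the total spaces to lift the pair of classes $(b_1,b_2)$ to a class on the glued total space. Your additional care about the non-uniqueness of the gluing isomorphism and of the Mayer--Vietoris lift is a point the paper passes over silently, but it does not change the argument.
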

\begin{proof}
There are two conditions we need to prove.
\begin{enumerate}
\item Consider an arbitrary family $\{W_{\mu}\}, \mu \in I$ of 
of objects in $\mathcal{C}$. 
Let $Y = \bigsqcup_{\mu \in I} W_{\mu}$.
Let $h_{\mu}: W_{\mu} \to \bigsqcup_{\mu \in I} W_{\mu}$ 
be the inclusion maps.

We have a pullback square (for every $\mu \in I$)
\begin{equation}
\CD
h_{\mu}^{\ast}E @>\tilde{h}_{\mu}>> E\\
@Vp_{\mu}VV  @VpVV \\
W_{\mu} @>h_{\mu}>> Y.
\endCD
\nonumber
\end{equation}
Here $p_{\mu} = h_{\mu}^{\ast}p$.
Since $H^2(W,\KZ) \simeq \prod_{\mu \in I} H^2(W_{\mu},\KZ)$,
we have that $[p] = ([p_{\mu}]), \mu \in I$.

Let $h_{\mu}^{\ast}E = E_{\mu},$ then,
we also have that $E = \bigsqcup_{\mu \in I} E_{\mu}$ and
$H^k(E,\KZ) \simeq \prod_{\mu \in I} H^k(E_{\mu},\KZ)$.
Hence, every class $b \in H^k(E,\KZ)$ may be written as
$(b_{\mu}), \mu \in I$ with $b_{\mu} = \tilde{h}_{\mu}^{\ast}(b)$.
Hence, we have an isomorphism
$$ \Pi_{\mu} P(h_{\mu}): P(\bigsqcup_{\mu} W_{\mu}) \approx 
\Pi_{\mu} P(W_{\mu}). $$

\item Suppose we are given CW complexes $A,W_1,W_2$ and
continuous maps $f_i: A \to W_i, g_i: W_i \to Z, 
i=1,2$ such that
\begin{equation}
\CD
A @>f_1>> W_1\\
@Vf_2VV  @Vg_1VV \\
W_2 @>g_2>> Z
\endCD
\nonumber
\end{equation}
commutes up to homotopy and is a pushout square in $\mathcal{C}.$ 
We may take $f_i$ to be inclusions into the $W_i$ and
$Z$ the result of gluing $W_1$ to $W_2$ along $A$.
Suppose $u_i \in P(W_i)$ satisfy $P(f_1)u_1 = P(f_2)u_2$.
For $i=1,2,$ let $u_i$ correspond to the pair $([p_i],b_i)$ over $W_i,$
where $p_i:E_i \to W_i$ are principal $S^1$-bundles.
Then, since $P(f_1)u_1 = P(f_2)u_2,$ 
$f_1^{\ast}E_i \simeq f_2^{\ast}E_2$. This implies that
the restrictions of $f_i^{\ast}E_i$ to $A$ are the same.
Hence, these two bundles may be glued into a unique 
bundle $p:E \to Z$. Note that $G_i = f_i^{\ast}E_i \subset E, i = 1,2$
and $G_1 \cup G_2 = E$.
We have a pullback square
\begin{equation}
\CD
G_i @>\tilde{g}_i>> E\\
@Vp_iVV  @VpVV \\
W_i @>g_i>> Z
\endCD
\nonumber
\end{equation}
 
By the Mayer-Vietoris theorem, we have
$$
H^k(E,\KZ) \to H^k(G_1,\KZ) \oplus H^k(G_2,\KZ) \to H^k(G_1 \cap G_2,\KZ)
$$
Now $f_1^{\ast}(b_1) = f_2^{\ast}(b_2)$ and so the image of
$(b_1,b_2)$ via the second map above is zero. Hence, by exactness,
there is an element in $c \in H^k(E,\KZ)$ such that 
$\tilde{g}_i(c) = b_i, i =1,2$.
Thus, we define an element $v \in P(Z)$ 
by $v = ([p],c)$. It is clear that $P(g_i)v = u_i, i = 1,2$.
\end{enumerate}

As a result, for every $k$, there is a CW complex $R_k$ such that
isomorphism classes of $k$-pairs over a space
$W$ correspond to {\em unbased} homotopy classes of maps
from $W \to R_k$.     
\end{proof}

Similarly we may define a $k_1,k_2,\ldots,k_n$-tuple ($k_i \in \KN)$
over a space $W$ to consist\footnote{Obviously the ordering of the
$k_i$ is irrelevant.}
of a principal $S^1$-bundle $p:E \to W$ together
with cohomology classes $b_1, \ldots, b_n$ such that $b_i \in H^{k_i}(E,\KZ).$ 
Exactly as above we may define the notion of equivalent tuples and
a set valued functor $P_{k_1,k_2,\ldots,k_n}(W).$
As above, such a functor is representable and has a representation
space denoted $R_{k_1,k_2,\ldots,k_n}.$ Note that there are 
natural maps $R_{k_1,\ldots,k_n} \to K(\KZ,2)$ and
$R_{k_1,\ldots,k_n} \to K(\KZ,k_i-1)$ given by sending
$([p],b_1,\ldots, b_i) \to [p]$ and 
$([p],b_1,\ldots,b_i) \to p_!(b_i)$ respectively.
In addition, the canonical $S^1$-bundle $U$ over $K(\KZ,2)$ defines 
a unique pair $(U,0)$ over $R_k$ for every $k$ and hence the natural map
$R_k \to K(\KZ,2)$ is naturally split. This implies that 
the cohomology ring of $R_k$ contains $\KZ[\alpha], \alpha \in H^2(R_k,\KZ).$

Given a $3,2$-tuple $([p],b,H)$ over $W,$ we obtain a unique 
$2$-pair $([p],b)$ and $3$-pair $([p],H)$ over $W.$ Similarly, given
a $2$-pair $([p],b)$ and a $3$-pair $([p],H)$ ({\em same } $[p]$)
over $W,$ we obtain a unique triple $([p],b,H)$ over $W.$ 
Thus, $R_{3,2}$ is a fiber product $R_3 \underset{K(\KZ,2)}{\times} R_2.$ 
Similarly 
$R_{k_1,\ldots,k_n} \simeq R_{k_1} \underset{K(\KZ,2)}{\times}
\ldots \underset{K(\KZ,2)}{\times} R_{k_n}.$
There are natural fibrations $q_{k_i}:R_{k_1,\ldots,k_n} \to R_{k_i}.$

For the sake of completeness we note the following 
\begin{lemma}
Suppose $W$ is $k$-connected, $k \geq 2.$ Then,
$P_{k+1}(W) \simeq H^{k+1}(W,\KZ).$
\end{lemma}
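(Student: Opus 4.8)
The plan is to use the connectivity hypothesis to eliminate all bundle-theoretic complexity and reduce the statement to a one-line Gysin (or Künneth) computation, after which one only has to check that the equivalence relation defining $P_{k+1}$ identifies nothing beyond equality of cohomology classes.

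First I would observe that since $W$ is $k$-connected with $k \geq 2$, Hurewicz together with the universal coefficient theorem gives $H^i(W,\KZ) = 0$ for $1 \le i \le k$; in particular $H^2(W,\KZ) = 0$. As principal $S^1$-bundles over a $CW$-complex are classified by $[W, K(\KZ,2)] \cong H^2(W,\KZ)$, every such bundle over $W$ is isomorphic to the trivial one $p : E = W \times S^1 \to W$. Consequently every $(k+1)$-pair over $W$ is equivalent to one of the form $([p], b)$ with $p$ trivial and $b \in H^{k+1}(W \times S^1, \KZ)$, so that $P_{k+1}(W)$ is the quotient of $H^{k+1}(W\times S^1,\KZ)$ by the natural action of the bundle automorphism group $\Aut(p)$.

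Next I would compute $H^{k+1}(W\times S^1,\KZ)$. The Gysin sequence of $p$ (whose Euler class vanishes) --- equivalently the Künneth theorem --- yields a short exact sequence $0 \to H^{k+1}(W,\KZ) \xrightarrow{p^{*}} H^{k+1}(W\times S^1,\KZ) \to H^{k}(W,\KZ) \to 0$, and $H^{k}(W,\KZ) = 0$ by $k$-connectivity, so $p^{*}$ is an isomorphism. Finally I would verify that $\Aut(p)$ acts trivially on $H^{k+1}(W\times S^1,\KZ)$: an automorphism of $W\times S^1$ over $W$ is a gauge transformation $(w,z)\mapsto (w,\gamma(w)z)$ with $\gamma \in \operatorname{Map}(W,S^1)$, and a homotopy of $\gamma$'s induces a homotopy of the corresponding automorphisms, so the induced action on cohomology factors through $\pi_0\operatorname{Map}(W,S^1) = [W,S^1] = H^1(W,\KZ) = 0$; hence every such automorphism is homotopic to the identity and acts trivially. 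Combining the three steps, $P_{k+1}(W)$ is in bijection with $H^{k+1}(W\times S^1,\KZ) \cong H^{k+1}(W,\KZ)$, with basepoint going to $0$.

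I do not expect a genuine obstacle: the only point requiring a moment's care is the last one, i.e. that the equivalence relation defining $P_{k+1}$ collapses to equality of classes, which is precisely the assertion that $\Aut(p)$ acts trivially on the relevant cohomology group. One could instead argue through the classifying space $R_{k+1}$, using the split fibration $R_{k+1} \to K(\KZ,2)$ and the fact that a $k$-connected $W$ maps trivially into the $K(\KZ,2)$ factor, but the direct computation above seems the shortest route.
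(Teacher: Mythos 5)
Your proof is correct and follows essentially the same route as the paper: $k$-connectivity kills $H^2(W,\KZ)$ so every bundle is trivial, and then $H^{k+1}(W\times S^1,\KZ)\simeq H^{k+1}(W,\KZ)$ by K\"unneth since $H^k(W,\KZ)=0$. The only difference is that you explicitly check that $\Aut(W\times S^1\to W)$ acts trivially on cohomology via $[W,S^1]=H^1(W,\KZ)=0$ --- a step the paper's two-line proof leaves implicit but which is genuinely needed given the definition of equivalence of $k$-pairs, so this is a welcome addition rather than a divergence.
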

\begin{proof}
Since $W$ is at least $2$-connected, all principal $S^1$-bundles
on $W$ are trivial. Further, $H^{k+1}(W \times S^1,\KZ) 
\simeq H^{k+1}(W,\KZ)$ and hence the result.
\end{proof}

We will only consider $2$-tuples, $3$-tuples and $3,2$-tuples in this
paper. We will also consider $P_2,P_3,P_{3,2}$ and the corresponding
classifying spaces $R_2, R_3, R_{3,2}.$

Bunke et al.\ \cite{Bunke} have considered the case $k=3$.
We denote their classifying space $R_3$ here.
For the remainder of this section and the next we work with $k=2$. We
abbreviate $2$-pair to `pair'.

A priori, $R_2$ is an unbased CW complex. We now arbitrarily pick a
basepoint $r_0$ in $R_2$. 

\begin{lemma}
Let $W$ be {\em any} CW complex. Pick a basepoint $x_0 \in W.$
Any unbased map $f:W \to R_2$ may be freely homotoped to a
based map $g:(W,x_0) \to (R_2,r_0).$
\label{R2BpLemma}
\end{lemma}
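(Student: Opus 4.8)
The plan is to show that the obstruction to homotoping $f$ to a based map lives in a group that is trivial, using the fact that $R_2$ is a classifying space whose relevant homotopy is controlled by the split map $R_2 \to K(\KZ,2)$. First I would recall the standard fact that an unbased map $f:W \to R_2$ can be deformed to a based map precisely when the image of the basepoint $f(x_0)$ can be connected to $r_0$ by a path, and then the resulting based map is well-defined up to the action of $\pi_1(R_2,r_0)$; so the real content is (a) that $R_2$ is path-connected, and (b) that nothing goes wrong with the free-versus-based distinction, which is automatic once $R_2$ is path-connected and we only care about the underlying statement that \emph{some} based representative exists.

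The key step is therefore to verify that $R_2$ is path-connected. I would argue this representably: $\pi_0(R_2)$ (as a set with the correspondence $[S^0, R_2] \leftrightarrow P_2(S^0)$, or more carefully $[\text{pt},R_2] = P_2(\text{pt})$) is the set of isomorphism classes of $2$-pairs over a point. But over a point the only principal $S^1$-bundle is the trivial one $S^1 \to \text{pt}$, and $H^2(S^1,\KZ) = 0$, so there is a unique $2$-pair over a point up to equivalence, namely the distinguished one. Hence $P_2(\text{pt})$ is a single point, so any two points of $R_2$ lie in the same path component, i.e. $R_2$ is path-connected. (Alternatively one can invoke the fiber product description $R_2 = R_2$, or note directly from the Brown representability construction that $R_2$ may be taken connected since $P_2$ is a functor to pointed sets with a one-element value on the point.)

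Given path-connectedness, the conclusion follows from a general homotopy-theoretic fact about CW complexes: if $Y$ is path-connected and $f:W \to Y$ is any continuous map with $W$ a CW complex and $x_0 \in W$, choose a path $\gamma$ in $Y$ from $f(x_0)$ to the chosen basepoint $r_0$; by the homotopy extension property of the cofibration $\{x_0\} \hookrightarrow W$, the homotopy of the one-point subspace $\{x_0\}$ along $\gamma$ extends to a homotopy of $f$ to a map $g$ with $g(x_0) = r_0$. This $g$ is the desired based map, and it is freely homotopic to $f$ by construction. I would write this out in a line or two, citing the homotopy extension property for CW pairs.

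The main obstacle — really the only thing requiring care — is making precise the identification $\pi_0(R_2) \cong P_2(\text{pt})$ so that the computation $P_2(\text{pt}) = \{\ast\}$ genuinely forces connectedness; the Brown representability theorem gives $[W,R_2] \cong P_2(W)$ for all CW complexes $W$, so taking $W = \text{pt}$ (or $W = S^0$ with the observation that $P_2(S^0) = P_2(\text{pt}) \times P_2(\text{pt})$ and the restriction maps are bijections) does the job cleanly. Everything else is a routine application of the homotopy extension property, so I would keep that part brief.
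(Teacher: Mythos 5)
Your proposal is correct and follows essentially the same strategy as the paper: establish that $R_2$ is path-connected, choose a path from $f(x_0)$ to $r_0$, and extend it to a free homotopy of $f$ using the homotopy extension property of the CW pair $(W,\{x_0\})$. The second half of your argument is identical to the paper's.

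The one place you genuinely diverge is in how connectedness of $R_2$ is justified. The paper appeals to ``the Lemma that follows,'' namely the fibration $K(\KZ,2) \to R_2 \to K(\KZ,2)\times K(\KZ,1)$, and concludes that $R_2$ is a fibration of a connected space over a connected base, hence connected. You instead compute $P_2(\mathrm{pt})$ directly: the only principal $S^1$-bundle over a point is $S^1 \to \mathrm{pt}$, and $H^2(S^1,\KZ)=0$, so $P_2(\mathrm{pt})$ is a singleton, whence $[\mathrm{pt},R_2]=\pi_0(R_2)$ is a singleton by representability. Your route is self-contained and arguably preferable here, since the paper's cited fibration lemma is proved \emph{after} this one and its proof invokes this very lemma to replace unbased maps by based ones; your computation over the point breaks that forward dependence. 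Beyond that, both arguments buy the same thing, and your handling of the homotopy extension step is exactly what the paper intends.
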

\begin{proof}
Suppose $W$ was {\em any} CW complex,
and $f:W \to R_2$ an unbased map. 
By the Lemma that follows, we know that $R_2$ is a 
fibration of a connected space over
a connected base space. Hence $R_2$ is connected. 
Pick a basepoint $x_0 \in W$. Pick a path $q: I \to R_2$ connecting
$f(x_0)$ to $r_0$. Extend the data $f,q$ to a free homotopy
$H:W \times I \to R_2$. Then, $g = H(1,.):W \to R_2$ is 
map such that $g(x_0) = r_0$. The map $g$ classifies the same pair that
$f$ does, since $R_2$ is an {\em unbased classifying space}.
\end{proof}

\begin{lemma}
There is a fibration $K(\KZ,2) \to R_2 \to K(\KZ,2) \times K(\KZ,1)$.
\label{R2FibLemma}
\end{lemma}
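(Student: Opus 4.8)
The plan is to construct the fibration $K(\KZ,2) \to R_2 \to K(\KZ,2) \times K(\KZ,1)$ by exhibiting the base map explicitly and identifying the fiber via the universal $2$-pair. First I would observe that over $R_2$ there sits a canonical $2$-pair $(U_{R_2}, b_{R_2})$, namely the one corresponding to the identity map $R_2 \to R_2$. The principal $S^1$-bundle $U_{R_2} \to R_2$ is classified by a map $\chi : R_2 \to K(\KZ,2)$, and the class $b_{R_2} \in H^2(U_{R_2},\KZ)$ has a Gysin pushforward $p_!(b_{R_2}) \in H^1(R_2,\KZ)$, classified by a map $\mu : R_2 \to K(\KZ,1)$. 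Together these give a map $(\chi,\mu) : R_2 \to K(\KZ,2) \times K(\KZ,1)$; this is the purported total map of the fibration (using the natural maps $R_{k_1,\ldots,k_n} \to K(\KZ,2)$ and $\to K(\KZ,k_i - 1)$ already noted in the excerpt, specialized to $R_2$).

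The next step is to identify the homotopy fiber. I would argue on the level of functors: for a $CW$-complex $W$, the set $[W, K(\KZ,2)\times K(\KZ,1)]$ is $H^2(W,\KZ)\times H^1(W,\KZ)$, and a lift of a class $([p],c)$ along $(\chi,\mu)$ to a map $W\to R_2$ amounts to a $2$-pair $([p], b)$ over $W$ with prescribed bundle class $[p]$ and prescribed pushforward $p_!(b) = c$. By the Gysin sequence for $p : E \to W$, the homomorphism $p^{\ast} : H^2(W,\KZ) \to H^2(E,\KZ)$ sits in $H^0(W,\KZ) \xrightarrow{\cup e} H^2(W,\KZ) \xrightarrow{p^{\ast}} H^2(E,\KZ) \xrightarrow{p_!} H^1(W,\KZ)$, so the set of $b \in H^2(E,\KZ)$ with $p_!(b) = c$ is, when nonempty, a torsor under $p^{\ast} H^2(W,\KZ) \cong H^2(W,\KZ)/(\text{image of } \cup e)$. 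Over the (contractible) space classifying a fixed choice, the fiber of $(\chi,\mu)$ therefore represents the functor $W \mapsto p^{\ast}H^2(W,\KZ)$ restricted to the point, but globally — as $[p]$ varies — the relevant fiberwise classes are exactly $H^2$-classes on the total space of the universal bundle modulo the pullback indeterminacy, giving a $K(\KZ,2)$ up to the twisting by the Euler class. The cleanest way to make this rigorous is to note that $R_2$ is constructed (as in the excerpt's Brown-representability argument and in parallel with Bunke--Schick) so that it fits into the pullback diagram $R_{3,2} \simeq R_3 \times_{K(\KZ,2)} R_2$; by the same reasoning the universal $2$-pair gives a fibration sequence in which the fiber over a point of $K(\KZ,2)\times K(\KZ,1)$ is the space of $H^2$-classes on a fixed $S^1$-bundle with fixed pushforward, which is a $K(\KZ,2)$.

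Concretely, I would finish as follows. Let $F$ denote the homotopy fiber of $(\chi,\mu) : R_2 \to K(\KZ,2)\times K(\KZ,1)$ over the basepoint. A point of $F$ classifies, up to homotopy, a $2$-pair over a point together with a nullhomotopy of its image in the base — i.e. trivializations of the bundle class and of the pushforward. Since the base-point component corresponds to the trivial $S^1$-bundle $W\times S^1 \to W$ (over a point), and $H^2(\mathrm{pt}\times S^1,\KZ) = 0$, one sees $\pi_n(F)$ by evaluating on spheres: $\pi_n(F) = [\,S^n, F\,]_\ast$ fits in the long exact sequence $\cdots \to \pi_n(R_2) \to \pi_n(K(\KZ,2)\times K(\KZ,1)) \to \pi_{n-1}(F) \to \cdots$, and using $P_2(S^n) = H^2(S^n,\KZ)\oplus H^2(S^n\times S^1,\KZ)/\!\sim$ together with $R_2 \to K(\KZ,2)$ being split (from the excerpt: the pair $(U,0)$ splits $R_k \to K(\KZ,2)$), one computes $\pi_n(F) = \KZ$ for $n=2$ and $0$ otherwise, so $F \simeq K(\KZ,2)$. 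The main obstacle I anticipate is not the homotopy-group bookkeeping but showing that $(\chi,\mu)$ is genuinely a fibration (or replacing it by one) and that the fiber inclusion is the expected map $K(\KZ,2) \to R_2$ — in particular controlling the twisting: a priori the fibration could be nontrivial as a fibration (which is fine and expected, since $R_2$ is not a product), and one must be careful that the Gysin indeterminacy $p^{\ast}H^2(W,\KZ)$ really is all of $H^2$ of the fiber rather than a quotient, which is exactly why the fiber is $K(\KZ,2)$ and not something smaller. I would handle this by checking the lifting property against the generating cofibrations using the pushout/coproduct properties of $P_2$ established in the Brown-representability proof, reducing everything to the torsor statement from the Gysin sequence.
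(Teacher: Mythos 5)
The core of your argument is the paper's proof: the base map is $([p],b)\mapsto([p],p_!(b))$, realized as a map $R_2\to K(\KZ,2)\times K(\KZ,1)$ via the universal $2$-pair, and the homotopy fiber is identified functorially as the classifying space of the pairs with $[p]=0$ and $p_!(b)=0$; since the bundle is then trivial, $H^2(W\times S^1,\KZ)\cong H^2(W,\KZ)\oplus H^1(W,\KZ)$ with $p_!$ the projection onto the second summand, so such pairs are exactly classes $p^{\ast}(a)$, $a\in H^2(W,\KZ)$, and the fiber represents $W\mapsto H^2(W,\KZ)$, i.e.\ is a $K(\KZ,2)$. Your Gysin-torsor observation is the same computation, and over the basepoint there is no quotient to worry about because the Euler class vanishes there.

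However, your proposed ``concrete finish'' via homotopy groups does not work and should be dropped. Computing $\pi_n(F)$ from the long exact sequence of the fibration presupposes knowledge of $\pi_n(R_2)$, which is exactly what the paper extracts \emph{from} this fibration in the next lemma, so the argument is circular as stated. Moreover, the substitute you offer --- reading off $\pi_n(R_2)$ from $P_2(S^n)$ --- is incorrect: $R_2$ is an unbased classifying space, so $P_2(S^n)$ is only the quotient of $\pi_n(R_2)$ by the $\pi_1(R_2)$-action, and that action is nontrivial (indeed the paper shows $P_2(S^2)$ is not even a group). None of this is needed: the functorial identification of the fiber in your second paragraph already completes the proof, exactly as in the paper.
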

\begin{proof}
Given a pair $([p],b)$ over $W$, we obtain two natural
cohomology classes $[p] \in H^2(W,\KZ)$ and $p_!(b) \in H^1(W,\KZ)$.
As a result, there is a natural map 
$\phi \times \psi: R_2 \to K(\KZ,2) \times K(\KZ,1)$.
Given a pair $([p],b)$ over any space $W$,
classified by $f: W \to R_2$, the map 
$f \mapsto \phi \circ f$ corresponds
to the map $([p],b) \mapsto [p]$. Similarly,
$f \mapsto \psi \circ f$ corresponds to the map
$([p],b) \mapsto p_!(b)$. 
We pick a basepoint in $K(\KZ,2) \times K(\KZ,1)$ such
that $\phi \times \psi$ is a based map.
Suppose $f:W \to R_2$ classified a pair $([p],b)$ over $W$.
Pick a basepoint $x_0 \in W.$ By Lemma (\ref{R2BpLemma}) above,
$f$ may be freely homotoped to a based map $g:(W,x_0) \to (R_2,r_0)$.
Suppose $g$ was in the homotopy fiber of $\phi \times \psi$.
Then, we would obtain a pair $([p],b)$ over $W$ such that 
$p_!(b) = 0, [p] = 0$. This would correspond to the
trivial bundle $W \times S^1 \to W$ equipped with the
cohomology class $1 \times a, a \in H^2(W,\KZ)$. Hence we would
get a natural based map $W \to K(\KZ,2)$. 
Conversely, given a class $a$ in $H^2(W,\KZ)$, we could obtain
a pair $(0, 1 \times a)$ over $W$ which would have $[p]=0$ and
$p_!(1 \times a) = 0$. By Lemma (\ref{R2BpLemma}), this pair would
be classified by a based map $g:(W,x_0) \to (R_2,r_0)$. Obviously,
$(\phi \times \psi)\circ g$ would be nullhomotopic.

Thus, the homotopy fiber of $\phi \times \psi$ is $K(\KZ,2)$.
\end{proof}

\begin{lemma}
The homotopy groups of $R_2$ are as follows
\begin{itemize}
\item $\pi_1(R_2) = \KZ$, 
\item $\pi_2(R_2) \simeq \KZ^2$,
\item $\pi_i(R_2) = 0, i > 2$.
\end{itemize}
\end{lemma}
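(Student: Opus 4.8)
The plan is to read off the homotopy groups directly from the fibration of Lemma~(\ref{R2FibLemma}), namely $K(\KZ,2) \to R_2 \to K(\KZ,2) \times K(\KZ,1)$, using the associated long exact sequence of homotopy groups. First I would record the homotopy groups of the three spaces involved: the fiber $F = K(\KZ,2)$ has $\pi_2(F) = \KZ$ and $\pi_i(F) = 0$ for $i \neq 2$; the base $B = K(\KZ,2) \times K(\KZ,1)$ has $\pi_1(B) = \KZ$ (from the $K(\KZ,1)$ factor), $\pi_2(B) = \KZ$ (from the $K(\KZ,2)$ factor), and $\pi_i(B) = 0$ for $i \geq 3$. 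Note $\pi_1(B)$ is abelian, so there is no subtlety about the action of $\pi_1$ on higher groups, and in fact the projection $R_2 \to K(\KZ,2)$ coming from $[p] \mapsto [p]$ (which is split, as noted in the text just before Lemma~(\ref{R2BpLemma}) — the cohomology ring contains $\KZ[\alpha]$ for $\alpha \in H^2$) gives a section, so the long exact sequence breaks into short exact sequences.

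Next I would run the long exact sequence
$$
\cdots \to \pi_{i+1}(B) \to \pi_i(F) \to \pi_i(R_2) \to \pi_i(B) \to \pi_{i-1}(F) \to \cdots
$$
degree by degree. For $i > 2$: $\pi_i(F) = 0$ and $\pi_i(B) = 0$ (the latter since $i \geq 3$), so $\pi_i(R_2) = 0$. For $i = 2$: the relevant portion is $\pi_3(B) \to \pi_2(F) \to \pi_2(R_2) \to \pi_2(B) \to \pi_1(F)$, which is $0 \to \KZ \to \pi_2(R_2) \to \KZ \to 0$, since $\pi_1(F) = 0$. This sequence splits (for instance because $\KZ$ is free, or via the section $R_2 \to K(\KZ,2)$ discussed above), giving $\pi_2(R_2) \simeq \KZ^2$. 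For $i = 1$: the portion is $\pi_2(B) \to \pi_1(F) \to \pi_1(R_2) \to \pi_1(B) \to \pi_0(F)$, and since $\pi_1(F) = 0$ and $\pi_0(F) = 0$ (as $K(\KZ,2)$ is simply connected, indeed $1$-connected) we get $\pi_1(R_2) \simeq \pi_1(B) = \KZ$. Finally $\pi_0(R_2) = 0$ since $R_2$ is connected (already observed in the proof of Lemma~(\ref{R2BpLemma})).

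The only point requiring a little care — and the one I would treat as the main obstacle, though it is mild — is the exactness and splitting at $i = 2$: one must check that the map $\pi_3(B) \to \pi_2(F)$ is zero, which is immediate since $\pi_3(B) = 0$, and that the quotient $\pi_2(R_2) \to \pi_2(B) = \KZ$ admits a splitting so that the middle term is the direct sum rather than a nontrivial extension. Since all groups in sight are free abelian and $\KZ$ is projective, any such short exact sequence splits, so $\pi_2(R_2) \cong \KZ^2$; alternatively one invokes the explicit section $R_2 \to K(\KZ,2)$ classifying the pair $(U,0)$. I would also remark that none of this depends on basepoint subtleties because $\pi_1$ of every space here is abelian and acts trivially on the higher homotopy groups (the fibration has simply connected fiber). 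Assembling these computations yields exactly the three bullet points in the statement.
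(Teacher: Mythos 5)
Your proof is correct and takes essentially the same route as the paper: both read off the homotopy groups from the long exact sequence of the fibration $K(\KZ,2) \to R_2 \to K(\KZ,2) \times K(\KZ,1)$, with the extension $0 \to \KZ \to \pi_2(R_2) \to \KZ \to 0$ splitting because $\KZ$ is free abelian. One aside in your writeup is wrong, though harmless for this computation: the action of $\pi_1(R_2)$ on $\pi_2(R_2)$ is in fact \emph{nontrivial} (the paper later shows $R_2$ is not simple, and a simply connected fiber does not force the action to be trivial), but this does not affect the identification of the homotopy groups as abstract groups.
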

\begin{proof}
We had picked a basepoint for $R_2$. Hence, we may calculate $\pi_i(R_2)$
from the long exact sequence of the fibration in Lemma (\ref{R2FibLemma}). 
We find 
that the nozero part of the sequence is
$$ 
0 \to \KZ \to \pi_2(R_2) \to \KZ \to 0 \to \pi_1(R_2) \to \KZ \to 0.
$$

Thus, $\pi_1(R_2) = \KZ, \pi_2(R_2) \simeq \KZ^2$,
and  $\pi_i(R_2) = 0, i > 2$.
\end{proof}

We may characterize $R_2$ as follows
\begin{lemma}
Let $c:K(\KZ,2) \times K(\KZ,1) \to K(\KZ,3)$ be the based map which
induces the cup product. Then $R_2$ is the homotopy fiber of $c$.
\end{lemma}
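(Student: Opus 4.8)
The plan is to identify $R_2$ with the homotopy fiber of the cup-product map $c:K(\KZ,2)\times K(\KZ,1)\to K(\KZ,3)$ by a Postnikov-tower / obstruction argument, using the fibration of Lemma~\ref{R2FibLemma} as the starting point. First I would recall that Lemma~\ref{R2FibLemma} gives a fibration $K(\KZ,2)\to R_2\xrightarrow{\phi\times\psi} K(\KZ,2)\times K(\KZ,1)$, and that the homotopy-group computation shows $\pi_1(R_2)\simeq\KZ$, $\pi_2(R_2)\simeq\KZ^2$, and $\pi_i(R_2)=0$ for $i>2$. Since the base $K(\KZ,2)\times K(\KZ,1)$ already has exactly these homotopy groups, the long exact sequence forces the connecting map $\pi_2(K(\KZ,2)\times K(\KZ,1))\to\pi_1(K(\KZ,2))=\pi_1(\text{fiber})$ to vanish and hence $\phi\times\psi$ to be $2$-connected on $\pi_2$ in the appropriate sense; the fibration is classified by a single $k$-invariant in $H^3(K(\KZ,2)\times K(\KZ,1);\KZ)$ (the only nonzero obstruction, since the fiber $K(\KZ,2)$ is an Eilenberg--MacLane space concentrated in degree $2$). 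A principal $K(\KZ,2)$-fibration over a space $B$ is classified up to equivalence by its $k$-invariant in $H^3(B;\KZ)$, and $R_2$ is the homotopy fiber of the map $B\to K(\KZ,3)$ representing that class.

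The substance of the argument is therefore to show that the $k$-invariant of the fibration in Lemma~\ref{R2FibLemma} is exactly the cup-product class $\alpha\cup\beta\in H^3(K(\KZ,2)\times K(\KZ,1);\KZ)$, where $\alpha,\beta$ are the canonical generators of $H^2(K(\KZ,2))$ and $H^1(K(\KZ,1))$. The natural way to see this is via the universal pair over $R_2$. Let $p:E\to R_2$ be the universal principal $S^1$-bundle (pulled back from the canonical bundle $U$ over $K(\KZ,2)$ via $\phi$) and let $b\in H^2(E,\KZ)$ be the universal class. The Gysin sequence of $p:E\to R_2$ reads $H^1(R_2)\xrightarrow{\cup\, e} H^3(R_2)\xrightarrow{p^*} H^3(E)\xrightarrow{p_!} H^2(R_2)\xrightarrow{\cup\, e}H^4(R_2)$, where $e=\phi^*\alpha$ is the Euler class. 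By construction $p_!(b)=\psi^*\beta$, so $b$ lifts $\psi^*\beta\in H^2(R_2)$; the obstruction to the triviality of the fibration over $K(\KZ,2)\times K(\KZ,1)$ is precisely the discrepancy between this lift and one pulled back from the base, and a diagram chase in the Gysin sequence identifies that discrepancy with $e\cup\psi^*\beta = \phi^*\alpha\cup\psi^*\beta$. Pushing this down along $\phi\times\psi$ (which is a cohomology isomorphism in degree $3$ through the relevant range, since its fiber is $1$-connected) yields $\alpha\cup\beta$, i.e.\ the cup product.

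The hard part will be pinning down the $k$-invariant precisely rather than merely up to a nonzero multiple or a sign — that is, showing the coefficient is $\pm1$ and not, say, an arbitrary integer. I would handle this by naturality: evaluate the universal pair on small test spaces where the classification of pairs is known by hand. Taking $W=\KCP^\infty$ with the canonical $S^1$-bundle $S^\infty\to\KCP^\infty$ detects the splitting $R_2\to K(\KZ,2)$ and normalizes $\alpha$; taking $W=S^1\times S^2$ (trivial bundle, with $b$ the generator of $H^2$ that has nonzero fiber integration) produces a pair with $[p]=0$ but $p_!(b)$ a generator of $H^1(S^1)$, and checking that such a pair is \emph{not} classified by a map lifting into the base — equivalently, that $c$ restricted to $S^1\times S^2\to K(\KZ,2)\times K(\KZ,1)$ is essential — forces the coefficient to be a unit. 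Once the coefficient is normalized to $1$, the identification $R_2\simeq\mathrm{hofib}(c)$ follows from the uniqueness of principal $K(\KZ,2)$-fibrations with a given $k$-invariant, completing the proof.
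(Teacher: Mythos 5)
Your strategy (identify the fibration of Lemma \ref{R2FibLemma} as a principal $K(\KZ,2)$-fibration, compute its $k$-invariant in $H^3(K(\KZ,2)\times K(\KZ,1),\KZ)\simeq\KZ\{\alpha\cup\beta\}$, and invoke uniqueness of principal fibrations) is genuinely different from the paper's argument and could in principle be made to work, but as written it has two real gaps. First, the claim that the fibration is classified by a single class in $H^3(B;\KZ)$ with \emph{untwisted} coefficients presupposes that $\pi_1(B)\simeq\KZ$ acts trivially on $\pi_2$ of the fibre $K(\KZ,2)$. This is not automatic and is genuinely delicate here: $R_2$ itself is \emph{not} simple, and the paper's computation of the $\pi_1$-action (Thm.\ \ref{ThmR2Action}, from which one can read off that the fibre's $\pi_2$ is exactly the fixed subgroup) is proved \emph{after}, and using, the present lemma, so you cannot quote it without circularity. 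You need an independent argument that the fibre transport acts trivially on the fibre's $\pi_2$ before the $k$-invariant even exists as an element of $H^3(B;\KZ)$.

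Second, the identification of the $k$-invariant is not carried out correctly. The parenthetical claim that $\phi\times\psi$ is a cohomology isomorphism in degree $3$ is false: the transgression in the Serre spectral sequence kills precisely (a multiple of) $\alpha\cup\beta$, so $(\phi\times\psi)^{\ast}$ annihilates the very class you want to detect --- indeed $\phi^{\ast}\alpha\cup\psi^{\ast}\beta=0$ in $H^3(R_2,\KZ)$, which is the Gysin relation, so your ``discrepancy class'' upstairs is zero and cannot be ``pushed down'' to recover $\alpha\cup\beta$. Moreover, the proposed normalization via $W=S^1\times S^2$ only shows the coefficient $n$ in $k=n(\alpha\cup\beta)$ is nonzero, not that it is a unit: $H^3(S^1\times S^2,\KZ)\simeq\KZ$ is torsion-free, so $n\,g^{\ast}(\alpha\cup\beta)\neq 0$ for every $n\neq 0$. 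To pin down $n=\pm 1$ you would need test spaces where $[p]\cup a$ has order exactly $q$ for each prime $q$ (e.g.\ $W=(S^3/\KZ_q)\times S^1$) and compare the obstruction $n([p]\cup a)=0$ with the Gysin criterion $a\in\mathrm{im}(p_!)\iff[p]\cup a=0$. The paper's proof sidesteps all of this by arguing directly at the level of represented functors: a based map $W\to K(\KZ,2)\times K(\KZ,1)$ has nullhomotopic composite with $c$ iff $[p]\cup a=0$, iff (by exactness of the Gysin sequence, with the unit coefficient built in) $a=p_!(b)$ for some $b$, iff the data come from a pair over $W$; this two-way correspondence identifies $R_2$ with the homotopy fibre of $c$ without ever naming a $k$-invariant.
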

\begin{proof}

Let $f:W \to R_2$ be a map inducing the pair $([p],b)$ over $W$. 
Fixing a basepoint $x_0 \in W,$ we may replace $f$ by a based map 
$g:(W,x_0) \to (R_2,r_0)$ by Lemma (\ref{R2BpLemma}). It is 
clear that we may take $\phi \times \psi$ to be based.
Then we have a principal $S^1$-bundle $p:E \to W.$ By the Gysin
sequence of this bundle we have that $[p] \cup p_!(b) = 0$.
This implies that $c\circ(\phi \times \psi) \circ f$ is nullhomotopic
via a based homotopy,
since $c$ is exactly the based map which gives the cup product.

Conversely, suppose we are given a based map $f:W \to K(\KZ,2) \times K(\KZ,1)$
such that $c \circ f$ is nullhomotopic. Then, this corresponds to
a class $a \in H^1(W,\KZ)$ and a class $[p] \in H^2(W,\KZ)$ such
that $[p] \cup a = 0.$ Pick a principal $S^1$-bundle 
$p:E \to W$ with characteristic class $[p]$. From the Gysin sequence of this
bundle we see that $[p] \cup a = 0$ implies that $a = p_!(b)$ for
some $b \in H^2(E,\KZ)$. Thus, we obtain a pair $([p],b)$ over
$W$ and hence an unbased map $g:W \to R_2$. 
By the above argument, we may replace it with a based map $h$ classifying
the {\em same} pair over $W.$ Obviously, 
$(\phi \times \psi)\circ h = f$ as a based map.

Hence, $R_2$ is the homotopy fiber of the based map $c$ 
in the category of based CW complexes with basepoint preserving 
homotopy classes of maps between them. 
There is a forgetful functor from this category to the category
$\mathcal{C}.$ We take the image of the homotopy fiber of $c$ via this
functor. This determines $R_2$ up to homotopy equivalence in $\mathcal{C}.$
\end{proof}

Since $\pi_1(R_2) \neq 0,$ the choice of basepoints might be important.
Indeed, we have the following
\begin{lemma}
The space $R_2$ is not simple.
\end{lemma}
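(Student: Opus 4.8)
The plan is to show that the fundamental group $\pi_1(R_2) \cong \KZ$ acts nontrivially on one of the higher homotopy groups $\pi_n(R_2)$, $n \geq 2$, which is precisely the failure of simplicity. The natural target is $\pi_2(R_2) \cong \KZ^2$, since that is the only nonzero higher homotopy group, by the previous lemma.

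First I would use the fibration $K(\KZ,2) \to R_2 \overset{\phi\times\psi}{\to} K(\KZ,2)\times K(\KZ,1)$ established above. The base has fundamental group $\pi_1(K(\KZ,2)\times K(\KZ,1)) = \KZ$, coming entirely from the $K(\KZ,1) = S^1$ factor, and this maps isomorphically onto $\pi_1(R_2) = \KZ$ by the long exact sequence computed above. So a generator of $\pi_1(R_2)$ can be tracked through the fibration, and the action of $\pi_1$ of the base on the homotopy of the fiber $K(\KZ,2)$ governs the action of $\pi_1(R_2)$ on the image of $\pi_2(K(\KZ,2)) = \KZ$ inside $\pi_2(R_2) = \KZ^2$. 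The key point is to identify this monodromy action concretely. Since $R_2$ is the homotopy fiber of the cup-product map $c: K(\KZ,2)\times K(\KZ,1) \to K(\KZ,3)$, the fiber inclusion $K(\KZ,2) \into R_2$ can be understood: a loop in the $K(\KZ,1)$ direction, when one pulls back the path-fibration over $K(\KZ,3)$, twists the $K(\KZ,2)$-fiber by the operation "multiply by the class in the $S^1$-direction via cup product," which on $H^2$ of a test space is the identity but on the \emph{bundle data} is not. Concretely, in the language of pairs: going once around the loop generating $\pi_1(R_2)$ corresponds to replacing a pair $([p],b)$ by $([p], b + [p]\cdot(\text{generator of } p_!))$ — equivalently, an automorphism of the relevant bundle that acts on $H^2(E,\KZ)$ by a nontrivial shear. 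This shear is exactly the monodromy, and it is nontrivial precisely because $[p]$ can be nonzero.

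The cleanest rigorous route, rather than chasing the pair description, is to exhibit the monodromy via an explicit self-map. I would take the universal pair over $R_2$, restrict it along the generating loop $S^1 \to R_2$, and observe that the associated "clutching" automorphism of the fiber $K(\KZ,2)$ is the map classified by $\mathrm{id} + (\text{a nonzero element of } H^2(K(\KZ,2);\KZ))$ acting by the $H$-space structure; since $H^2(K(\KZ,2);\KZ) = \KZ$ is nonzero, this self-map of $K(\KZ,2)$ is \emph{not} homotopic to the identity, hence induces a nontrivial automorphism of $\pi_2(K(\KZ,2)) = \KZ$. Pushing forward along $K(\KZ,2)\into R_2$ gives a nontrivial action of $\pi_1(R_2)$ on $\pi_2(R_2)$.

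The main obstacle I anticipate is making the identification of the monodromy precise: one must be careful that the fiber inclusion $K(\KZ,2)\into R_2$ is $\pi_2$-injective (which follows from the long exact sequence, since $\pi_2$ of the base $K(\KZ,2)\times K(\KZ,1)$ is $\KZ$ and the sequence $0\to\KZ\to\pi_2(R_2)\to\KZ\to 0$ forces the fiber's $\KZ$ to inject), and that the monodromy one computes on the fiber genuinely survives in $\pi_2(R_2)$ rather than being absorbed by the exact sequence. Once $\pi_2$-injectivity of the fiber is in hand, any nontrivial monodromy on $\pi_2(K(\KZ,2))$ immediately yields nontriviality of the $\pi_1(R_2)$-action on $\pi_2(R_2)$, establishing that $R_2$ is not simple. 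The rest is bookkeeping with the long exact homotopy sequence and the standard fact that the action of $\pi_1$(base) on $\pi_*$(fiber) in a fibration is realized by these clutching self-maps.
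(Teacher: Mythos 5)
Your overall strategy---exhibit a nontrivial action of $\pi_1(R_2)\cong\KZ$ on $\pi_2(R_2)\cong\KZ^2$---is a legitimate way to prove non-simplicity, and it is essentially what the paper does two lemmas later in Thm.\ (\ref{ThmR2Action}). But the specific mechanism you propose fails: the monodromy of the fibration $K(\KZ,2)\to R_2\to K(\KZ,2)\times K(\KZ,1)$ on the fiber is \emph{trivial}. The fibration is pulled back from the path--loop fibration over $K(\KZ,3)$ along the cup-product map $c$, and the holonomy around a loop $\gamma$ in the base is the holonomy of the path--loop fibration around $c\circ\gamma$; since $K(\KZ,3)$ is simply connected, this self-map of the fiber $K(\KZ,2)$ is homotopic to the identity. (This is exactly why the paper may run the Serre spectral sequence of this fibration with \emph{untwisted} coefficients.) Consequently the action of $\pi_1(R_2)$ on the image of $\pi_2(K(\KZ,2))=\KZ$ inside $\pi_2(R_2)$ is trivial, and your concluding step---``any nontrivial monodromy on $\pi_2$ of the fiber yields non-simplicity''---has nothing to apply to. The actual action is the unipotent shear $\left(\begin{smallmatrix}1&1\\0&1\end{smallmatrix}\right)$ on the extension $0\to\KZ\to\pi_2(R_2)\to\KZ\to0$: it is trivial on the fiber subgroup \emph{and} on the quotient, and is nontrivial only on classes projecting nontrivially to $\pi_2$ of the base. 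Your own heuristic $([p],b)\mapsto([p],b+[p])$ already shows this: the shift is by the base coordinate $[p]$, so it vanishes precisely on the fiber direction $[p]=0$. No argument confined to the holonomy of the fiber can detect this phenomenon; you must compute the action on all of $\pi_2(R_2)$.

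For comparison, the paper argues by contradiction from the classifying property rather than by computing the action at this stage: if $R_2$ were simple, its single $k$-invariant would lie in $H^3(K(\KZ,1);\KZ^2)=0$, so $R_2$ would split as $K(\KZ,1)\times K(\KZ,2)\times K(\KZ,2)$; then every pair $([p],b)$ over every $W$ would be determined by the classes $[p]$, $p_!(b)$ and an element of $H^2(W,\KZ)$, forcing the degree-two portion of the Gysin sequence of an arbitrary principal $S^1$-bundle to split, which fails already for bundles over $S^2$ with nonzero Chern class (where $H^2$ of the total space is torsion). If you want to salvage your route, the fix is the one the paper uses in Thm.\ (\ref{ThmR2Action}): write $R_2$ as a mapping torus $K(\KZ^2,2)\to R_2\to S^1$ and apply the Cartan--Leray spectral sequence; the computation $H^2(R_2,\KZ)\cong\KZ$ forces the fixed subgroup of the deck action on $\KZ^2$ to have rank one, hence the action, and the space, cannot be simple.
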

\begin{proof}
Suppose $R_2$ was simple: Then, from Postnikov theory, we see that
$R_2$ would be homotopy equivalent to the product
$K(\KZ,1) \times K(\KZ,2) \times K(\KZ,2)$ via a based homotopy.
Given $f:W \to R_2$ by Lemma (\ref{R2BpLemma}), 
we could obtain a based map $g:W \to R_2$ classifying the same pair
over $W$ as $f$. Hence we would obtain based maps
$W \to K(\KZ,2)$, $W \to K(\KZ,2)$ and $W \to K(\KZ,1)$.
The pair would then be {\em determined} by classes $[p],a \in H^2(W,\KZ)$
and $p_!(b) \in H^1(W,\KZ)$. Here $[p]$ would be the characteristic
class of a principal $S^1$-bundle $p:E \to W$. 
This would imply in turn that $b$ would be
determined by $p_!(b)$ and $a$ and hence that the Gysin sequence
for $p:E \to W$ would split at degree two for {\em any} principal $S^1$-bundle
$E$ over $W$. Since $W,E$ were arbitrary, this is obviously impossible.
\end{proof}

\begin{lemma}
The cohomology of $R_2$ up to degree $3$ is
\begin{itemize}
\item $H^0(R_2,\KZ) \simeq \KZ,$
\item $H^1(R_2,\KZ) \simeq \KZ,$
\item $H^2(R_2,\KZ) \simeq \KZ.$
\item $H^3(R_2,\KZ) \simeq \KZ.$
\end{itemize}
\end{lemma}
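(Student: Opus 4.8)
The plan is to compute the low-degree integral cohomology of $R_2$ directly from its homotopy type, which was determined in the preceding lemmas: $R_2$ is the homotopy fiber of the cup-product map $c:K(\KZ,2)\times K(\KZ,1)\to K(\KZ,3)$, with $\pi_1(R_2)=\KZ$, $\pi_2(R_2)=\KZ^2$, and $\pi_i(R_2)=0$ for $i>2$. First I would set up the Serre spectral sequence of the fibration $K(\KZ,2)\to R_2\to K(\KZ,2)\times K(\KZ,1)$ from Lemma~(\ref{R2FibLemma}). The base $B=K(\KZ,2)\times K(\KZ,1)$ has cohomology ring, through degree $3$, generated by $x\in H^2(K(\KZ,2))$ and $t\in H^1(K(\KZ,1))$ with $H^*(B)=\KZ[x]\otimes\Lambda[t]$ in this range: $H^0=\KZ$, $H^1=\KZ\langle t\rangle$, $H^2=\KZ\langle x\rangle$, $H^3=\KZ\langle xt\rangle$. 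The fiber $F=K(\KZ,2)=\KCP^\infty$ has $H^*(F)=\KZ[u]$, $u\in H^2(F)$, so $H^0(F)=\KZ$, $H^2(F)=\KZ\langle u\rangle$, $H^4(F)=\KZ\langle u^2\rangle$, and odd groups vanish.

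Next I would identify the transgression. The key point is that the fibration is (up to the usual sign) pulled back from the path fibration over $K(\KZ,3)$ along $c$, so the transgression $d_3:H^2(F)\to H^3(B)$ sends the generator $u$ to $c^*(\iota_3)=x\smile t$, which is the generator $xt$ of $H^3(B)\cong\KZ$. Hence $d_3\colon E_3^{0,2}=\KZ\to E_3^{3,0}=\KZ$ is an isomorphism. With this, the $E_3=E_2$ page in total degree $\le 3$ has: $E_2^{0,0}=\KZ$, $E_2^{1,0}=\KZ$, $E_2^{2,0}=\KZ$, $E_2^{3,0}=\KZ$ along the bottom row; $E_2^{0,2}=\KZ$, $E_2^{1,2}=\KZ$ (from $t\cdot u$) in the fiber-degree-$2$ row; all other entries in this range vanish (the fiber has nothing in degrees $1$ and $3$). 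The only differential touching total degree $\le 3$ is the transgression $d_3:E_3^{0,2}\to E_3^{3,0}$, an isomorphism, so $E_\infty^{0,2}=0$ and $E_\infty^{3,0}=0$; everything else in total degree $\le 3$ survives. Reading off the diagonals: total degree $0$ gives $\KZ$; total degree $1$ gives $E_\infty^{1,0}=\KZ$; total degree $2$ gives $E_\infty^{2,0}=\KZ$ (since $E_\infty^{0,2}=0$); total degree $3$ gives $E_\infty^{1,2}=\KZ$ (since $E_\infty^{3,0}=0$), with no extension problem as there is a single nonzero graded piece.

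I would then conclude $H^0(R_2,\KZ)\simeq\KZ$, $H^1(R_2,\KZ)\simeq\KZ$, $H^2(R_2,\KZ)\simeq\KZ$, $H^3(R_2,\KZ)\simeq\KZ$, as claimed. As a sanity check consistent with the remarks after the construction of $R_k$, the split map $R_2\to K(\KZ,2)$ guarantees a polynomial generator $\alpha\in H^2(R_2,\KZ)$ hitting $x$, which matches $H^2\simeq\KZ$, and the surviving class in $H^3$ is the image of $t\cdot u$, i.e.\ the product of the degree-$1$ class pulled from $K(\KZ,1)$ with the fiber class — one may also see $H^3$ via the long exact sequence of the fibration or the Postnikov tower, but the spectral sequence is cleanest. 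The main obstacle is pinning down that the transgression $d_3(u)=xt$ is exactly the generator (rather than a multiple), i.e.\ that $R_2$ really is the pullback of the path-loop fibration along $c$ and not merely a space with the same homotopy groups; this is exactly the content of the lemma identifying $R_2$ as the homotopy fiber of $c$, so it is available, and everything else is a routine bookkeeping of a first-quadrant spectral sequence in a narrow range of degrees where there is no room for extension problems.
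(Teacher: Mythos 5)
Your proposal is correct and follows essentially the same route as the paper: the Serre spectral sequence of the fibration $K(\KZ,2)\to R_2\to K(\KZ,2)\times K(\KZ,1)$, with the transgression of the fiber generator identified as $c^{\ast}(\iota_3)=x\smile t$ via the pullback of the path--loop fibration over $K(\KZ,3)$. The only step you elide (as does the paper) is that the class $tu$ in $E_3^{1,2}$ also survives, which follows from the Leibniz rule since $d_3(tu)=\pm t\cdot(xt)=\pm xt^2=0$.
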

\begin{proof}
Consider the fibration
$K(\KZ,2) \to R_2 \to K(\KZ,2) \times K(\KZ,1)$. We have that 
$H^{\ast}(K(\KZ,2),\KZ) \simeq \KZ[a]$ where $a$ is a generator
of $H^2(K(\KZ,2),\KZ) \simeq \KZ.$ This ring has no automorphisms apart from
$a \to -a.$ Since the fibration is oriented, 
the generator of $\pi_1(K(\KZ,2) \times K(\KZ,1))$ 
acts trivially on the cohomology of $K(\KZ,2).$ 
As a result, we may use the Serre spectral sequence 
using cohomology with untwisted coefficients to calculate $H^{\ast}(R_2,\KZ).$

We note that the above fibration is pulled back from the
path-loop fibration over $K(\KZ,3)$ via the map 
$c:K(\KZ,2) \times K(\KZ,1) \to K(\KZ,3)$ inducing the cup product.
Suppose $\mu$ was a generator of $H^1(K(\KZ,1),\KZ)$ and 
that $\lambda$ was a generator of $H^2(K(\KZ,2),\KZ).$
Let $\tilde{\mu} = (\phi \times \psi)^{\ast}(\mu),$ and
$\tilde{\lambda} = (\phi \times \psi)^{\ast}(\lambda).$
Then we have that $\tilde{\mu} \cup \tilde{\lambda} = 0.$
This shows that the transgression 
$E_2^{2,0} \to E_2^{0,3}$ must be a map 
$k:H^2(K(\KZ,2),\KZ) \simeq \KZ \to H^2(K(\KZ,2) \times K(\KZ,1),\KZ) 
\simeq \KZ$ sending $a \to a (\mu \cup \lambda).$

From the spectral sequence table, we see that 
$H^0(R_2,\KZ) \simeq \KZ,$ $H^1(R_2,\KZ) \simeq \KZ,$ 
$H^2(R_2,\KZ) \simeq \KZ$ and $H^3(R_2,\KZ) \simeq \KZ.$
\end{proof}

Note that the canonical bundle over $K(\KZ,2)$ gives rise to a unique
pair $(1,0)$ over $K(\KZ,2).$ This is classified by a map
$K(\KZ,2) \to R_2$ which is a section of the natural map $R_2 \to K(\KZ,2).$
Hence, the cohomology ring of $R_2$ contains $\KZ[b], b \in H^2(R_2,\KZ).$

This ring has generators $a,b,c$ in degrees $1,2$ and $3$ respectively
such that 
\begin{equation}
a.b = 0, b^n \neq 0 {\mbox{ for any } n. }  \label{R2Rel}
\end{equation}

We now determine the action of $\pi_1(R_2)$ on $\pi_2(R_2).$
\begin{theorem}
The action of the generator $S$ of $\pi_1(R_2) \simeq \KZ$ 
on $\pi_2(R_2)$ is given by
$$
S(a,b) = (a + b, b).
$$
\label{ThmR2Action}
\end{theorem}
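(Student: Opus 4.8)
The plan is to reduce the statement to the computation of a single monodromy matrix in $GL_2(\KZ)$, and then pin that matrix down using the cohomology of $R_2$ computed in the lemmas above. Since $\pi_i(R_2)=0$ for $i>2$, the space $R_2$ is a two-stage Postnikov system with $\pi_1(R_2)=\KZ$ and $\pi_2(R_2)\simeq\KZ^2$. Its only $k$-invariant lies in $H^3\bigl(K(\KZ,1);\pi_2(R_2)\bigr)\cong H^3(\KZ;\pi_2(R_2))$, a group-cohomology group with local coefficients, and this vanishes because $\KZ$ has cohomological dimension one (the circle is a one-dimensional $K(\KZ,1)$). Hence $R_2$ is homotopy equivalent to the Borel construction $\KR\times_{\KZ}K(\KZ^2,2)$, i.e.\ to the mapping torus of a self-homotopy-equivalence $\phi$ of $K(\KZ^2,2)$ with $\phi_{\ast}=\rho$ on $\pi_2$, where $\rho\in\Aut(\pi_2(R_2))=GL_2(\KZ)$ is precisely the action of the generator $S$ that we want. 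So it suffices to determine $\rho$.

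First I would show that $\rho$ is upper triangular with $1$'s on the diagonal in a suitable basis $(a,b)$, where $a$ is the fibre class and $b$ a lift of the base generator. Use the fibration $K(\KZ,2)\xrightarrow{i}R_2\xrightarrow{p}K(\KZ,2)\times K(\KZ,1)$ of Lemma~(\ref{R2FibLemma}). Because $\pi_3$ of the base vanishes, $i_{\ast}$ is injective and its image is exactly $\ker\bigl(p_{\ast}\colon\pi_2(R_2)\to\pi_2(\text{base})\cong\KZ\bigr)$; let $a$ generate this rank-one summand and let $b$ map to a generator of the quotient. The homotopy exact sequence of a fibration is an exact sequence of $\pi_1(R_2)$-modules in which $\pi_1(R_2)$ acts on the homotopy of the base through $p_{\ast}$ and on the homotopy of the fibre through $p_{\ast}$ followed by the monodromy of the fibration. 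Now $p$ is, by construction, the pullback of the path–loop fibration over $K(\KZ,3)$ along the cup-product map $c$; hence its monodromy factors through $\pi_1(K(\KZ,3))=0$ and is trivial, while the base $K(\KZ,2)\times K(\KZ,1)$ is a simple space. Therefore $\pi_1(R_2)$ acts trivially on $\KZ a=i_{\ast}\pi_2(K(\KZ,2))$ and on the quotient $\pi_2(R_2)/\KZ a\cong\pi_2(\text{base})$, which forces $\rho(a)=a$ and $\rho(b)=b+ka$ for some $k\in\KZ$.

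Finally I would pin down $k$. Run the Wang exact sequence of the mapping torus $K(\KZ^2,2)\to R_2\to S^1$ in degrees $2$ and $3$. Since $K(\KZ^2,2)\simeq\KCP^{\infty}\times\KCP^{\infty}$ has $H^1=H^3=0$, the Wang sequence gives $H^2(R_2;\KZ)\cong\ker(\phi^{\ast}-\id)$ and $H^3(R_2;\KZ)\cong\operatorname{coker}(\phi^{\ast}-\id)$ on $H^2(K(\KZ^2,2);\KZ)\cong\KZ^2$. On $H^2$ the map $\phi^{\ast}$ is the $\KZ$-dual (transpose) of $\rho$, so $\phi^{\ast}-\id$ is represented by $\begin{pmatrix}0&0\\k&0\end{pmatrix}$, whence $\ker(\phi^{\ast}-\id)\cong\KZ^2$ if $k=0$ and $\cong\KZ$ if $k\neq0$, while $\operatorname{coker}(\phi^{\ast}-\id)\cong\KZ\oplus\KZ/k\KZ$. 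Comparing with the cohomology lemma above, $H^2(R_2;\KZ)\cong\KZ$ already forces $k\neq0$ (this is also consistent with $R_2$ not being simple), and $H^3(R_2;\KZ)\cong\KZ$, being torsion-free, forces $k=\pm1$. The generator $a$ of $\pi_2(K(\KZ,2))$ is only well-defined up to sign, so replacing $a$ by $-a$ if necessary we may assume $k=1$; then $\rho(a)=a$ and $\rho(b)=a+b$, i.e.\ $S(a,b)=(a+b,b)$.

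The step I expect to be the real obstacle is the second paragraph: making precise (and being careful about conventions in) the assertion that the homotopy exact sequence of $p$ is a sequence of $\pi_1(R_2)$-modules with the fibre terms carrying the monodromy action, so that $S$ fixes the fibre class $a$ and acts trivially on the base quotient. This is what rules out the a priori possibility $\rho\sim\begin{pmatrix}0&1\\1&0\end{pmatrix}$, which has the same $H^2$ and $H^3$ as $\begin{pmatrix}1&1\\0&1\end{pmatrix}$; once that reduction is in hand, the remainder is routine bookkeeping with the Wang sequence and the cohomology already computed, and the sign of $k$ is a harmless normalization.
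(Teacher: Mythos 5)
Your argument is correct and follows essentially the same route as the paper: first the homotopy exact sequence of the fibration $K(\KZ,2)\to R_2\to K(\KZ,2)\times K(\KZ,1)$ forces the monodromy to be upper unitriangular, and then the off-diagonal entry is pinned down by comparing the invariants and coinvariants of the $\KZ$-action on $H^2(K(\KZ^2,2),\KZ)$ with the previously computed $H^2(R_2,\KZ)\cong\KZ$ and $H^3(R_2,\KZ)\cong\KZ$. Your Wang sequence of the mapping torus is the same computation as the paper's Cartan--Leray spectral sequence for the universal cover $K(\KZ^2,2)\to R_2$, and your extra care about the triviality of the monodromy on fibre and base (and about the transpose on cohomology) only tightens steps the paper asserts more briefly.
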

\begin{proof}
From the long exact sequence of homotopy groups of the fibration 
in Lemma (\ref{R2FibLemma}), we see that we have a sequence
$0 \to \KZ \to \KZ^2 \to \KZ \to 0.$ 
Now, $\pi_1(R_2)$ acts on each term of this sequence by 
$\KZ$-module automorphisms with the
trivial action on the first and last $\KZ$ factors and by an
action $\theta$ on the middle factor.

This implies that $\theta$ may be taken to be the homomorphism
induced by the matrix
$$\left( 
\begin{array}{cc}
 1 & \ast \\
 0 & 1 \\
\end{array}
\right).
$$

We claim that $$\theta \simeq \left( \begin{array}{cc}
1 & 1 \\
0 & 1 \\
\end{array}
\right).
$$

We have a fibration
$K(\KZ,2) \to R_2 \to K(\KZ,2) \times K(\KZ,1).$
We could view this as a fibration over the $K(\KZ,1) \simeq S^1$
factor 
$K(\KZ^2,2) \to R_2 \to S^1.$
From the long exact sequence of a fibration, it is clear that
$K(\KZ^2,2)$ is the universal cover $\tilde{R_2}$ of $R_2.$
Now $\pi_1(S^1)$ acts on $\tilde{R_2}$ by
deck transformations. Hence, using the
Cartan-Leray spectral sequence
(see Ref.\ \cite{CartanEilenberg} Ch. XVI Sec.\ (9) for details), 
we have a spectral sequence with 
$E^{p,q}_2 = H^p(\KZ = \pi_1(S^1), H^q(K(\KZ^2,2),\KZ)) \Rightarrow 
H^{\ast}(R_2,\KZ)$ (here $H^{\ast}(\KZ,M)$ denotes the group
cohomology of $\KZ$ with coefficients in a module $M$).
This sequence collapses at the $E_2$ term itself, since 
$E^{p,q}_2 \simeq 0$ for $p \geq 2.$
Thus $\KZ \simeq H^2(R_2,\KZ) \simeq H^0(\KZ,H^2(K(\KZ^2,2),\KZ)) \simeq 
H^0(\KZ,\KZ^2),$  and so the fixed points of $\theta$ on
$\KZ^2$ are $\KZ \neq \KZ^2.$ Hence the action $\theta$ is {\em not}
trivial and $R_2$ is {\em not} simple.

Now, 
$\KZ \simeq H^3(R_2,\KZ) \simeq H^1(\KZ,H^2(K(\KZ^2,2),\KZ)) 
\simeq H^1(\KZ,\KZ^2).$ 
If $\KZ$ acts on $\KZ^2$ with an action $\theta,$ 
$H^{\ast}(\KZ,\KZ^2)$ is the cohomology of the complex
$ \KZ^2 \overset{\theta -1}{\to} \KZ^2.$ 
Hence, $\KZ^2/(\theta -1 ) \KZ^2 \simeq \KZ$ here,
and using the above form for $\theta,$ 
$$ \theta \simeq 
\left(
\begin{array}{cc}
1 & 1 \\
0 & 1 \\
\end{array}
\right).
$$

\end{proof}

Note that since $R_2$ was defined in the unbased category,
for any CW-complex $W,$ pairs over $W$ are classified by unbased
maps from $W$ to $R_2$. Since the space of unbased maps from
$W$ to $R_2$ is the quotient of the space of based maps from
$W$ to $R_2$ by the action of $\pi_1(R_2)$, we see that the 
non-trivial action of $\pi_1(R_2)$ does not affect our results.
We simply have to be careful to use unbased maps in all our constructions.
We can see an example of this when we try to determine all the pairs 
over $S^2$. If pairs were classified by based maps, then 
$P_2(S^2)$ would be a group. However, we have the following

\begin{lemma}
$P_2(S^2)$ is not a group.
\label{PairsOnS2}
\end{lemma}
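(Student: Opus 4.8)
The plan is to compute $P_2(S^2)$ explicitly and then exhibit a feature of the answer that no group of the expected kind can have. \emph{First} I would enumerate the $2$-pairs over $S^2$ directly. Principal $S^1$-bundles over $S^2$ are classified by their Euler class $n\in H^2(S^2,\KZ)\cong\KZ$; call them $E_n$. The Gysin sequence of $E_n\to S^2$ (with K\"unneth for $E_0=S^2\times S^1$) gives $H^2(E_n,\KZ)\cong\KZ/n\KZ$, read as $\KZ$ when $n=0$, and the gauge group $\Aut(E_n)\cong\mathrm{Map}(S^2,S^1)$ is connected because $[S^2,S^1]=H^1(S^2,\KZ)=0$, so it acts trivially on $H^2(E_n,\KZ)$. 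Hence $P_2(S^2)=\coprod_{n\in\KZ}\KZ/n\KZ$. I would double-check this against the representing space: by Lemma \ref{R2BpLemma} and Theorem \ref{ThmR2Action}, $P_2(S^2)=[S^2,R_2]=\pi_2(R_2)/\pi_1(R_2)=\KZ^2/\langle S\rangle$ with $S(a,b)=(a+b,b)$, whose orbit set is exactly $\coprod_b\KZ/b\KZ$, the orbit with fixed $b\neq0$ being $\KZ/b\KZ$.

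\emph{Second} I would use the Euler-class map $e\colon P_2(S^2)\to H^2(S^2,\KZ)\cong\KZ$, $([p],b)\mapsto[p]$, which is surjective (realized by the pairs $(E_n,0)$) and is induced by the natural projection $R_2\to K(\KZ,2)$ of Lemma \ref{R2FibLemma}, a map split by the canonical-bundle section $K(\KZ,2)\to R_2$ and given by $(a,b)\mapsto b$ on $\pi_2$. If $P_2$ carried a natural group structure with the distinguished pair as identity, then Brown representability would make $R_2$ an $H$-group and $R_2\to K(\KZ,2)$ a map of $H$-groups, so $e$ would be a split surjective homomorphism of abelian groups; in particular every fibre $e^{-1}(n)$ would be in bijection with $\ker e=e^{-1}(0)$.

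\emph{Third} comes the contradiction: $e^{-1}(0)=H^2(S^2\times S^1,\KZ)/\Aut(E_0)\cong\KZ$ is infinite, while $e^{-1}(1)=H^2(S^3,\KZ)=0$ is a single point. Since a group cannot surject onto $\KZ$ with fibres of different cardinalities, $P_2(S^2)$ is not such a group; equivalently $R_2$ is not an $H$-space, which one also sees intrinsically from Theorem \ref{ThmR2Action}, since the $\pi_1(R_2)$-action on $\pi_2(R_2)=\KZ^2$ is by the non-identity automorphism $S$, so the orbits — which are the elements of $P_2(S^2)$ — are not the cosets of any subgroup of $\KZ^2$.

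The step I expect to be the main obstacle is justifying that a group structure on $P_2$ forces $e$ to be a homomorphism; I would argue this through the $H$-group structure on $R_2$ that Brown representability supplies together with the factorization of $e$ through the split natural projection $R_2\to K(\KZ,2)$, or, more cleanly, reformulate the lemma as the assertion that $P_2$ is not group-valued, for which the $H$-space obstruction coming from Theorem \ref{ThmR2Action} is completely self-contained. The remaining ingredients — the Gysin/K\"unneth identification of $H^2(E_n,\KZ)$, connectedness of the gauge group, and the orbit count $\KZ^2/\langle S\rangle\cong\coprod_b\KZ/b\KZ$ — are routine.
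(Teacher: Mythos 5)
Your proof is correct and follows essentially the same route as the paper: both arguments rest on the natural map $([p],b)\mapsto[p]$ from $P_2(S^2)$ to $H^2(S^2,\KZ)\cong\KZ$ having fibres $H^2(E_n,\KZ)\cong\KZ/n\KZ$ of varying cardinality, which is incompatible with this map being a surjective homomorphism of groups. You supply more of the justification that the paper leaves implicit (the $H$-group argument via Brown representability and the explicit contrast between the infinite fibre over $0$ and the singleton fibre over $1$), but the core mechanism is identical.
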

\begin{proof}
For any CW complex $W,$ we have a natural map $\phi:P_2(W) \to H^2(W,\KZ)$
which sends a pair $([p],b)$ over $W$ to $[p].$
Now, for every $a \in H^2(S^2,\KZ) \simeq \KZ,$
we claim that the set $\phi^{-1}(a)$ has cardinality $|a|$.
To see this, it is enough to note that if $D_p \to S^2$ is a 
principal $S^1$-bundle of Chern class
$[p],$ then $H^2(D_p,\KZ) \simeq \KZ_p$.
This implies that $P_2(S^2) \to H^2(S^2,\KZ)$ is not
a group homomorphism, and hence that $P_2(S^2)$ is not a group.
\end{proof}

In fact, $P_2(S^2)$ is the quotient of the group $\pi_2(R_2) \simeq \KZ^2$ by
the action of $\pi_1(R_2)$ calculated in
Thm.\ (\ref{ThmR2Action}) above.

\begin{lemma}
Consider the principal $S^1$-bundle $p:E_2 \to R_2$ whose characteristic class 
is $b$ the generator of $H^2(R_2,\KZ).$ Its cohomology groups are
\begin{itemize}
\item $H^0(E_2,\KZ) \simeq \KZ,$
\item $H^1(E_2,\KZ) \simeq \KZ \bar{a},$
\item $H^2(E_2,\KZ) \simeq \KZ w,$
\item $H^3(E_2,\KZ) \simeq \KZ \bar{c}.$
\end{itemize}
where $\bar{a}, \bar{c}$ are the images of the generators $a, c$ of 
$H^i(R_2,\KZ), i=1,3$ in $H^{\ast}(E_2,\KZ)$. In addition $p_!(w) = a$,
$\bar{a}=  p^{\ast}(a), \bar{c} = p^{\ast}(c).$
\end{lemma}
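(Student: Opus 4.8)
The plan is to read off $H^{\ast}(E_2,\KZ)$ in low degrees from the Gysin sequence of the principal $S^1$-bundle $p:E_2 \to R_2$, whose Euler class is the generator $b \in H^2(R_2,\KZ)$ by construction. Concretely I would write down
\[
\cdots \to H^{n-2}(R_2,\KZ) \xrightarrow{\cup b} H^n(R_2,\KZ) \xrightarrow{p^{\ast}} H^n(E_2,\KZ) \xrightarrow{p_!} H^{n-1}(R_2,\KZ) \xrightarrow{\cup b} H^{n+1}(R_2,\KZ) \to \cdots
\]
and feed in three pieces of data established above: the cohomology of $R_2$ through degree $3$ (each group a copy of $\KZ$, generated by $a,b,c$ in degrees $1,2,3$), the relation $b^n \ne 0$ for all $n$, and the ring relation $a\cdot b = 0$ from Eq.~(\ref{R2Rel}). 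The groups of $E_2$ then emerge degree by degree.

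In degrees $0$ and $1$ the map $\cup b:H^0(R_2,\KZ)\to H^2(R_2,\KZ)$ carries $1$ to $b$ and is injective, which forces $p_!$ to vanish on $H^1(E_2,\KZ)$; hence $p^{\ast}$ is an isomorphism in these degrees and $H^1(E_2,\KZ) = \KZ\bar a$ with $\bar a = p^{\ast}(a)$. In degree $2$ that same $\cup b$ is now an isomorphism $\KZ \to \KZ$, so $p^{\ast}$ is zero on $H^2(R_2,\KZ)$ and $p_!:H^2(E_2,\KZ)\to H^1(R_2,\KZ)$ is injective; on the other side $\cup b:H^1(R_2,\KZ)\to H^3(R_2,\KZ)$ sends $a$ to $a\cdot b=0$ and is therefore the zero map, so $p_!$ is also surjective. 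This gives $p_!:H^2(E_2,\KZ)\xrightarrow{\ \sim\ }H^1(R_2,\KZ)\cong\KZ$, and I would define $w$ to be the generator with $p_!(w)=a$. In degree $3$ the map $\cup b:H^1(R_2,\KZ)\to H^3(R_2,\KZ)$ is again zero, making $p^{\ast}$ injective there, while $\cup b:H^2(R_2,\KZ)\to H^4(R_2,\KZ)$ sends $b$ to $b^2\ne 0$ and is injective, which kills $p_!$ on $H^3(E_2,\KZ)$ and makes $p^{\ast}$ surjective; hence $p^{\ast}:H^3(R_2,\KZ)\xrightarrow{\ \sim\ }H^3(E_2,\KZ)=\KZ\bar c$ with $\bar c=p^{\ast}(c)$.

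I do not expect a genuine obstacle: this is a bookkeeping exercise with one exact sequence. The only spot where the multiplicative structure of $H^{\ast}(R_2,\KZ)$, rather than just its additive structure, is essential is the degree-$2$ step, where the surjectivity of $p_!$ onto $H^1(R_2,\KZ)$ rests on $a\cdot b=0$; this is precisely what makes $H^2(E_2,\KZ)$ a single $\KZ$ generated by a class $w$ with $p_!(w)=a$ instead of a larger group, and likewise the nonvanishing of $b^2$ is what pins down degree $3$. The identifications $\bar a = p^{\ast}(a)$, $\bar c = p^{\ast}(c)$ and $p_!(w)=a$ are then just the labelling of generators carried along in the computation.
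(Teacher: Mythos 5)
Your proof is correct and in substance identical to the paper's: the paper runs the two-row Serre spectral sequence of the circle bundle, whose only differential is cup product with the Euler class $b$, which is exactly the Gysin sequence you use, and both arguments turn on the same inputs $a\cdot b=0$ and $b^2\neq 0$ (via $\KZ[b]\hookrightarrow H^{\ast}(R_2,\KZ)$). The identifications $\bar a=p^{\ast}(a)$, $\bar c=p^{\ast}(c)$ and $p_!(w)=a$ are likewise read off the Gysin sequence in both treatments.
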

\begin{proof}
The Serre Spectral Sequence table is 
\begin{equation}
\begin{array}{c|ccc}
\KZ z & \KZ az & \KZ bz & \KZ cz \\
&  &  & \\
\hline
\KZ & \KZ a & \KZ b & \KZ c \\
\end{array}
\end{equation}
The transgression map $d^2: E_2^{0,1} \to E_2^{2,0}$ has to be $dz = b.$
Hence using the ring structure,
the fact that $ab=0$ in $H^{\ast}(R_2,\KZ),$
and the fact that $b$ generates a copy of $\KZ[b]$ inside $H^{\ast}(E_2,\KZ)$
gives the cohomology groups shown.
In addition, note that $w$ is the image of $a.z$ in
$H^2(E_2,\KZ).$ 

Then, an inspection of the Gysin sequence for $E_2$ shows that
the elements $\bar{a},\bar{c}$ must be images of $a,c$ under
$p^{\ast}.$
\end{proof}
This bundle acts as a universal bundle for pairs: 
Given a principal $S^1$-bundle $p:D \to W$ and $b \in H^2(D,\KZ)$ we have
a classifying map $f:W \to R_2.$ This bundle pulls back along
$f$ to give the bundle $p:D \to W$
while the generator of $H^2(E_2,\KZ)$ pulls back to the $b$-class.

%
%
%

We hope to study the map $T$ of Section $(1)$ using the
classifying space $R_2$ studied above.

\section{T-duality for automorphisms is not involutive \label{SecInvol}}
By the proof of Thm.\ (3.1) in Ref.\ \cite{ATMP}, we know that the T-dual
of an automorphism even with $H$-flux is always unique. However,
T-duality for automorphisms is not involutive: If we perform two 
successive T-dualities
we may not get the automorphism we started with.
For example, if $W=S^2$ with $1$ unit of $H$-flux on $S^2 \times S^1$, 
the T-dual is $S^3$ with no $H$-flux. Since 
$H^2(S^2 \times S^1,\KZ) \simeq \KZ,$ but $H^2(S^3,\KZ) \simeq 0$,
every locally unitary (but not necessarily unitary) automorphism 
of $CT(S^2\times S^1, 1)$ dualizes 
to a unitary automorphism of $C(S^3,\Cpct)$. 
From the proof of Thm.\ (3.1) of Ref.\ \cite{ATMP}, it is clear that the
T-dual of a unitary automorphism is unitary. Hence, taking one more T-dual
gives a unitary automorphism of $CT(S^2 \times S^1,1)$.

At the level of triples $([p],b,H)$ we conjecture that the T-duality
map of Section (1) should have the following properties:
\begin{lemma}
Let $W$ be connected and simply connected.
Let $p:E \to W$ be a principal $S^1$-bundle with $H$-flux $H$ and
$b \in H^2(E,\KZ).$
Let $q:E^{\#} \to W$ be the T-dual principal $S^1$-bundle with 
$H$-flux $H^{\#}$ and $b^{\#} \in H^2(E^{\#},\KZ)$ where $b^{\#} = T(b).$ 
Then, for all $b \in H^2(E,\KZ), \forall l,m \in \KZ$, 
the Gysin sequence induces a natural isomorphism
$H^2(E/\KZ)/<p^{\ast}p_!(H)> \simeq H^2(E^{\#},\KZ)/<q^{\ast}q_!(H^{\#})>.$
\end{lemma}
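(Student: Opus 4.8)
The plan is to use the Gysin sequences of the two circle bundles to identify each of the two displayed quotients with one and the same quotient of $H^{2}(W,\KZ)$, after which the asserted isomorphism is immediate.

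First I would write down the relevant piece of the Gysin sequence of $p:E\to W$. Because $W$ is simply connected, $H^{1}(W,\KZ)=0$, so this piece reads
\[
0 \to H^{0}(W,\KZ) \xrightarrow{\,\cup\,[p]\,} H^{2}(W,\KZ) \xrightarrow{\,p^{\ast}\,} H^{2}(E,\KZ) \to 0 ,
\]
where $[p]\in H^{2}(W,\KZ)$ is the Euler class of $E$. Hence $p^{\ast}$ is surjective with kernel $\KZ\cdot[p]$, giving a natural isomorphism $H^{2}(E,\KZ)\cong H^{2}(W,\KZ)/\langle [p]\rangle$; the same argument for $q:E^{\#}\to W$ gives $H^{2}(E^{\#},\KZ)\cong H^{2}(W,\KZ)/\langle [q]\rangle$, where $[q]$ is the Euler class of $E^{\#}$.

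Next I would transport the distinguished subgroups through these isomorphisms. Since $p_!(H)\in H^{2}(W,\KZ)$, the class $p^{\ast}p_!(H)\in H^{2}(E,\KZ)$ is simply the image of $p_!(H)$ under $p^{\ast}$, so under the isomorphism above it corresponds to the class of $p_!(H)$ in $H^{2}(W,\KZ)/\langle [p]\rangle$; therefore $H^{2}(E,\KZ)/\langle p^{\ast}p_!(H)\rangle \cong H^{2}(W,\KZ)/\langle [p],\,p_!(H)\rangle$, and likewise $H^{2}(E^{\#},\KZ)/\langle q^{\ast}q_!(H^{\#})\rangle \cong H^{2}(W,\KZ)/\langle [q],\,q_!(H^{\#})\rangle$. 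Now I invoke the defining rules of topological T-duality for circle bundles: the T-dual bundle satisfies $[q]=p_!(H)$ and, symmetrically, $q_!(H^{\#})=[p]$ (the Gysin \emph{swap}). Substituting these shows that both quotients equal the single group $H^{2}(W,\KZ)/\langle [p],\,p_!(H)\rangle$, and the composite of the two identifications is the desired isomorphism. It is natural because each ingredient ($p^{\ast}$, $q^{\ast}$, $p_!$, $q_!$ and the T-duality swap) is natural in the T-duality datum over $W$; one may further check that it carries the class of $b$ to the class of $b^{\#}=T(b)$, by writing $b=p^{\ast}\beta$ and $b^{\#}=q^{\ast}\beta$ for a common $\beta\in H^{2}(W,\KZ)$ (well defined modulo $[p]$, resp. modulo $[q]$), both of which reduce to the class of $\beta$ in $H^{2}(W,\KZ)/\langle [p],\,p_!(H)\rangle$.

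The computation is essentially formal once the Gysin sequences are in place, so I do not expect a genuine obstacle. The two points needing care are the use of $\pi_{1}(W)=0$ to annihilate the $H^{1}(W,\KZ)$ term — without it the Gysin sequence gains an extra term and the clean description fails — and the precise form of the T-duality rules $[q]=p_!(H)$ and $q_!(H^{\#})=[p]$, which is exactly what forces the two subgroups to coincide. The clause $\forall\,l,m\in\KZ$ in the statement plays no role here, since the relevant subgroups are each generated by a single class and do not depend on any such parameters.
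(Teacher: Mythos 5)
Your proposal is correct and follows essentially the same route as the paper: both arguments use the degree-two portion of the two Gysin sequences together with $H^1(W,\KZ)=0$ to identify $H^2(E,\KZ)$ and $H^2(E^{\#},\KZ)$ with $H^2(W,\KZ)/\langle[p]\rangle$ and $H^2(W,\KZ)/\langle[q]\rangle$, and then use the T-duality relations $[q]=p_!(H)$, $[p]=q_!(H^{\#})$ to identify both displayed quotients with $H^2(W,\KZ)/\langle[p],[q]\rangle$. Your added remarks (that the $l,m$ clause is vacuous and that the isomorphism carries $b$ to $b^{\#}$) are consistent with the paper's subsequent discussion of cosets.
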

\begin{proof}
The Gysin sequence of $q:E^{\#} \to W$ is
$$\KZ \overset{[q]}{\to} H^2(W,\KZ) \overset{q^{\ast}}{\to} 
H^2(E^{\#},\KZ)\overset{q_!}{\to} H^1(Z,\KZ) \to \cdots$$
The Gysin sequence of $p:E \to W$ is
$$\KZ \overset{[p]}{\to} H^2(W,\KZ) \overset{p^{\ast}}{\to} H^2(E,\KZ)
\overset{p_!}{\to} H^1(Z,\KZ)\to\cdots$$
Since $H^1(W,\KZ)=0$, $p^{\ast},q^{\ast}$ are surjective by 
the Gysin sequence. Suppose $b=p^{\ast}(a)$ and $b^{\#}=q^{\ast}(a^{\#})$
for $a,a^{\#} \in H^2(W,\KZ)$.

The kernel of $p^{\ast}$ is the subgroup\footnote{
$\langle a_1,a_2,\cdots,a_n \rangle$ 
denotes the subgroup generated by $a_1, a_2, \cdots, a_n$. The ambient
group is understood from context.}
$\langle [p] \rangle $. 
Similarly, the kernel of $q^{\ast}$ is the subgroup 
$\langle [q] \rangle$. Let $G = \langle [p],[q] \rangle = 
\langle p_!(H), q_!(H^{\#}) \rangle$.
Since $H^1(W,\KZ)=0$, $p^{\ast},q^{\ast}$ are surjective by 
the Gysin sequence. Note that $p^{\ast}G \simeq \langle p^{\ast}p_!(H) \rangle$
and $q^{\ast}G \simeq \langle q^{\ast} q_!(H^{\#}) \rangle$.
Thus, we have isomorphisms 
$H^2(W,\KZ)/G \simeq H^2(E,\KZ)/\langle p^{\ast}p_!(H) \rangle 
\simeq H^2(E^{\#},\KZ)/\langle q^{\ast}q_!(H^{\#})\rangle$.
\end{proof}

The cosets corresponding to $b$ and $b^{\#}$ may be written
as ($l,m \in \KZ$)
$$
\{b + l p^{\ast}p_!(H)\} \protect{\mbox{ and }}
\{b^{\#} + m q^{\ast}{q_!(H^{\#})}\}.
$$

{\em We conjecture that each coset is precisely the collection of 
$b$-fields with the same T-dual (even when $X$ is {\bf not }
simply connected). }
As support for this, note the following:
Suppose $E^{\#} = W \times S^1,$ with $k$ units of $H$-flux. Let 
$q=\pi: W \times S^1 \to W$ be the projection
map. Then, if $H^1(W,\KZ) \neq 0,$ 
the above theorem would not be expected to hold: 
For one thing, $\protect{\mbox{im}}(\pi^{\ast})$
would not be all of $H^2(W\times S^1,\KZ)$.
It is strange then, that $H^2(E,\KZ)/\langle p^{\ast}p_!(H) \rangle$
is isomorphic to $H^2(E^{\#},\KZ)/\langle q^{\ast}q_!(H^{\#}) \rangle$
in all the following cases in many of which $H^1(W,\KZ) \neq 0$ 
(I use Ref.\ \cite{BouEvMa} for the examples):

\begin{enumerate}
\item \underline{$W=T^2:$} 
For $E^{\#} = W\times S^1,$ $H^0(E^{\#},\KZ)=\KZ, H^1=\KZ^3, H^2 = \KZ^3,
H^3 = \KZ$. The $H$-flux is a class 
$[p] \times z \in H^2(T^2) \otimes H^1(S^1) \simeq H^3(T^2 \times S^1) 
\simeq \KZ$.
The T-dual is the nilmanifold $p:N \to T^2$ whose cohomology is
$H^0=\KZ, H^1=\KZ^2, H^2 = \KZ^2 \oplus \KZ_p,H^3=\KZ$. 
It is clear that $\KZ^3 / p\KZ \simeq \KZ^2 \oplus \KZ_p$.

\item \underline{$W=M,$ an orientable surface of genus $g>1:$} 
The  cohomology of
$W \times S^1$ is $H^0=\KZ,H^1=\KZ^{2g+1},H^2=\KZ^{2g+1},H^3=\KZ$. The
$H$-flux is a class $j \times z \in H^3 \simeq H^2(M)\otimes H^1(S^1) 
\simeq \KZ$.
The T-dual is a bundle $j:E\to M$ with 
$H^0=\KZ,H^1=\KZ^{2g}, H^2=\KZ^{2g}\oplus \KZ_j,H^3=\KZ$. 
Here, $\KZ^{2g+1}/j \KZ \simeq \KZ^{2g} \oplus \KZ_j$.

\item \underline{$W={\mathbb R}{\mathbb P}^2:$}
The cohomology of $W\times S^1$ is
$H^0=\KZ,H^1=\KZ,H^2=\KZ_2\simeq H^2(W) \otimes H^0(S^1), H^3 \simeq
H^2(W)\otimes H^1(S^1) \simeq \KZ_2$. The $H$-flux is the class
$1\times z \in H^2(W) \otimes H^1(S^1)$.  
The T-dual is a bundle $k:E\to {\mathbb R}{\mathbb P}^2 $ with 
$H^0= \KZ,H^1= \KZ,H^2=0,H^3=\KZ_2$.
Once again, $\KZ_2 /1 \KZ_2 \simeq 0$.

\item \underline{$W={\mathbb R}{\mathbb P}^3:$}
The cohomology of $W\times S^1$ is
$H^0=\KZ,H^1=\KZ,H^2=\KZ_2,H^3=\KZ \oplus \KZ_2,H^4=\KZ$. Then, 
$H^3 \simeq H^2(W) \otimes H^1(S^1) \oplus H^3(W)\otimes H^0(S^1)$.
The $H$-flux is $1\times z + k \times 1$. 
Now, $\pi^{\ast} \pi_{!}(H) = 1 \times 1$.
The T-dual is $q:S^1 \times S^3 \to {\mathbb R}{\mathbb P}^3$ with
cohomology $H^0=\KZ,H^1=\KZ,H^3=\KZ,H^4=\KZ$. The T-dual
has no $B$-class and $H$-flux $k \in \KZ$.

\item \underline{$W= {\mathbb R}{\mathbb P}^{2m} (m > 1):$}
The cohomology of $W \times S^1$
is $H^0=\KZ,H^1=\KZ,H^q=\KZ_2,q=2,\cdots,m-1,H^{2m}=\KZ,H^{2m+1}=\KZ_2$.
The $H$-flux is the class 
$1 \times z \in H^2({\mathbb R}{\mathbb P}^{2m}) \otimes H^1(S^1)$. 
The T-dual is a bundle $q:E \to {\mathbb R}{\mathbb P}^{2m}$ 
with cohomology $H^0=\KZ,H^1=\KZ,H^{2m+1}=\KZ_2$.
Here, $\KZ_2/1 \KZ_2 \simeq 0$.

\item \underline{$W= {\mathbb R}{\mathbb P}^{2m+1} (m > 1):$}
The cohomology of $W \times S^1$
is $H^0=\KZ,H^1=\KZ,H^q=\KZ_2,q=2,\cdots, m-1, H^{2m+1}=\KZ\oplus \KZ_2,
H^{2m+2}=\KZ$. Note $H^3=\KZ_2;$ we have
that $H^3 \simeq H^2({\mathbb R}{\mathbb P}^{2m+1}) \otimes H^1(S^1)$.
The $H$-flux is the class $1\times z \in H^3 \simeq \KZ_2$.
The T-dual is a principal bundle 
$q:S^1 \times S^{2m+1} \to {\mathbb R}{\mathbb P}^{2m+1}$ with
cohomology $H^0=\KZ,H^1=\KZ,H^{2m+1}=\KZ,H^{2m+2}=\KZ$. 
The T-dual has no second cohomology as expected.

\item \underline{$W= \KCP^2:$} 
The cohomology of $W\times S^1$ is 
$H^0=\KZ,H^1=\KZ,H^2=\KZ,H^3=\KZ,H^4=\KZ,H^5=\KZ$. We have
$H^3 \simeq H^2(W)\oplus H^1(S^1) \simeq \KZ$. The $H$-flux is
the class $j\times z \in H^3$. 
The T-dual is the Lens space $L(2,j) \to \KCP^2$ if $j\neq 0$. 
It has cohomology $H^0=\KZ,H^2=\KZ_j,H^4=\KZ_j, H^5=\KZ$.
Once again, $\KZ/j \KZ \simeq \KZ_j$.
\end{enumerate}

\section{Properties of $R_{3,2}$}
We noted in Sec.\ (\ref{SecBr}) that from an element of ${\mathcal S}$ we
may obtain a {\em triple} over $W$ consisting of a principal 
$S^1$-bundle $p:E \to W$
a class $b \in H^2(E,\KZ)$ and a class $H \in H^3(E,\KZ).$
As in Sec.\ (\ref{SecClass}), the map
$W \to (\text{\bf Triples over } W)$ is a set-valued functor on
the category of unbased CW complexes. It has a classifying space
denoted $R_{3,2}.$ We have natural fibrations $q_2:R_{3,2} \to R_2$
and $q_3:R_{3,2} \to R_3.$ We will show below that this classifying
space has a canonical bundle over it which classifies triples.

We noted above that $R_{3,2}$ is a fiber product 
$R_2 \underset{K(\KZ,2)}{\times} R_3.$ 
As in Sec.\ (\ref{SecClass}), this classifying space is unbased. However,
one may choose a basepoint for it as in that section. 
We pick basepoints $r_i \in R_i$ and a basepoint $r_{3,2} \in R_{3,2}$
such that $q_2(r_{3,2}) = r_2$ and $q_3(r_{3,2}) = r_3.$ (That is, the
$q_i$ may be taken to be based maps.)

\begin{theorem}
\begin{enumerate}
\item The homotopy groups of $R_{3,2}$ are
\begin{itemize}
\item $\pi_1(R_{3,2}) \simeq \KZ,$
\item $\pi_2(R_{3,2}) \simeq \KZ^3,$
\item $\pi_3(R_{3,2}) \simeq \KZ,$
\item $\pi_i(R_{3,2}) \simeq 0,$  $i > 3.$
\end{itemize}
\item $R_{3,2}$ is not simple
\end{enumerate}
\label{UnivCover}
\end{theorem}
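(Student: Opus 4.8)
The plan is to read off both statements from the fibre-product description $R_{3,2}\simeq R_2\underset{K(\KZ,2)}{\times}R_3$, together with the homotopy groups of $R_2$ already computed above, the action of $\pi_1(R_2)$ on $\pi_2(R_2)$ from Theorem (\ref{ThmR2Action}), and the corresponding data for the Bunke--Schick space $R_3$. First I would record the homotopy groups of $R_3$: exactly as $R_2$ was shown to be the homotopy fibre of the cup-product map $K(\KZ,2)\times K(\KZ,1)\to K(\KZ,3)$ (with invariants $[p]$ and $p_!$ of the $H^2$-class), the space $R_3$ is the homotopy fibre of the cup-product map $K(\KZ,2)\times K(\KZ,2)\to K(\KZ,4)$ (invariants $[p]$ and $p_!(H)\in H^2$, constrained by $[p]\cup p_!(H)=0$ via the Gysin sequence); this is in \cite{Bunke}, and the long exact sequence of the fibration $K(\KZ,3)\to R_3\to K(\KZ,2)\times K(\KZ,2)$ gives $\pi_1(R_3)=0$, $\pi_2(R_3)\simeq\KZ^2$, $\pi_3(R_3)\simeq\KZ$, and $\pi_i(R_3)=0$ for $i\geq 4$, with the projection $R_3\to K(\KZ,2)$ onto the $[p]$-factor inducing the first-coordinate surjection $\KZ^2\twoheadrightarrow\KZ$ on $\pi_2$.

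Next I would compute the homotopy fibre $F$ of that projection $R_3\to K(\KZ,2)$. Since the map is onto on $\pi_2$ and $R_3$ is simply connected, its long exact sequence yields $\pi_1(F)=0$, $\pi_2(F)\simeq\ker(\KZ^2\twoheadrightarrow\KZ)\simeq\KZ$, $\pi_3(F)\simeq\KZ$, and $\pi_i(F)=0$ for $i\geq 4$. Because $R_{3,2}$ is the pullback of $R_3\to K(\KZ,2)$ along the projection $R_2\to K(\KZ,2)$, the fibre of $q_2\colon R_{3,2}\to R_2$ is again $F$. Feeding $\pi_\ast(F)$ and the groups $\pi_1(R_2)\simeq\KZ$, $\pi_2(R_2)\simeq\KZ^2$, $\pi_i(R_2)=0$ ($i>2$) into the long exact sequence of $F\to R_{3,2}\xrightarrow{q_2}R_2$ then gives, in the only nonvanishing range: $\pi_3(R_{3,2})\simeq\pi_3(F)\simeq\KZ$; a short exact sequence $0\to\pi_2(F)\to\pi_2(R_{3,2})\to\pi_2(R_2)\to 0$, i.e.\ $0\to\KZ\to\pi_2(R_{3,2})\to\KZ^2\to 0$, which splits as all terms are free abelian, so $\pi_2(R_{3,2})\simeq\KZ^3$; and $q_{2\ast}\colon\pi_1(R_{3,2})\xrightarrow{\ \sim\ }\pi_1(R_2)\simeq\KZ$. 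All higher groups vanish, proving part (1). As a cross-check one can instead build, exactly as in Lemma (\ref{R2FibLemma}), the fibration $K(\KZ,2)\times K(\KZ,3)\to R_{3,2}\to K(\KZ,2)\times K(\KZ,1)\times K(\KZ,2)$ whose invariants are $[p]$, $p_!(b)$ and $p_!(H)$, and run its long exact sequence; it gives the same answer.

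For part (2) I would argue that non-simplicity is inherited from $R_2$ through $q_2$, using the basepoints $r_2$, $r_{3,2}$ chosen so that $q_2$ is a based map. From the computation above, $q_{2\ast}$ is an isomorphism on $\pi_1$ and is surjective on $\pi_2$ (the last arrow of the short exact sequence), and, since $q_2$ is based, these maps are equivariant: $q_{2\ast}(\gamma\cdot\alpha)=q_{2\ast}(\gamma)\cdot q_{2\ast}(\alpha)$ for $\gamma\in\pi_1(R_{3,2})$, $\alpha\in\pi_2(R_{3,2})$. If $\pi_1(R_{3,2})$ acted trivially on $\pi_2(R_{3,2})$, then applying $q_{2\ast}$ and using surjectivity of both maps would force $\pi_1(R_2)$ to act trivially on $\pi_2(R_2)$, contradicting Theorem (\ref{ThmR2Action}), which says the generator acts by $(a,b)\mapsto(a+b,b)$. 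Hence $R_{3,2}$ is not simple.

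The main obstacle is not any single hard computation but the bookkeeping around basepoints and fibre identifications: one must make sure the fibre of $q_2$ really is the fibre of $R_3\to K(\KZ,2)$ (pullback of fibrations), that all the maps in play are compatible with the chosen basepoints $r_2,r_3,r_{3,2}$ so that the long exact sequences and the $\pi_1$-equivariance in part (2) are legitimate, and that the $\pi_2$ extension splits. Because $\pi_1(R_{3,2})\neq 0$, the basepoint compatibility — guaranteed by the prior choice $q_i(r_{3,2})=r_i$ — is the step most easily overlooked and the one that makes the argument for part (2) go through cleanly.
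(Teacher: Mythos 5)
Your proposal is correct and follows essentially the same route as the paper: both arguments run the long exact homotopy sequence of the fibration $q_2:R_{3,2}\to R_2$ (whose fiber has $\pi_2\simeq\pi_3\simeq\KZ$), split the resulting extension $0\to\KZ\to\pi_2(R_{3,2})\to\KZ^2\to 0$, and deduce non-simplicity from the $\pi_1$-equivariance of the surjection onto $\pi_2(R_2)$ together with Theorem \ref{ThmR2Action}. The only cosmetic difference is that you identify the fiber of $q_2$ as the homotopy fiber of $R_3\to K(\KZ,2)$ via the fiber-product structure, whereas the paper identifies it directly as $K(\KZ,3)\times K(\KZ,2)$ by a classifying-space argument; the homotopy groups agree, so the computations coincide.
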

\begin{proof}
Given an unbased map $f:W \to R_{3,2},$ classifying a triple 
$([p],b,H)$ over $W,$ we pick a basepoint $x_0$ for $W.$ 
We use the argument of Lemma (\ref{R2BpLemma}), to homotope $f$ to 
a based map. Note that
$q_3 \circ f$ is a pair over $W,$ the pair $([p],H).$ If 
$q_3 \circ f$ is nullhomotopic, then $[p] = 0, H=0$ and hence
$b$ defines classes in $H^2(W \times S^1,\KZ)$ and 
$H^1(W \times S^1,\KZ).$
That is, we obtain a based map $W \to K(\KZ,2) \times K(\KZ,1)$
and hence the homotopy fiber of $q_3$ is $K(\KZ,2) \times K(\KZ,1).$
Similarly we see that the homotopy fiber of $q_2$ is
$K(\KZ,3) \times K(\KZ,2).$

\begin{enumerate}
\item We know that $R_{3,2}$ may be taken to be a based fibration of the form
$$ K(\KZ, 2) \times K(\KZ,1) \to R_{3,2} \to R_3$$
and also a based fibration of the form
$$ K(\KZ,3) \times K(\KZ,2) \to R_{3,2} \to R_2.$$
This gives two long exact sequences for the homotopy groups of 
$R_{3,2}:$
$0 \to \pi_3 \to \KZ \to \KZ \to \pi_2 \to \KZ^2 \to \KZ \to \pi_1 \to 0$
and 
$0 \to \KZ \to \pi_3 \to 0 \to \KZ \to \pi_2 \to \KZ^2 \to 0 \to \pi_1 \to \KZ.$
From these sequences it is clear that $\pi_i(R_{3,2}) \simeq 0$ if
$i > 3.$ Further, from the second sequence $\pi_3 \simeq \KZ,$
and $\pi_2 \simeq \KZ^3.$ Then it follows that $\pi_1 \simeq \KZ.$
\item From the long exact sequence of the fibration over $R_2,$ we have
a $\pi_1(R_2)$ equivariant sequence 
$0 \to \pi_2(K(\KZ,3) \times K(\KZ,2)) \to \pi_2 \to \KZ^2 \to 0 \to
\pi_1 \to \KZ.$
From this, it is clear that the generator of $\pi_1(R_{3,2})$ maps 
isomorphically to the generator of $\pi_1(R_2).$ In addition, this generator
acts on the degree two part of the sequence. Since 
$K(\KZ,3) \times K(\KZ,2)$ is simply connected and since
every action of $\pi_1$ factors through the action of
the fundamental group, the generator of $\pi_1(R_{3,2})$
acts trivially on $\pi_2(K(\KZ,3) \times K(\KZ,2))$. 
In the above sequence, this generator acts nontrivially on 
$\pi_2(R_2) \simeq \KZ^2$
and hence must act nontrivially on $\pi_2(R_{3,2})$ since the 
sequence is $\pi_1$-equivariant. Hence $R_{3,2}$ is not simple. 
\end{enumerate}
\end{proof}
We determine the action of $\pi_1(R_{3,2})$ on $\pi_2(R_{3,2})$ below
(see Eq.\ (\ref{Pi2OnR2}) below and surrounding text).

The T-duality map should be a map $T_{3,2}:R_{3,2} \to R_{3,2}.$ It might
be hoped that this map is a fiber product of two maps
$R_3 \to R_3$ and $R_2 \to R_2.$ Unfortunately, $T_{3,2}$ cannot be of
this form: Note that under T-duality a pair $([p],b)$ maps into
a {\em triple} $(0,b^{\#},[p] \times z).$ 
%
%
To determine $T,$ we first need to study the structure of $R_{3,2}.$
Since $R_{3,2}$ is a fibration over $R_2,$ and $R_2$ is a mapping torus
(see previous section), it might be hoped that $R_{3,2}$ is also a
mapping torus. This is indeed the case.

\begin{lemma}
The universal cover $\tilde{R}_{3,2}$ of $R_{3,2}$ is homotopy equivalent
to $R_3 \times K(\KZ,2).$
Hence $R_{3,2}$ is a mapping torus of a map 
$\psi: R_3 \times K(\KZ,2) \to R_3 \times K(\KZ,2).$
\end{lemma}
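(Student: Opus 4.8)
The plan is to exhibit $\tilde R_{3,2}$ as a fiber product (up to homotopy) and then quote the Postnikov/mapping-torus structure already established for $R_2$. First I would recall from Theorem~\ref{UnivCover} that $\pi_1(R_{3,2})\simeq\KZ$ and that the generator maps isomorphically onto the generator of $\pi_1(R_2)$ under the fibration $q_2:R_{3,2}\to R_2$. Hence the universal cover $\tilde R_{3,2}$ is obtained by pulling back the universal cover $\tilde R_2$ of $R_2$ along $q_2$; equivalently, $\tilde R_{3,2}\simeq R_{3,2}\times_{R_2}\tilde R_2$. Now $R_{3,2}$ is the fiber product $R_3\times_{K(\KZ,2)}R_2$, so $\tilde R_{3,2}\simeq R_3\times_{K(\KZ,2)}\tilde R_2$. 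By the previous section, the universal cover of $R_2$ is $\tilde R_2\simeq K(\KZ^2,2)$, and since $R_2$ is a fibration $K(\KZ,2)\to R_2\to K(\KZ,2)\times K(\KZ,1)$, the composite $\tilde R_2\to R_2\to K(\KZ,2)$ (the ``bundle class'' map) is the projection $K(\KZ^2,2)\to K(\KZ,2)$ onto one factor; in particular it is (homotopic to) a split surjection with section. Therefore the pullback of $R_3\to K(\KZ,2)$ along this split map splits off a free $K(\KZ,2)$ factor, giving $\tilde R_{3,2}\simeq R_3\times_{K(\KZ,2)}K(\KZ^2,2)\simeq R_3\times K(\KZ,2)$.

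Concretely, the way I would make this precise is to note that $R_3\times_{K(\KZ,2)}K(\KZ^2,2)$ sits in a pullback square over $K(\KZ,2)$ where one leg is $\mathrm{pr}_1:K(\KZ^2,2)\to K(\KZ,2)$. Because $\mathrm{pr}_1$ admits the section $x\mapsto(x,0)$, the total space of the pullback is homotopy equivalent to $R_3\times_{K(\KZ,2)}\bigl(K(\KZ,2)\times K(\KZ,2)\bigr)$ where the first factor of the product is the one glued to $R_3$; this is just $R_3\times K(\KZ,2)$, the second $K(\KZ,2)$ being the ``extra'' coordinate coming from the kernel of $\mathrm{pr}_1$. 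This identifies $\tilde R_{3,2}$ up to homotopy. A sanity check against the homotopy groups is immediate: $\pi_2(R_3\times K(\KZ,2))\simeq\pi_2(R_3)\oplus\KZ$, and since $R_3$ (being Bunke's classifying space for $3$-pairs) has $\pi_2(R_3)\simeq\KZ^2$, one gets $\pi_2(\tilde R_{3,2})\simeq\KZ^3$, matching Theorem~\ref{UnivCover}; similarly $\pi_3(\tilde R_{3,2})\simeq\pi_3(R_3)\simeq\KZ$ and higher groups vanish.

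For the mapping-torus conclusion, since $\pi_1(R_{3,2})\simeq\KZ$ acts on the simply connected $\tilde R_{3,2}$ by deck transformations, $R_{3,2}$ is the quotient of $\tilde R_{3,2}\times\KR$ by the diagonal $\KZ$-action, which is exactly the mapping torus of the generating deck transformation $\psi:\tilde R_{3,2}\to\tilde R_{3,2}$. Transporting $\psi$ across the homotopy equivalence $\tilde R_{3,2}\simeq R_3\times K(\KZ,2)$ gives a self-map $\psi:R_3\times K(\KZ,2)\to R_3\times K(\KZ,2)$ with $R_{3,2}$ its mapping torus. (One may further observe that $\psi$ fixes the $R_3$-direction up to homotopy and acts on the $K(\KZ,2)$ coordinate by a shear involving the $K(\KZ,2)$-class of $R_3$, in analogy with Theorem~\ref{ThmR2Action}, but this refinement is not needed for the statement.)

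\textbf{The main obstacle} I anticipate is the bookkeeping needed to check that the ``bundle class'' map $\tilde R_2\to K(\KZ,2)$ really is (homotopic to) a split projection $K(\KZ^2,2)\to K(\KZ,2)$ rather than some twisted map — i.e.\ that pulling $R_3$ back along it genuinely splits off a free $K(\KZ,2)$ factor. This is where one must use the explicit description of $R_2$ as the homotopy fiber of the cup-product map $c:K(\KZ,2)\times K(\KZ,1)\to K(\KZ,3)$: over the universal cover of the $K(\KZ,1)$-factor the map $c$ becomes nullhomotopic, so $\tilde R_2$ becomes $K(\KZ,2)\times\Omega K(\KZ,3)\simeq K(\KZ,2)\times K(\KZ,2)$ with the first factor mapping to $K(\KZ,2)$ by the identity. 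Once that identification is in hand the rest is a formal manipulation of fiber products and the section $x\mapsto(x,0)$.
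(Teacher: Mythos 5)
Your argument is correct, but it reaches the identification of $\tilde R_{3,2}$ by a genuinely different mechanism than the paper. The paper works with the fibration over $R_3$: it computes the homotopy groups of the homotopy fiber $W$ of $\tilde R_{3,2}\to R_3$ from two long exact sequences, finds $W\simeq K(\KZ,2)$, then invokes a criterion of McCleary (together with finiteness of the spaces and simple connectivity of $R_3$) to see that the resulting $K(\KZ,2)$-fibration is principal, and finally uses $H^3(R_3,\KZ)=0$ from Bunke--Schick to conclude that its classifying map $R_3\to K(\KZ,3)$ is nullhomotopic, so the fibration is a product. You instead work over $R_2$: since $q_{2\ast}:\pi_1(R_{3,2})\to\pi_1(R_2)$ is an isomorphism (which the paper does establish), the pullback $R_{3,2}\times_{R_2}\tilde R_2$ is the universal cover, and associativity of (homotopy) fiber products gives $\tilde R_{3,2}\simeq R_3\times_{K(\KZ,2)}\tilde R_2$; the description of $R_2$ as the homotopy fiber of the cup-product map $c$ then shows that $\tilde R_2\to K(\KZ,2)$ is the untwisted projection $K(\KZ^2,2)\to K(\KZ,2)$ (because $c$ becomes nullhomotopic over $K(\KZ,2)\times\KR$, as $H^3(K(\KZ,2),\KZ)=0$), and pulling $R_3$ back along a product projection collapses the fiber product to $R_3\times K(\KZ,2)$. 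Your route trades the principal-fibration/classifying-map argument for formal manipulation of pullbacks plus an elementary cohomology vanishing, which is arguably cleaner; the one point that genuinely needs care --- that the map $\tilde R_2\to K(\KZ,2)$ is the split projection rather than something twisted --- you identify and handle correctly. Your mapping-torus step, realizing $R_{3,2}\simeq(\tilde R_{3,2}\times\KR)/\KZ$ as the mapping torus of the generating deck transformation, is a standard replacement for the paper's diagram chase through $R_2\to K(\KZ,1)$, and your sanity check against $\pi_2(R_{3,2})\simeq\KZ^3$, $\pi_3(R_{3,2})\simeq\KZ$ matches the paper's computation.
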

\begin{proof}
Let $p:\tilde{R}_{3,2} \to R_{3,2}$ be the covering projection. 
Pick $\tilde{r}_{3,2} \in \tilde{R}_{3,2}$ 
such that $p(\tilde{r}_{3,2}) = r_{3,2}.$
Consider the map $\phi_{\ast}:\pi_2(\tilde{R}_{3,2}) \to \pi_2(R_3) $
given by $\phi_{\ast} = q_{2\ast} \circ p_{\ast}.$
$\pi_2(\tilde{R}_{3,2}) \underset{p_{\ast}}{\to}
\pi_2(R_{3,2}) \underset{q_{3 \ast}}{\to} \pi_2(R_3).$
Note that $p_{\ast}$ is an isomorphism by definition. I claim $\phi$ is onto:
We have a fibration $K(\KZ,2) \times K(\KZ,1) \to R_{3,2} \overset{q_3}{\to}
R_3.$ 
The long exact sequence of homotopy groups associated to this fibration
is  

\begin{equation}
\CD
0 @>>> \KZ @>\lambda>> \KZ @>\mu>> \KZ @>\kappa>> \KZ^3 \\
@>\phi>> \KZ^2 @>>> \KZ @>>> \KZ @>>> 0. \\
\endCD
\label{EqSeq1} \\
\end{equation}
Note that all the groups in the sequence are torsion-free.
Here, $\lambda$ is an injective map $\KZ \to \KZ$ and hence
$\lambda$ is an isomorphism and $\ker(\mu) \simeq \KZ.$ Hence by exactness
and the absence of torsion $\mu = 0,\kappa$ is injective and $\phi$ onto. 

Consider the map $\phi=q_3 \circ p:\tilde{R}_{3,2} \to R_3.$ 
By the arguments in Lemma (\ref{R2BpLemma})
we may take it to be based. Let $W$ be its homotopy fiber.
Then the l.e.s. of the fibration $W \to \tilde{R}_{3,2} \to R_3$ is

\begin{equation}
\CD
\minCDarrowwidth1pc
0 @>>> \pi_3(W) @>\alpha>> \KZ @>\beta>> \KZ @>\gamma>>\pi_2(W) \\
@>\mu>> \KZ^3 @>\kappa>> \KZ^2 @>\nu>> \pi_1(W) @>>> 0 @>>> 0.
\endCD
\nonumber \\
\end{equation}

We know $\kappa$ is onto and hence $\nu = 0$ and $\pi_1(W)=0.$
Since $\kappa$ is onto, $\ker(\kappa) \simeq \KZ,$ and hence, 
$\im(\mu) \simeq \KZ.$ 

Now, consider the map $\phi_{\ast}:\pi_3(\tilde{R}_{3,2}) \to \pi_3(R_3) $
given by $\phi_{\ast}= q_{3\ast} \circ p_{\ast}$ in degree $3.$
Note that $p_{\ast}$ is an isomorphism by definition.
By the l.e.s. Eq.\ (\ref{EqSeq1}), $\beta = \lambda \circ p_{\ast}.$ 
By the argument given above both $\lambda$ and $p_{\ast}$ are isomorphisms,
and so is $\beta.$
Hence, $\im(\alpha) = 0$ and $\pi_3(W) \simeq 0.$

Also $\ker(\gamma) \simeq \KZ$ and so $\gamma = 0.$
Hence $\pi_2(W) \simeq \KZ$ by counting ranks.
Its clear that $\pi_i(W) \simeq 0$ if $i > 3.$
Therefore, $W$ is homotopy equivalent to $K(\KZ,2).$

Now, $W \to \tilde{R}_{3,2} \to R_3$ is a based fibration with 
$W \sim K(\KZ,2).$  Also, $R_{3,2}$ and $R_3$ are spaces of finite 
type since their fundamental groups are finitely generated 
(by Ref.\ \cite{Wall}) and the action of $\pi_1(R_3)$ on $W$ is
zero since $\pi_1(R_3) \simeq 0$.  By Lemma $8^{bis}.28$ of 
Ref.\ \cite{McCleary}, this implies that the fibration is principal.
%
%
Hence, $\tilde{R}_{3,2} \simeq K(\KZ,2) \times R_3$  because by 
Ref.\ \cite{Bunke}
$H^3(R_3,\KZ) = 0$ and so the classifying map $R_3 \to K(\KZ,3)$ is
always trivial.

Since $R_3 \times K(\KZ,2)$ is homotopy equivalent to 
the universal cover of $R_{3,2},$ we have 
a commutative diagram of spaces
\begin{equation}
\CD
R_3 \times K(\KZ,2) @>{\tilde{q}_2 }>> K(\KZ,2) \times K(\KZ,2)\\
@V{p_{3,2}}VV  @V{p_2}VV  \\
R_{3,2} @>q_2>> R_2 @>>> K(\KZ,1).
\endCD
\nonumber
\end{equation}
which gives a diagram of homotopy groups and maps equivariant under
the action of $\pi_1(K(\KZ,1))\simeq \KZ$. It is clear that since $R_2$ is the
mapping torus of $K(\KZ^2,2)$ under the lift of this action, 
$R_{3,2}$ is the mapping torus of $R_3 \times K(\KZ,2)$ under the
lift of this action (by the commutativity of the diagram).
The generator of this action acts on $R_3 \times K(\KZ,2)$ as the
generator $\psi$ of the deck transformation group. 
\end{proof}

It is interesting to note that we have a sequence of 
mapping tori
$$
R_{3,2} \to R_2 \to K(\KZ,1)
$$
such that there are isomorphisms
$\pi_1(R_{3,2}) \simeq \pi_1(R_2) \simeq \pi_1(K(\KZ,1))$ and 
$\pi_1(K(\KZ,1))$ acts on the homotopy groups of each of these spaces.

Note that $\tilde{R}_2$ is a trivial $K(\KZ,2)$ fibration over 
$K(\KZ,2)$ (since $R_2$ is).
Also, $\tilde{R}_{3} \simeq R_3$ is a fibration over $K(\KZ,2)$ as
well (since $R_3$ is). It then follows from the definition 
that $\tilde{R}_{3,2}$ is the fiber
product of $\tilde{R}_3$ and $\tilde{R}_2$ over $K(\KZ,2).$ 

We have a commutative diagram
\begin{equation}
\CD
R_3 \times K(\KZ,2) @>{\tilde{q}_2 }>> K(\KZ,2) \times K(\KZ,2)\\
@V{p_{3,2}}VV  @V{p_2}VV \\
R_{3,2} @>q_2>> R_2.
\endCD
\nonumber
\end{equation}

We note that $\pi_2(R_{3,2}) \simeq \pi_2(\tilde{R}_{3,2}).$
We choose generators for $\pi_2(\tilde{R}_{3,2}) \simeq \KZ^3$ 
such that the projection
$(a,b,c) \mapsto c$ is induced by the map
$R_3 \times K(\KZ,2) \to K(\KZ,2).$
Also the projection $(a,b,c) \mapsto (a,b)$ is induced by the
map $R_3 \times K(\KZ,2) \to R_3.$
With these choices of generators, the map $\tilde{q}_2$ 
is given by $(a,b,c) \mapsto (a,c).$

In Sec.\ (\ref{SecClass}) we had noted that the deck group of $\tilde{R}_2$
acts on pairs by multiplication by a matrix of the form
$$
\left(
\begin{array}{cc}
1 & 1 \\
0 & 1 \\
\end{array}
\right).
$$

The map $\tilde{q}_2^{\ast}$ on $\pi_2$ 
must be equivariant under the action of
the deck group since 
$q_2^{\ast}$ is equivariant under the action of $\pi_1$
and the maps $p_{3,2}^{\ast}$ and $p_2^{\ast}$ are isomorphisms on 
$\pi_2.$
This implies that the deck group acts on $\pi_2(\tilde{R}_{3,2})$
by multiplication by the matrix 
\begin{equation}
\left(
\begin{array}{ccc}
1 & 0 & 1 \\
0 & 1 & 0 \\
0 & 0 & 1 \\
\end{array}
\right).
\label{Pi2OnR2}
\end{equation}
Using the isomorphisms $p_{3,2}^{\ast}$ and $p_2^{\ast}$ above,
this is also the action of $\pi_1(R_{3,2})$ on $\pi_2(R_{3,2}).$
We may also determine the deck transformation explicitly as follows:

\begin{lemma}
The deck transformation $\psi:\tilde{R}_{3,2} \to \tilde{R}_{3,2}$ 
has the form $\psi=\pi_1 \times f$ where $\pi_1:R_3 \times K(\KZ,2) \to R_3$ is
the projection onto the first factor and $f:R_3 \times K(\KZ,2) \to K(\KZ,2)$
is determined below.
\end{lemma}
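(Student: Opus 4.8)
The plan is to extract the ``$R_3$-rigidity'' of $\psi$ from the single fact that a deck transformation commutes with its covering projection, and then to identify the residual $K(\KZ,2)$-component by obstruction theory.

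First I would note that, $\psi$ being a deck transformation of the covering $p\colon\tilde R_{3,2}\to R_{3,2}$, we have $p\circ\psi=p$; composing with the fibration $q_3\colon R_{3,2}\to R_3$ and writing $\phi=q_3\circ p\colon\tilde R_{3,2}\to R_3$ for the map of the preceding lemma, this gives $\phi\circ\psi=\phi$. The preceding lemma trivializes the principal $K(\KZ,2)$-fibration $\phi$ (using $H^3(R_3,\KZ)=0$) by a homotopy equivalence $\tilde R_{3,2}\simeq R_3\times K(\KZ,2)$ over $R_3$, so under this equivalence $\phi$ is carried to the projection $\pi_1\colon R_3\times K(\KZ,2)\to R_3$, and the relation $\phi\circ\psi=\phi$ becomes $\pi_1\circ\psi\simeq\pi_1$. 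Since $\pi_1$ is a fibre bundle projection, I would then lift this homotopy to replace $\psi$, within its homotopy class, by a map satisfying $\pi_1\circ\psi=\pi_1$ strictly; such a map is exactly one of the form $(x,y)\mapsto\bigl(x,f(x,y)\bigr)$ with $f:=\pi_2\circ\psi\colon R_3\times K(\KZ,2)\to K(\KZ,2)$, i.e.\ $\psi=\pi_1\times f$ in the notation of the statement. In particular $\psi$ covers $\id_{R_3}$, which is the assertion of the lemma.

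Although the precise class representing $f$ is determined below, I would already indicate how it is pinned down. Since $K(\KZ,2)$ is an Eilenberg--MacLane space and $R_3$ is simply connected (so $H^1(R_3,\KZ)=0$), the homotopy class of $f$ is determined by $f^{\ast}(\iota_2)\in H^2\bigl(R_3\times K(\KZ,2),\KZ\bigr)\cong H^2(R_3,\KZ)\oplus H^2\bigl(K(\KZ,2),\KZ\bigr)$, equivalently by $f_{\ast}$ on $\pi_2$. But $f_{\ast}$ is just the projection of $\psi_{\ast}$ onto the $\pi_2\bigl(K(\KZ,2)\bigr)$-summand, and $\psi_{\ast}$ is the action of the generator of $\pi_1(R_{3,2})$ on $\pi_2(R_{3,2})\cong\pi_2(\tilde R_{3,2})$ recorded in (\ref{Pi2OnR2}); reading this off shows that $f$ restricts to the identity on the $K(\KZ,2)$-factor and to a distinguished degree-two class on $R_3$.

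The step I expect to demand the most care is reconciling this with (\ref{Pi2OnR2}): one must verify that the trivialization of the preceding lemma carries the chosen generators of $\pi_2(\tilde R_{3,2})$ to the split generators of $\pi_2\bigl(R_3\times K(\KZ,2)\bigr)$ compatibly with the conventions used there, so that, in those coordinates, the displayed matrix indeed fixes the two $\pi_2(R_3)$-coordinates and alters only the $\pi_2\bigl(K(\KZ,2)\bigr)$-coordinate. This last point is what is forced by $\pi_1\circ\psi\simeq\pi_1$, and it is precisely where the internal consistency of the coordinate choices has to be pinned down. Everything else is formal: the homotopy-lifting strictification of $\pi_1\circ\psi$ and the rigidity of maps into an Eilenberg--MacLane space.
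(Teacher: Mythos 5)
Your argument is correct, but it is genuinely different from the one in the paper. The paper works ``pointwise on test objects'': for a triple $(p,b,H)$ over a simply connected $W$, the Gysin sequence shows the lifts of its classifying map to $\tilde{R}_{3,2}\simeq R_3\times K(\KZ,2)$ are exactly the data $\bigl(([p],H),y\bigr)$ with $p^{\ast}(y)=b$, a coset of $\KZ[p]$; since the deck group permutes lifts simply transitively, it must act by $\bigl(([p],H),y\bigr)\mapsto\bigl(([p],H),y+m[p]\bigr)$, which yields the product form and the formula for $f$ in one stroke. You instead derive the product form formally: $p_{3,2}\circ\psi=p_{3,2}$ forces $\pi_1\circ\psi\simeq\pi_1$ after the fibrewise trivialization of the preceding lemma, and the homotopy lifting property strictifies this to $\psi=\pi_1\times f$; you then recover $f$ from its effect on $\pi_2$ via the Eilenberg--MacLane property of the target. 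Both routes are sound. Yours makes transparent \emph{why} the deck transformation cannot mix into the $R_3$-factor (a point the paper only gets as a byproduct of its coset computation), and it cleanly reduces the identification of $f$ to linear algebra on $\pi_2$; the paper's route computes $f$ directly and avoids the one delicate point you rightly flag, namely the coordinate conventions in Eq.\ (\ref{Pi2OnR2}). That caveat is real: the matrix as displayed there sends $(a,b,c)\mapsto(a+c,b,c)$, i.e.\ it alters an $R_3$-coordinate and fixes $c$, which is incompatible with $\pi_1\circ\psi\simeq\pi_1$; consistency forces the intended action to be the transpose (fix $(a,b)$, send $c\mapsto c+a$), and your step 1 is in effect the proof that this is the correct reading. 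So your sketch of the determination of $f$ goes through, but only after that correction is made explicit; as written, ``reading off'' the displayed matrix would give $f=\pi_2$ and a $\psi$ that is not fibrewise over $R_3$.
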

\begin{proof}
Consider a triple over a simply connected space $W.$ This determines
a classifying map $f:W \to R_{3,2}.$ Using the argument of 
Lemma (\ref{R2BpLemma}) we may take it to be based.
Since $W$ is simply connected, this
map lifts to $\tilde{R}_{3,2}.$ This implies that a triple $(p,b,H)$
over $W$ is determined by a pair $([p],H)$, $p:E \to W,$ and a class $b$ in 
$H^2(W,\KZ).$ From the Gysin sequence, since $H^1(W,\KZ) \simeq 0$
we see that 
$\KZ \simeq H^0(W,\KZ)\stackrel{\cup p}{\to} H^2(W,\KZ) \to H^2(E,\KZ) \to 0$ 
is exact. Thus, every class in $H^2(E,\KZ)$ is the image of
some class in $H^2(W,\KZ)$ from the Gysin sequence. This is the class
$b$ above. Further two such classes differ by an integral multiple of $p$.

It is also clear that if $\psi:\tilde{R}_{3,2} \to \tilde{R}_{3,2}$ is
a deck transformation, then, since $p_{3,2} \circ \psi=\psi$,  
$\psi \circ f$ represents the same triple over $W.$ Thus, 
$\psi \circ f$ is also representable as $((p,H),b')$ for some
$b'$.
If $\tilde{f}$ is the lift of a map $f:W \to R_{3,2}$ representing a 
triple $(p,b,H)$ to $\tilde{R}_{3,2}$
then it is clear that all possible lifts may be 
represented as $((p,H),y+mp), m \in \KZ$ where
$y \in H^2(W,\KZ)$ and $p^{\ast}(y)=b$. 
By the above, the action of the deck transformation is to 
shift $((p,H),y)$ to $((p,H),y + m p)$. 
Thus, $\psi = \pi_1 \times f$ where 
$f:R_3 \times K(\KZ,2) \to K(\KZ,2)$ is the map defined by
$((p,H),y) \to y + mp, m \in \KZ$, and $\pi_1:R_3 \times K(\KZ,2) \to R_3$
be the projection onto the first factor. 
\end{proof}

It is clear that the map $f$ above defines a 
class in $H^2(R_3 \times K(\KZ,2), \KZ)$. 

As an illustration of this, consider all triples over $S^2.$ 
For any principal circle bundle $E_p \to S^2,$ we have that 
$H^3(E_p,\KZ) \simeq \KZ.$ Hence,
for each pair $([p],b),$ the $H$-flux can have countably many values.

\begin{lemma}
The set of unbased homotopy classes of
maps $S^2 \to R_3 \times K(\KZ,2)$ modulo the action of the
the deck transformation group on $\tilde{R}_{3,2}$
is exactly the set of triples over $S^2.$
\end{lemma}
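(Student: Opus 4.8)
The plan is to combine the defining property of $R_{3,2}$ as a classifying space with the standard covering-space dictionary for maps out of a simply connected complex. By construction, the set of triples over $S^2$ is $P_{3,2}(S^2) = [S^2, R_{3,2}]$, the set of \emph{unbased} homotopy classes of maps, and $R_{3,2}$ is path-connected (Thm.\ (\ref{UnivCover})), with universal covering $p_{3,2}\colon \tilde R_{3,2} \to R_{3,2}$ where $\tilde R_{3,2} \simeq R_3 \times K(\KZ,2)$, and deck transformation group $\Gamma = \pi_1(R_{3,2}) \simeq \KZ$ generated by $\psi$. So it suffices to exhibit a bijection $[S^2, R_{3,2}] \cong [S^2, R_3 \times K(\KZ,2)]/\Gamma$, the action of $\Gamma$ being post-composition with deck transformations.

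First I would define $\Phi\colon [S^2, \tilde R_{3,2}]/\Gamma \to [S^2, R_{3,2}]$ by $[\tilde f] \mapsto [p_{3,2}\circ \tilde f]$; this is visibly well defined on homotopy classes and on $\Gamma$-orbits. Surjectivity is immediate from the lifting criterion: $S^2$ is simply connected, so any $f\colon S^2 \to R_{3,2}$ lifts to some $\tilde f$ with $p_{3,2}\circ \tilde f = f$. For injectivity, suppose $p_{3,2}\circ \tilde f_0$ and $p_{3,2}\circ \tilde f_1$ are homotopic via $H\colon S^2 \times I \to R_{3,2}$; the homotopy lifting property produces $\tilde H$ with $\tilde H_0 = \tilde f_0$, and then $\tilde H_1$ and $\tilde f_1$ are two lifts of the same map $S^2 \to R_{3,2}$. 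Because the universal covering is regular, the fibre product $\tilde R_{3,2}\times_{R_{3,2}}\tilde R_{3,2}$ is homeomorphic to $\tilde R_{3,2}\times \Gamma$ with $\Gamma$ discrete; applying this to the continuous map $w \mapsto (\tilde H_1(w), \tilde f_1(w))$ on the connected space $S^2$ forces $\tilde H_1 = \gamma\circ \tilde f_1$ for a single $\gamma \in \Gamma$, whence $\tilde f_1 \simeq \gamma^{-1}\circ \tilde f_0$ and $[\tilde f_0]=[\tilde f_1]$ in the quotient. This yields the desired bijection, and substituting $\tilde R_{3,2}\simeq R_3 \times K(\KZ,2)$ gives the statement.

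The step requiring the most care is this injectivity argument: one must know that two lifts of a given map from a connected space differ by a \emph{global} deck transformation, which is precisely where regularity of the universal covering enters, and one must track the fact that all homotopy classes here are unbased so that no basepoint ambiguity is silently introduced (the same phenomenon handled in Lemma (\ref{R2BpLemma})). As a consistency check, since $R_3 \times K(\KZ,2)$ is simply connected one has $[S^2, R_3 \times K(\KZ,2)] \simeq \pi_2(R_3\times K(\KZ,2)) \simeq \pi_2(R_{3,2}) \simeq \KZ^3$, on which $\Gamma \simeq \KZ$ acts through the matrix (\ref{Pi2OnR2}); the quotient then recovers the description of a triple over the simply connected space $S^2$ as a pair $([p],H)$ together with a lift to $H^2(S^2,\KZ)$ of a class in $H^2(E_p,\KZ)$, the deck action being the shift $b \mapsto b + m[p]$.
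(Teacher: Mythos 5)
Your proof is correct, and it reaches the conclusion by a genuinely different route in the decisive step. The paper, like you, begins by using simple connectivity of $S^2$ to identify unbased homotopy classes of maps into $R_3 \times K(\KZ,2)$ with $\pi_2(\tilde{R}_{3,2}) \simeq \KZ^3$; but it then finishes by \emph{explicitly computing} the quotient of $\KZ^3$ by the deck matrix and matching the result against the hands-on enumeration of triples over $S^2$ (each class of $2$-pairs from Lemma (\ref{PairsOnS2}), times a free $\KZ$ worth of $H$-fluxes in the fixed last coordinate). You instead prove the abstract covering-space bijection $[S^2,\tilde{R}_{3,2}]/\Gamma \cong [S^2,R_{3,2}]$ --- surjectivity from the lifting criterion, injectivity by lifting the homotopy and using regularity of the universal cover to see that two lifts of one map from a connected space differ by a single deck transformation --- and then invoke representability, $[S^2,R_{3,2}] = P_{3,2}(S^2)$. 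Your argument buys generality and canonicity: it works verbatim for any simply connected CW complex in place of $S^2$ and produces an actual bijection rather than a matching of two explicit descriptions; the paper's version buys the concrete picture of the orbit set that the surrounding discussion actually uses, which you recover only as a consistency check. Two minor points: connectedness of $R_{3,2}$ comes from its being a fiber product of connected spaces rather than from Thm.\ (\ref{UnivCover}), and $\tilde{R}_{3,2} \simeq R_3 \times K(\KZ,2)$ is only a homotopy equivalence, so the deck action must be transported across it --- both harmless, and handled no more carefully in the paper itself.
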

\begin{proof}
Clearly, the set of unbased homotopy classes of
maps $S^2 \to R_3 \times K(\KZ,2)$ 
may be made based since both $R_3$ and $K(\KZ,2)$ are connected and
simply connected: This set is exactly $\pi_2(R_3 \times K(\KZ,2)).$

Hence, this set modulo the action of the
the deck transformation group on $\tilde{R}_{3,2}$
is the same as $\pi_2(\tilde{R}_{3,2})$ modulo the action of 
$\pi_1(\tilde{R}_{3,2}).$ By Thm.\ (\ref{UnivCover}), this is the
quotient of $\KZ^3$ under the action of the matrix 
$$
\left(
\begin{array}{ccc}
1 & 1 & 0 \\
0 & 1 & 0 \\
0 & 0 & 1 \\
\end{array}
\right).
$$

From the previous section (in particular the discussion before and after
Thm.\ (\ref{PairsOnS2}))
we know that the quotient of $\KZ^2$ by
the matrix
$$
\left(
\begin{array}{cc}
1 & 1 \\
0 & 1 \\
\end{array}
\right)
$$
is exactly the collection of pairs over $S^2.$ The action on $\KZ^3$
leaves the last coordinate fixed. Hence, for each isomorphism
class of pairs $([p],b)$ over $S^2,$ we obtain a countable number
of triples as required. 
\end{proof}
Note that at least for $S^2,$ the only action of the deck group is
to shift a pair $([p],H)$ to an equivalent pair. 

\section{The T-duality Mapping \label{SecTriple}}
We note that the maps $q_2:R_{3,2} \to R_2$ and $q_3:R_{3,2} \to R_3$
have natural sections. Thus these maps are injective on cohomology.
In particular (from the Theorem below) 
$H^2(R_{3,2},\KZ) \simeq \KZ^2 \simeq H^2(R_3,\KZ) \simeq 
\KZ a_1 \oplus \KZ a_2.$ 
Similarly, $H^2(R_2,\KZ) \into H^2(R_{3,2},\KZ).$ The latter map may be
taken as the inclusion $\KZ \into \KZ \oplus \KZ$ into the first factor.
Thus we may write $a_1 = q_2^{\ast}(b).$
Further, from the Theorem below, we see that $H^1(R_{3,2},\KZ) \simeq \KZ l.$ 
Therefore $l = q_2^{\ast}(a)$ and so $a_1 \cdot l=0$
since $b \cdot a =0$ in $H^{\ast}(R_2,\KZ)$ (see Eq.\ (\ref{R2Rel})).

In addition since $R_{3,2} \to R_3$ is fiber-preserving over 
$K(\KZ,2),$ and the natural map $R_{3,2} \to K(\KZ,2)$ has 
sections, we see that 
$\KZ[c] \simeq H^{\ast}(K(\KZ,2),\KZ) \into H^{\ast}(R_{3,2},\KZ).$

\begin{theorem}
The cohomology groups of $R_{3,2}$ are 
\begin{itemize}
\item $H^0(R_{3,2},\KZ) \simeq \KZ,$
\item $H^1(R_{3,2},\KZ) \simeq \KZ l,$
\item $H^2(R_{3,2},\KZ) \simeq \KZ a_1 \oplus \KZ a_2,$
\item $H^3(R_{3,2},\KZ) \simeq \KZ a_2 l,$
\item $H^4(R_{3,2},\KZ) \simeq \KZ a_1^2 \oplus \KZ a_2^2 \oplus \KZ x$
\end{itemize}
where $a_1 \cdot a_2 = 0$ and $a_1 \cdot l = 0.$ Also, $a_1 = q_2^{\ast}(b).$
\end{theorem}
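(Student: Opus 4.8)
The plan is to compute $H^{\ast}(R_{3,2},\KZ)$ from the universal covering $\KZ\simeq\pi_1(R_{3,2})\to\tilde R_{3,2}\to R_{3,2}$, in the same spirit as the computation of the cohomology of $R_2$ in Section \ref{SecClass}. Since $\KZ$ has cohomological dimension one, the Cartan--Leray spectral sequence
$$E_2^{p,q}=H^p\bigl(\KZ;\,H^q(\tilde R_{3,2},\KZ)\bigr)\;\Longrightarrow\;H^{p+q}(R_{3,2},\KZ)$$
is concentrated in the columns $p=0,1$, hence collapses at $E_2$ and yields, for every $n$, a short exact sequence
$$0\to\operatorname{coker}\!\bigl(\psi^{\ast}-1\colon H^{n-1}(\tilde R_{3,2})\to H^{n-1}(\tilde R_{3,2})\bigr)\to H^n(R_{3,2})\to\ker\!\bigl(\psi^{\ast}-1\colon H^{n}(\tilde R_{3,2})\to H^{n}(\tilde R_{3,2})\bigr)\to 0,$$
where $\psi$ is the deck generator identified in the previous section. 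These sequences split, since the kernel term is a subgroup of a free abelian group.

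First I would record $H^{\ast}(\tilde R_{3,2},\KZ)$ through degree $4$. Using $\tilde R_{3,2}\simeq R_3\times K(\KZ,2)$ and the K\"unneth theorem (all Tor-terms vanish, the groups being free), $H^{\ast}(\tilde R_{3,2},\KZ)\simeq H^{\ast}(R_3,\KZ)\otimes\KZ[c]$ with $|c|=2$, where $H^{\ast}(R_3,\KZ)$ in low degrees is (Ref.\ \cite{Bunke}) $\KZ$ in degree $0$, zero in degrees $1$ and $3$, $\KZ^2$ in degree $2$ generated by the Chern-class generator $\chi$ and a second class $\chi'$ subject to $\chi\chi'=0$, and free of rank $2$ in degree $4$ generated by $\chi^2,\chi'^2$. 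Second, I would pin down $\psi^{\ast}$: by the lemma $\psi=\pi_1\times f$ with $\pi_1$ the projection onto $R_3$, so $\psi^{\ast}$ fixes $\pi_1^{\ast}H^{\ast}(R_3)$ while $\psi^{\ast}(c)=c+\chi$ (the sign being read off from the matrix (\ref{Pi2OnR2}) for the $\pi_1(R_{3,2})$-action on $\pi_2(R_{3,2})$ after dualizing), and multiplicativity then determines $\psi^{\ast}$ on the whole ring.

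With $\psi^{\ast}$ in hand, a direct computation of $\ker(\psi^{\ast}-1)$ and $\operatorname{coker}(\psi^{\ast}-1)$ in each degree $\le 4$, fed into the collapsed spectral sequence, produces the groups. Degree $0$ gives $\KZ$; degree $1$ gives $\KZ l$, with $l$ the image of $1\in H^0(\tilde R_{3,2})$, the mapping-torus circle class (equivalently $l=q_2^{\ast}(a)$, $a$ the generator of $H^1(R_2)$); degree $2$ gives $\ker(\psi^{\ast}-1)|_{H^2}\simeq\KZ^2$, which I would identify with $\KZ a_1\oplus\KZ a_2$ via $a_1=q_2^{\ast}(b)$ (the Chern class, pulling back to $\chi$) and $a_2=q_3^{\ast}(\chi')$; degrees $3$ and $4$ are obtained the same way, and the new generator $x$ in degree $4$ is the descended class whose pullback to $\tilde R_{3,2}$ is $\chi'c$, a class outside the subring generated by $a_1,a_2,l$. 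For the relations, $a_1 l=q_2^{\ast}(ba)=0$ follows from the relation (\ref{R2Rel}) in $H^{\ast}(R_2)$, and $a_1 a_2=q_3^{\ast}(\chi\chi')=0$ follows from $\chi\chi'=0$ in $H^{\ast}(R_3)$; the classes $a_1^2,a_2^2,x$ are independent because their images $\chi^2,\chi'^2,\chi'c$ in $H^4(\tilde R_{3,2})$ are.

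The main obstacle is bookkeeping rather than any isolated hard step: pinning down $\psi^{\ast}$ exactly (the sign, and the identification of $\chi$ with $q_2^{\ast}(b)$), choosing the splittings of the Wang sequences so that they are compatible with the cup product, and --- most delicately --- establishing the relations $a_1 a_2=a_1 l=0$, which are invisible on the simply connected cover $\tilde R_{3,2}$ and must be deduced instead from the Gysin sequences of the universal circle bundles over $R_2$ and over $R_3$, i.e.\ ultimately from (\ref{R2Rel}) and the structure of $R_3$ imported from \cite{Bunke}.
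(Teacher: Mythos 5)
Your proposal is correct and follows essentially the same route as the paper: identify $\tilde R_{3,2}\simeq R_3\times K(\KZ,2)$, read off the deck action on $H^2$ from the matrix (\ref{Pi2OnR2}), extend multiplicatively to $H^4$, and run the Cartan--Leray spectral sequence for the $\KZ$-cover (which collapses at $E_2$), with the relations $a_1a_2=0$ and $a_1l=0$ imported from $H^{\ast}(R_3)$ and from (\ref{R2Rel}) via the injective pullbacks $q_3^{\ast}$, $q_2^{\ast}$. The only cosmetic difference is that you invoke the K\"unneth theorem explicitly to write down $H^{\ast}(\tilde R_{3,2})$, which the paper simply lists.
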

\begin{proof}

$R_{3,2}$ is connected, being a fiber product of two connected spaces
over a connected space. Hence $H^0(R_{3,2},\KZ) \simeq \KZ.$
We know that the universal cover of $R_{3,2}$ is $R_3 \times K(\KZ,2)$
with deck transformation group $\KZ.$ 
Note that by the Hurewicz theorem, 
$H^2(\tilde{R}_{3,2},\KZ) \simeq \pi_2(\tilde{R}_{3,2}).$ 
Further, since $\tilde{R}_{3,2}$ is the universal cover, 
$\pi_2(\tilde{R}_{3,2}) \simeq \pi_2(R_{3,2}) \simeq \KZ^3.$ 
We calculated the action of $\pi_1(R_{3,2})$ on $\pi_2(R_{3,2})$ above.
Using this we see that the action of $\pi_1(R_{3,2})$ on 
$H^2(\tilde{R}_{3,2},\KZ)$ is multiplication by the same matrix.
The cohomology ring of $\tilde{R}_{3,2}$ is given by
\begin{itemize}
\item $H^0(\tilde{R}_{3,2},\KZ) \simeq \KZ,$
\item $H^1(\tilde{R}_{3,2},\KZ) \simeq 0,$
\item $H^2(\tilde{R}_{3,2},\KZ) \simeq (\KZ a_1 \oplus \KZ a_2) \oplus \KZ c,$
\item $H^3(\tilde{R}_{3,2},\KZ) \simeq 0,$
\item $H^4(\tilde{R}_{3,2},\KZ) \simeq \KZ a_1^2 \oplus \KZ a_2^2$
$\oplus \KZ a_1 c \oplus \KZ a_2 c \oplus \KZ c^2,$
\end{itemize}
The action of the deck group on $H^2$ may be used to calculated
the action of the deck group on $H^4.$ This is multiplication
by the matrix (in the basis $(a_1^2, a_2^2,a_1 c, a_2 c, c^2)$)
$$
\phi = \left(
\begin{array}{ccc|cc}
1 & 0 & 2 & 0 & 1 \\
0 & 1 & 0 & 0 & 0\\
0 & 0 & 1 & 0 & 1\\
\hline
0 & 0 & 0 & 1 & 0\\
0 & 0 & 0 & 0 & 1\\
\end{array}
\right).
$$
We apply the Cartan-Leray
spectral sequence to this universal cover to compute the cohomology 
groups of $R_{3,2}.$ Since $\KZ$ is cohomologically one dimensional,
the sequence collapses at the $E_2$ page itself. 
In particular only $E_2^{q,1}$ and $E_2^{q,2}$ are nonzero.
We find that 
$H^0(\KZ,H^q(\tilde{R}_{3,2},\KZ)) \simeq H^q(\tilde{R}_{3,2},\KZ)^{\KZ}$ 
and
$H^1(\KZ,H^q(\tilde{R}_{3,2},\KZ)) \simeq 
H^q(\tilde{R}_{3,2},\KZ)/(\phi -1)H^q(\tilde{R}_{3,2},\KZ).$
Using this action we obtain the cohomology groups shown.

Here $a_i$ are the pullbacks of the generator of $H^2(R_3,\KZ)$ 
and so $a_1 \cdot a_2 = 0.$ In addition $l$ is the pullback of the
generator $a$ of $H^1(R_2,\KZ)$ via $q_2^{\ast}.$ 
Also, $a_1 = q_2^{\ast}(b)$ and 
so\footnote{See Eq.\ (\ref{R2Rel}).}
$a_1 \cdot l = q_2^{\ast}(b \cdot a) = 0.$ 
Also, $x$ is a new generator in degree $3.$
\end{proof}

Let $p_3:E_3 \to R_3$ be the classifying bundle $E$ 
of Ref.\ \cite{Bunke} Sec.\ (2.4). 
By Sec.\ (\ref{SecClass}), there is a classifying bundle $p_2:E_2 \to R_2.$
By the isomorphisms on $H^2$ discussed above, it is clear that 
$q_2^{\ast}E_2 \simeq q_3^{\ast}E_3$ and we denote this bundle
by $E_{3,2}.$ This bundle has characteristic class $a_1 \in H^2(R_{3,2},\KZ).$
Let $p:E_{3,2} \to R_{3,2}$ be the bundle map.

\begin{theorem}
The cohomology groups of $E_{3,2}$ are 
\begin{itemize}
\item $H^0(E_{3,2},\KZ) \simeq \KZ,$
\item $H^1(E_{3,2},\KZ) \simeq \KZ y,$
\item $H^2(E_{3,2},\KZ) \simeq \KZ p^{\ast}(a_2) \oplus \KZ b,$
\item $H^3(E_{3,2},\KZ) \simeq \KZ p^{\ast}(a_2 l) \oplus \KZ h.$
\end{itemize}
Here $y=p^{\ast}(l)$ so $p_!(y) = 0,$ $p_!(b) = l$ and $p_!(h) = a_2.$ 
\label{ThmR32}
\end{theorem}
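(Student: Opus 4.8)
The plan is to run the Gysin sequence of the principal $S^1$-bundle $p:E_{3,2}\to R_{3,2}$, whose Euler class is the characteristic class $a_1\in H^2(R_{3,2},\KZ)$ fixed just before the statement, and feed into it the cohomology ring of $R_{3,2}$ computed in the preceding theorem. First I would record that $E_{3,2}$ is connected (it is a circle bundle over the connected space $R_{3,2}$), so $H^0(E_{3,2},\KZ)\simeq\KZ$. Then I would write down the Gysin sequence
\begin{equation}
\cdots \to H^{n-2}(R_{3,2},\KZ) \overset{\cup a_1}{\to} H^n(R_{3,2},\KZ) \overset{p^{\ast}}{\to} H^n(E_{3,2},\KZ) \overset{p_!}{\to} H^{n-1}(R_{3,2},\KZ) \overset{\cup a_1}{\to} H^{n+1}(R_{3,2},\KZ) \to \cdots \nonumber
\end{equation}
and compute degree by degree for $n\le 3$, using the three ring facts recorded in the previous theorem: $a_1 a_2 = 0$, $a_1 l = 0$, and $a_1^2$ is a nonzero free generator of $H^4(R_{3,2},\KZ)$ (as are $a_2^2$ and $x$).

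In degree $1$ the map $\cup a_1: H^0\simeq\KZ \to H^2 = \KZ a_1 \oplus \KZ a_2$ is injective, so $p_!$ vanishes on $H^1(E_{3,2},\KZ)$ and $p^{\ast}: H^1(R_{3,2},\KZ)=\KZ l \to H^1(E_{3,2},\KZ)$ is an isomorphism; this gives $H^1(E_{3,2},\KZ)\simeq\KZ y$ with $y=p^{\ast}(l)$ and $p_!(y)=0$. In degree $2$ the relation $a_1 l=0$ forces $\cup a_1: H^1=\KZ l \to H^3=\KZ a_2 l$ to be zero, so $p_!:H^2(E_{3,2},\KZ)\to \KZ l$ is surjective; since $\im(\cup a_1: H^0\to H^2)=\KZ a_1$, the map $p^{\ast}$ has kernel $\KZ a_1$ and embeds $\KZ a_2$. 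Thus there is a short exact sequence $0\to \KZ p^{\ast}(a_2) \to H^2(E_{3,2},\KZ) \to \KZ l \to 0$, which splits because $\KZ l$ is free; choosing $b$ with $p_!(b)=l$ yields $H^2(E_{3,2},\KZ)\simeq \KZ p^{\ast}(a_2)\oplus\KZ b$ (and $b$ may be identified with the pullback of the universal $b$-class $w$ on $E_2$, since $E_{3,2}=q_2^{\ast}E_2$). In degree $3$, again $a_1 l = 0$ makes $\cup a_1: H^1\to H^3$ zero, so $p^{\ast}:H^3(R_{3,2},\KZ)=\KZ a_2 l$ injects into $H^3(E_{3,2},\KZ)$; meanwhile $\cup a_1: H^2 = \KZ a_1\oplus\KZ a_2 \to H^4$ sends $a_1\mapsto a_1^2\neq 0$ and $a_2\mapsto a_1 a_2 = 0$, so its kernel is exactly $\KZ a_2$, whence $p_!$ maps $H^3(E_{3,2},\KZ)$ onto $\KZ a_2$. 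This gives a split short exact sequence $0\to \KZ p^{\ast}(a_2 l)\to H^3(E_{3,2},\KZ)\to \KZ a_2 \to 0$, and a choice of $h$ with $p_!(h)=a_2$ yields $H^3(E_{3,2},\KZ)\simeq \KZ p^{\ast}(a_2 l)\oplus \KZ h$.

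The only genuinely delicate point is the bookkeeping of the cup-product structure of $H^{\ast}(R_{3,2},\KZ)$: one must use precisely that $a_1 a_2=0$ and $a_1 l=0$ while $a_1^2\neq 0$, so that the multiplication-by-$a_1$ maps appearing in the Gysin sequence have the stated kernels and images; these facts are all established in the previous theorem. Everything else — connectedness of $E_{3,2}$, the freeness of the quotients $\KZ l$ and $\KZ a_2$ that makes the two short exact sequences split, and the standard identities $p_!\circ p^{\ast}=0$ — is routine, so no substantial new input beyond the cohomology ring of $R_{3,2}$ is needed.
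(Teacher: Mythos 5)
Your proposal is correct and follows essentially the same route as the paper: the Gysin sequence of $p:E_{3,2}\to R_{3,2}$ with Euler class $a_1$, fed with the ring relations $a_1a_2=0$, $a_1l=0$, $a_1^2\neq 0$ from the preceding theorem. The only cosmetic difference is that in degree $1$ you extract $H^1(E_{3,2},\KZ)\simeq\KZ y$ directly from the Gysin sequence where the paper first cites the Serre spectral sequence, and you spell out the splitting of the two short exact sequences slightly more explicitly; neither changes the substance.
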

\begin{proof}
Consider the Gysin sequences associated to $E_{3,2},E_3$ and $E_2.$ By
naturality, there are morphisms from the sequences associated to $E_3$ and
$E_2$ to the sequence associated to $E_{3,2}$ induced by the maps $q_2,q_3.$ 
Further, the maps $q_i$ have natural sections and hence 
$q_i^{\ast}$ are injective on cohomology. 
\begin{itemize}
\item{\underline{Degree 0}} $E_{3,2}$ is a fibration of a connected space 
($S^1$) over a connected base ($R_{3,2}$). Hence it is connected and 
$H^0(E_{3,2},\KZ) \simeq \KZ.$
\item{\underline{Degree 1}} From the Serre spectral sequence, we find that 
$H^1(E_{3,2},\KZ) \simeq \KZ y.$ From the Gysin sequence for $p:E_{3,2} \to
R_{3,2}$ we have
$$
\KZ l \stackrel{p^{\ast}}{\to} \KZ y \stackrel{p_!}{\to} \KZ.
$$
Now, since $p^{\ast}$ is injective, by exactness, and the absence of torsion
$y = p^{\ast}(l)$ and $p_!(y) = 0.$
\item{\underline{Degree 2}} From the Gysin sequence associated to 
$p:E_{3,2} \to R_{3,2}$ beginning at $H^0(R_{3,2},\KZ)$ we have
\begin{equation}
\begin{CD}
\KZ @>{\phi_1=\cup a_1}>> \KZ a_1 \oplus \KZ a_2 @>{p^{\ast}}>> 
H^2(E_{3,2},\KZ) @>{p_!}>> \KZ l @>{\cup a_1}>> \KZ a_2 l
\end{CD}
\nonumber
\end{equation}
From the previous part of the sequence $\phi_1$ is injective.
Therefore, by exactness 
$H^2(E_{3,2},\KZ) \simeq \KZ p^{\ast}(a_2) \oplus \KZ b.$
Note that the generator $w$ of $H^2(E_2,\KZ)$ pulls back to $H^2(E_{3,2},\KZ)$
as a generator $b.$ Then, we have $p_!(b) = l \in H^1(R_{3,2},\KZ).$ 
Under pullback by a classifying map $W \to R_{3,2},$
$b$ pulls back to the $B$-class of the triple $([p],b,H)$ being classified. 
\item{\underline{Degree 3}}

Consider the Gysin sequence for $p:E_{3,2} \to R_{3,2}$ 
\begin{equation}
\begin{CD}
\KZ l @>{\cup a_1}>> \KZ a_2 l @>{p^{\ast}}>> H^3(E_{3,2},\KZ) @>{p_!}>> 
\KZ a_1 \oplus \KZ a_2 @>{\cup a_1}>> \KZ a_1^2 \oplus \KZ a_2^2 
\oplus \KZ x.
\end{CD}
\nonumber
\end{equation}
We have that $a_1 \cdot l = 0$ (see Eq.\ (\ref{R2Rel})) hence $p^{\ast}$ is
injective, also the last map has kernel 
$\KZ a_2 \subset \KZ a_1 \oplus \KZ a_2.$  By exactness, 
$H^3(E_{3,2},\KZ) \simeq p^{\ast}(a_2 l) \oplus \KZ h.$  
Note that $H^3(E_{3,2},\KZ)$ contains a $\KZ$-subgroup which is
$\im(\lambda_2) \simeq H^3(E_3,\KZ) \simeq h \KZ$ 
such that $p_!(h) = a_2.$
This class pulls back along the classifying map $W \to R_{3,2}$ 
to the $H$-flux of the triple being classified.
\end{itemize}
\end{proof}
Thus over $R_{3,2}$ we have the {\em canonical triple} 
$([p], b, h)$ corresponding to the bundle $E_{3,2}$ above.

\begin{lemma}
The pullback of the fibration $q_3:R_{3,2} \to R_3$ along the T-duality
map $T_3:R_3 \to R_3$ is $R_{3,2}.$
\label{PBR32}
\end{lemma}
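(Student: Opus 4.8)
The plan is to recognise both sides as classifying spaces of the \emph{same} $\CSet$-valued functor, using as the only nonformal input the fact, due to Ref.~\cite{Bunke}, that the T-duality map $T_3:R_3\to R_3$ is an involution up to homotopy, $T_3\circ T_3\simeq\id_{R_3}$; equivalently, that $T_3$ is a homotopy self-equivalence of $R_3$. (This is exactly the assertion that T-duality of \emph{pairs} is involutive, in contrast to the failure at the level of triples discussed in Section~\ref{SecInvol}.) Recall that $q_3:R_{3,2}\to R_3$ is the fibration forgetting the $B$-class, $([p],b,H)\mapsto([p],H)$, and that $R_3$ is simply connected.

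First I would compute the functor represented by $T_3^{\ast}R_{3,2}$. For a CW complex $W$, a map $W\to T_3^{\ast}R_{3,2}$ is, by the defining pullback square, a pair of maps $g:W\to R_3$ and $k:W\to R_{3,2}$ with $q_3\circ k=T_3\circ g$. Translating through the classifying spaces: $g$ classifies a pair $P$ over $W$, $k$ classifies a triple over $W$, and the relation $q_3\circ k=T_3\circ g$ says precisely that the pair underlying that triple is the T-dual pair $P^{\#}=T_3(P)$. Hence the triple has the form $(P^{\#},b)$ with $b\in H^2(E^{\#},\KZ)$, where $p^{\#}:E^{\#}\to W$ is the bundle underlying $P^{\#}$, so that
\[
[W,\,T_3^{\ast}R_{3,2}]\;=\;\bigl\{\,(P,b)\ :\ P\in P_3(W),\ b\in H^2(E^{\#},\KZ)\,\bigr\}.
\]
The key step is then the observation that $(P,b)\mapsto(P^{\#},b)$ carries this set onto the set of \emph{all} triples over $W$, and is a bijection exactly because $P\mapsto P^{\#}$ is an involution of $P_3(W)$. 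Since $T_3$, $q_3$ and pullback of bundles are natural in $W$, this bijection is natural, i.e.\ an isomorphism of functors $[\,-\,,\,T_3^{\ast}R_{3,2}]\cong P_{3,2}$. A natural isomorphism of represented functors is induced by a homotopy equivalence of the representing spaces, and $R_{3,2}$ represents $P_{3,2}$; hence $T_3^{\ast}R_{3,2}\simeq R_{3,2}$, which is the lemma.

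Alternatively one can bypass the functor-of-points bookkeeping: $T_3$ is a weak equivalence of CW complexes and $R_3$ is simply connected, so the canonical map covering $T_3$ from the pullback fibration $T_3^{\ast}R_{3,2}\to R_3$ to $q_3:R_{3,2}\to R_3$ (which have the same fibre, namely the homotopy fibre $K(\KZ,2)\times K(\KZ,1)$ of $q_3$) is an isomorphism on all homotopy groups by the five lemma applied to the two long exact sequences; Whitehead's theorem then upgrades the induced map $T_3^{\ast}R_{3,2}\to R_{3,2}$ to a homotopy equivalence.

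The only real obstacle is supplying the input that $T_3$ is a homotopy self-equivalence of $R_3$, which I would take from Ref.~\cite{Bunke}; everything else is formal. The one point needing care is that $R_{3,2}$ is not simple (Theorem~\ref{UnivCover}), so one must work consistently in the unbased homotopy category throughout --- but that is exactly the setting in which $P_{3,2}$ was defined, so no difficulty arises.
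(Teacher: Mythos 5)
Your main argument is essentially the paper's own proof: both identify maps $W$ into the pullback with triples over $W$ by combining the universal property of the fiber product with the invertibility of $T_3$ up to homotopy (the paper writes $T_3^{-1}$ exactly where you invoke the involutivity of pair T-duality from Bunke--Schick), and then conclude by uniqueness of the classifying space of triples. Your alternative five-lemma/Whitehead argument is a legitimate shortcut the paper does not use, and it has the merit of making the sole nonformal input --- that $T_3$ is a homotopy self-equivalence of the simply connected space $R_3$ --- completely explicit.
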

\begin{proof}
We have a commutative diagram
\begin{equation}
\CD
T_{3,2}^{\ast}R_{3,2} @>{ T_{3,2} }>> R_{3,2}\\
@V{q_3}VV  @V{q_3}VV \\
R_3 @>T_3>> R_3
\endCD
\nonumber
\end{equation}
where we define $T_{3,2}$ as the map on $R_{3,2}$ induced by the pullback
and, by definition, 
$$
T_{3,2}^{\ast}R_{3,2} = \{ (x,y) \in R_3 \times R_{3,2} | T_3(x) = q_3(y) \}.
$$

Given $f:W \to T_{3,2}^{\ast} R_3, f(w) = (f_1(w),f_2(w))$ such that
$T_3 \circ f_1 = q_3 \circ f_2,$ we get 
$
T_3 \circ f_1: W \to R_3
$
and
$
q_2 \circ f_2: W \to R_2.
$
Obviously these maps define pairs $([p],b)$ and $([p],H)$ with the
{\em same } $[p].$
Conversely, given two such pairs, we obtain maps
$h_1:W \to R_3, h_2:W \to R_2,$ such that
$\mu \circ h_1 = \mu \circ h_2$ where $\mu:R_i \to K(\KZ,2)$ is
the map sending a pair to the class of the principal bundle in
the pair. We define $g_1=T_3^{-1} \circ h_1: W \to R_3.$ 
By the universal property of the fiber product, there is 
a map $g_2:W \to R_{3,2}.$  Then
$q_3 \circ g_2 = h_1 = T_3 \circ g_1$ by definition.
Hence, $(g_1,g_2)$ give a well-defined map $W \to T_{3,2}^{\ast}R_{3,2}.$
Therefore  $T_{3,2}^{\ast}R_{3,2}$ also classifies triples of the
form $([p],b,H)$ since a $2$-pair and a $3$-pair with the same $[p]$ 
define and are defined by the same triple. Hence, by uniqueness of
the classifying space of triples, we get a homeomorphism 
$\phi:T_{3,2}^{\ast} R_{3,2} \to R_{3,2}.$
\end{proof}

Note that if $X$ is a CW-complex, and $f:X \to R_{3,2}$ is
a classifying map which corresponds a triple $(p,b,H)$ on $X$ the
pair associated to this triple is $(p,H).$ The triple
associated to $T_{3,2} \circ f$ is $(p^{\#},b^{\#},H^{\#})$ and
the associated pair is $(p^{\#},b^{\#}).$ Thus, we see that 
$q_2 \circ T_{3,2} \neq q_2.$ However, we always require 
$q_2 = q_2 \circ \phi,$ i.e., we require that $a_1 = q_2^{\ast}(b)$ in 
$H^{\ast}(R_{3,2},\KZ)$ and in $H^{\ast}(T^{\ast}R_{3,2},\KZ)$ as
well. In addition we require that $a_1 \cdot l = a_1 \cdot a_2 = 0$ 
with no such relation on $a_2$ in both spaces.

Note that $q_3 \circ T_{3,2} = T_3 \circ q_3$
and hence $T_{3,2}^{\ast}$ acts on $H^2(R_{3,2},\KZ)$ by interchanging
$a_1$ and $a_2.$ 

\begin{lemma}
Let $\hat{E}_{3,2}$ be the bundle over $R_{3,2}$ of 
characteristic class $a_2 \in H^2(R_{3,2},\KZ).$
The cohomology groups of $\hat{E}_{3,2}$ are
\begin{itemize}
\item $H^0(\hat{E}_{3,2},\KZ) \simeq \KZ,$
\item $H^1(\hat{E}_{3,2},\KZ) \simeq \KZ \hat{y},$
\item $H^2(\hat{E}_{3,2},\KZ) \simeq \KZ p^{\ast}(a_1),$
\item $H^3(\hat{E}_{3,2},\KZ) \simeq \KZ \hat{h},$
\end{itemize}
and $\hat{p}_!(\hat{y}) = 0, \hat{y} = \hat{p}^{\ast}(l),$
$\hat{p}_!(\hat{h}) = a_1.$
\end{lemma}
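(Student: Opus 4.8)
The plan is to run the Gysin sequence of the oriented circle bundle $\hat{p}\colon\hat{E}_{3,2}\to R_{3,2}$ — whose Euler class is $a_2\in H^2(R_{3,2},\KZ)$ — against the cohomology of $R_{3,2}$ computed in the Theorem above, exactly as in the proof of Theorem \ref{ThmR32}. I will use that $H^0(R_{3,2},\KZ)\simeq\KZ$, $H^1\simeq\KZ l$, $H^2\simeq\KZ a_1\oplus\KZ a_2$, $H^3\simeq\KZ a_2l$ and $H^4\simeq\KZ a_1^2\oplus\KZ a_2^2\oplus\KZ x$, together with the ring relations $a_1a_2=0$, $a_1l=0$, and the facts that $a_2l$ generates $H^3(R_{3,2},\KZ)$ while $a_2^2$ spans a free $\KZ$-summand of $H^4(R_{3,2},\KZ)$. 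Since all of these groups are torsion free, every short exact sequence extracted from the Gysin sequence is a sequence of free abelian groups, so there are no extension problems to worry about.

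First, $\hat{E}_{3,2}$ is a circle bundle over the connected space $R_{3,2}$, hence connected, giving $H^0(\hat{E}_{3,2},\KZ)\simeq\KZ$. In degree $1$ I would use the segment
\[
0\longrightarrow H^1(R_{3,2})\xrightarrow{\ \hat{p}^{\ast}\ }H^1(\hat{E}_{3,2})\xrightarrow{\ \hat{p}_!\ }H^0(R_{3,2})\xrightarrow{\ \cup a_2\ }H^2(R_{3,2});
\]
since $\cup a_2\colon\KZ\to H^2(R_{3,2})$ sends $1\mapsto a_2\neq0$ it is injective, so $\hat{p}_!$ vanishes on $H^1(\hat{E}_{3,2})$ and $\hat{p}^{\ast}$ is an isomorphism, yielding $H^1(\hat{E}_{3,2},\KZ)\simeq\KZ\hat{y}$ with $\hat{y}=\hat{p}^{\ast}(l)$ and $\hat{p}_!(\hat{y})=0$. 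In degree $2$ the segment is
\[
H^0(R_{3,2})\xrightarrow{\ \cup a_2\ }H^2(R_{3,2})\xrightarrow{\ \hat{p}^{\ast}\ }H^2(\hat{E}_{3,2})\xrightarrow{\ \hat{p}_!\ }H^1(R_{3,2})\xrightarrow{\ \cup a_2\ }H^3(R_{3,2}),
\]
and since $\cup a_2$ sends $l\mapsto a_2l$, a generator of $H^3(R_{3,2})$, this last map is injective, so $\hat{p}_!$ again vanishes and $H^2(\hat{E}_{3,2},\KZ)$ is the cokernel of $1\mapsto a_2$ in $\KZ a_1\oplus\KZ a_2$, i.e. $\KZ\,\hat{p}^{\ast}(a_1)$.

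Finally, in degree $3$ I would use
\[
H^1(R_{3,2})\xrightarrow{\ \cup a_2\ }H^3(R_{3,2})\xrightarrow{\ \hat{p}^{\ast}\ }H^3(\hat{E}_{3,2})\xrightarrow{\ \hat{p}_!\ }H^2(R_{3,2})\xrightarrow{\ \cup a_2\ }H^4(R_{3,2}).
\]
Here $\cup a_2\colon H^1\to H^3$ is onto (it sends $l$ to the generator $a_2l$), so $\hat{p}^{\ast}$ is zero on $H^3(R_{3,2})$ and hence $\hat{p}_!$ is injective on $H^3(\hat{E}_{3,2})$; and $\cup a_2\colon H^2\to H^4$ sends $a_1\mapsto a_1a_2=0$ and $a_2\mapsto a_2^2\neq0$, so its kernel is exactly $\KZ a_1$. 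By exactness $\hat{p}_!$ carries $H^3(\hat{E}_{3,2})$ isomorphically onto $\KZ a_1$, so $H^3(\hat{E}_{3,2},\KZ)\simeq\KZ\hat{h}$ with $\hat{p}_!(\hat{h})=a_1$. This is the same computation as in Theorem \ref{ThmR32} with the roles of $a_1$ and $a_2$ interchanged; it comes out a little cleaner because $a_1l=0$ whereas $a_2l$ is a generator of $H^3(R_{3,2},\KZ)$, which is why $H^2$ and $H^3$ here have rank one rather than rank two.

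I do not expect any genuine obstacle: the argument is pure Gysin-sequence bookkeeping. The only step that needs care is verifying the input from the cohomology ring of $R_{3,2}$ — specifically that $a_2l$ generates $H^3(R_{3,2},\KZ)$, that $a_2^2$ is part of a free basis of $H^4(R_{3,2},\KZ)$, and that $a_1a_2=0$ — and all of this was already established in the preceding Theorem from the Cartan–Leray spectral sequence of the cover $R_3\times K(\KZ,2)\to R_{3,2}$.
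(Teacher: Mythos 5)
Your proposal is correct and follows essentially the same route as the paper: the paper also computes $H^0$ and $H^1$ exactly as for $E_{3,2}$, and then runs the Gysin sequence of $\hat{E}_{3,2}$ (Euler class $a_2$) in degrees $2$ and $3$, using that $\cup a_2\colon \KZ l\to\KZ a_2l$ is an isomorphism to kill $\hat{p}_!$ in degree $2$ and $\hat{p}^{\ast}$ in degree $3$, and that $\ker(\cup a_2\colon H^2\to H^4)=\KZ a_1$ to get $\hat{p}_!(\hat{h})=a_1$. Your write-up is in fact slightly more explicit than the paper's about why that kernel is exactly $\KZ a_1$ (via $a_1a_2=0$ and $a_2^2\neq0$), but the argument is the same.
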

\label{HatE32CoHo} 
\begin{proof}
The calculation for $H^0$ and $H^1$ is exactly the same
as for $H^0(E_{3,2},\KZ)$ and $H^1(E_{3,2},\KZ)$ with 
$y$ replaced by $\hat{y}.$
\begin{itemize}
\item{\underline{Degree 2}} Consider the Gysin sequence for
$\hat{E}_{3,2}$ beginning at $H^0(R_{3,2},\KZ)$
\begin{equation}
\begin{CD}
\KZ @>{\phi_1=\cup a_2}>> \KZ a_1 \oplus \KZ a_2 @>{\hat{p}^{\ast}}>> 
H^2(E_{3,2},\KZ) @>{\hat{p}_!}>> \KZ l @>{\cup a_2}>> \KZ a_2 l
\end{CD}
\nonumber
\end{equation}
The map $\phi_1$ is injective because of the previous part of
the sequence. The last map is injective and surjective since
$a_2 \cdot l \neq 0$ because we chose 
$q_2 = q_2 \circ \phi$ above\footnote{See paragraph after
the proof of the previous Lemma.}.
Hence, by exactness, the map $\hat{p}_!$ is zero and 
$H^2(\hat{E}_{3,2},\KZ) \simeq \KZ \hat{p}^{\ast}(a_1).$
\item{\underline{Degree 3}} Consider the Gysin sequence for
$\hat{E}_{3,2}$ beginning at $H^1(R_{3,2},\KZ)$
\begin{equation}
\begin{CD}
\KZ l @>{\phi_1=\cup a_2}>> \KZ a_2 l @>{\hat{p}^{\ast}}>> 
H^3(E_{3,2},\KZ) @>{\hat{p}_!}>> \KZ a_1 \oplus \KZ a_2 
@>{\cup a_2}>> \KZ a_1^2 \oplus \KZ a_2^2 \oplus \KZ x
\end{CD}
\nonumber
\end{equation}
The map $\phi_1$ is an isomorphism and hence $\hat{p}^{\ast} = 0.$
By exactness, $H^3(\hat{E}_{3,2},\KZ) \simeq \KZ \hat{h}$ and
$\hat{p}_!(\hat{h}) = a_1.$
\end{itemize}
\end{proof}

\begin{theorem}
\begin{enumerate}
\item The bundle $\hat{p}:\hat{E}_{3,2} \to R_{3,2}$ described above and
cohomology classes $\hat{b} \in H^2(\hat{E}_{3,2},\KZ)$ and 
$\hat{h} \in H^2(\hat{E}_{3,2},\KZ)$ are such that 
$T_{3,2}$ classifies the triple $([\hat{p}],\hat{b},\hat{h})$
over $R_{3,2}.$
\item Let $\tilde{T}: \hat{E}_{3,2} \to E_{3,2}$ be the map
covering $T_{3,2}:R_{3,2} \to R_{3,2},$ then $\tilde{T}$ 
acts on the generators of the low-dimensional cohomology of
$H^{\ast}(E_{3,2},\KZ)$ as follows
\begin{itemize}
\item $\tilde{T}^{\ast}(y) = 0$
\item $\tilde{T}^{\ast}(p^{\ast}(a_2)) = \hat{p}^{\ast}(a_1)$
\item $\tilde{T}^{\ast}(p^{\ast}(a_2 l)) = 0$
\item $\tilde{T}^{\ast}(h) = \hat{h}$
\item The T-dual of $b$ is of the form $k p^{\ast}(a_1)$ for some
$k.$ Hence, the T-dual of a triple $(p,b,H)$ is a triple of
the form $(\hat{p},p^{\ast}(y),\hat{H}).$ 
\end{itemize}
\item $T_{3,2}^2$ cannot be homotopic to the identity,
i.e. T-duality for triples is not involutive.
\end{enumerate}
\end{theorem}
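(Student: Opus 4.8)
The plan is to establish the three parts in order, using the fibre‑product presentation $R_{3,2}\simeq R_{2}\underset{K(\KZ,2)}{\times}R_{3}$, Lemma~\ref{PBR32}, the fact (recorded just before this theorem) that $q_{3}\circ T_{3,2}=T_{3}\circ q_{3}$ and hence that $T_{3,2}^{\ast}$ interchanges $a_{1}$ and $a_{2}$ on $H^{2}(R_{3,2},\KZ)$, and the cohomology computations for $R_{3,2}$, $E_{3,2}$ and $\hat E_{3,2}$. For part (1): $R_{3,2}$ is a classifying space for triples, so $T_{3,2}$ classifies the triple obtained by pulling the canonical triple back along $T_{3,2}$; its bundle is $T_{3,2}^{\ast}E_{3,2}$, whose characteristic class is $T_{3,2}^{\ast}(a_{1})=a_{2}$, so this bundle is $\hat E_{3,2}$. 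The underlying pair of this triple is $T_{3}$ applied to the canonical pair $([p],h)$ over $R_{3,2}$ (since $q_{3}\circ T_{3,2}=T_{3}\circ q_{3}$), so by \cite{Bunke} its dual flux $\hat h$ satisfies $\hat p_{!}(\hat h)=[p]=a_{1}$; as $\hat p_{!}$ is injective on $H^{3}(\hat E_{3,2},\KZ)=\KZ\hat h$ this identifies $\hat h$ with the canonical generator, and $\hat b$ lies in $H^{2}(\hat E_{3,2},\KZ)=\KZ\,\hat p^{\ast}(a_{1})$ by the computation of $H^{\ast}(\hat E_{3,2},\KZ)$.

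For part (2): the bundle map $\tilde T:\hat E_{3,2}\to E_{3,2}$ exists because $\hat E_{3,2}\simeq T_{3,2}^{\ast}E_{3,2}$, and it satisfies $p\circ\tilde T=T_{3,2}\circ\hat p$; hence $\tilde T^{\ast}\circ p^{\ast}=\hat p^{\ast}\circ T_{3,2}^{\ast}$ and, by naturality of the Gysin homomorphism under the bundle map $\tilde T$, $\hat p_{!}\circ\tilde T^{\ast}=T_{3,2}^{\ast}\circ p_{!}$. The one input that is not purely formal is $T_{3,2}^{\ast}(l)=0$: the composite $R_{3,2}\xrightarrow{T_{3,2}}R_{3,2}\xrightarrow{q_{2}}R_{2}\to K(\KZ,1)$ classifies the rule sending a triple to $\hat p_{!}$ of its dual $b$‑class, and by part (1) the dual $b$‑class always lies in $\KZ\,\hat p^{\ast}(a_{1})$, on which $\hat p_{!}$ vanishes; hence this composite is nullhomotopic and $T_{3,2}^{\ast}(l)=0$ because $l$ is the pullback of the fundamental class of $K(\KZ,1)$. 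Granting this, the four displayed identities are bookkeeping: e.g. $\tilde T^{\ast}(y)=\tilde T^{\ast}p^{\ast}(l)=\hat p^{\ast}T_{3,2}^{\ast}(l)=0$, and $\tilde T^{\ast}(h)$ is forced to be $\hat h$ since $\hat p_{!}(\tilde T^{\ast}(h))=T_{3,2}^{\ast}(p_{!}(h))=T_{3,2}^{\ast}(a_{2})=a_{1}=\hat p_{!}(\hat h)$ and $\hat p_{!}$ is injective on $H^{3}(\hat E_{3,2},\KZ)$; the remaining two follow the same way from the swap $a_{1}\leftrightarrow a_{2}$ and $T_{3,2}^{\ast}(l)=0$. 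The shape $k\,\hat p^{\ast}(a_{1})$ of the dual $b$‑class is already part of the computation of $H^{2}(\hat E_{3,2},\KZ)$, and being of that form it is a pullback from the base.

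For part (3): it suffices to find one cohomology class on which $(T_{3,2}^{2})^{\ast}$ and $\id^{\ast}$ disagree, and $H^{1}(R_{3,2},\KZ)=\KZ l$ does it. By the intermediate fact used in part (2), $T_{3,2}^{\ast}(l)=0$, so $(T_{3,2}^{2})^{\ast}(l)=T_{3,2}^{\ast}(0)=0$, whereas $\id^{\ast}(l)=l\neq 0$. Since singular cohomology with $\KZ$‑coefficients is a free‑homotopy invariant, $T_{3,2}^{2}$ is not homotopic to $\id$; equivalently $T_{3,2}$ annihilates $H_{1}(R_{3,2})\cong\pi_{1}(R_{3,2})\cong\KZ$, so $(T_{3,2}^{2})_{\ast}$ is not the identity on $\pi_{1}$. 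This is the classifying‑space shadow of the examples of Section~\ref{SecInvol}: a triple whose $b$‑class has nonzero Gysin image is sent to one whose dual $b$‑class has vanishing Gysin image, and a second T‑duality cannot restore it.

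The main obstacle is part (1), with the vanishing $T_{3,2}^{\ast}(l)=0$ as its key consequence: one must pin down exactly which triple $T_{3,2}$ classifies — in particular that the dual flux is the canonical generator $\hat h$, normalised by $\hat p_{!}(\hat h)=a_{1}$ (this is where Bunke's construction \cite{Bunke} enters), and that the dual $b$‑class must be a base pullback because $H^{2}(\hat E_{3,2},\KZ)$ is cyclic on $\hat p^{\ast}(a_{1})$. Once those normalisations are fixed by naturality of the Gysin maps, parts (2) and (3) reduce to the short computations above.
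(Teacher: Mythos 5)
Your proposal is correct and follows essentially the same route as the paper: identify the triple classified by $T_{3,2}$ via the pullback square of Lemma~\ref{PBR32} and the swap $a_1\leftrightarrow a_2$, then deduce the action of $\tilde T^{\ast}$ from the naturality identities $\tilde T^{\ast}\circ p^{\ast}=\hat p^{\ast}\circ T_{3,2}^{\ast}$ and $\hat p_!\circ\tilde T^{\ast}=T_{3,2}^{\ast}\circ p_!$ together with the computed Gysin sequences of $E_{3,2}$ and $\hat E_{3,2}$, with $T_{3,2}^{\ast}(l)=0$ (forced by the vanishing of $\hat p_!$ on $H^2(\hat E_{3,2},\KZ)$) giving non-involutivity. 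Your explicit use of the injectivity of $\hat p_!$ on $H^3(\hat E_{3,2},\KZ)$ to pin down $\tilde T^{\ast}(h)=\hat h$ is a slightly more detailed version of the same step the paper leaves implicit.
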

\begin{proof}
\begin{enumerate}
\item We consider the triple $([\hat{p}], \hat{b}, \hat{h})$ over
$R_{3,2}.$ This triple is classified by a map $f:R_{3,2} \to R_{3,2}.$
Then $f^{\ast}(a_1) = a_2.$ 
It is clear that $f^{\ast}E_{3,2} = \hat{E}_{3,2}.$
In addition if we forget the $B$-class, the action of $f$ on the
pair $([p],h)$ is to transform it into $([\hat{p}],\hat{h}).$ 
This implies that $q_3 \circ f = T_3 \circ q_3.$
Hence $f$ is the homotopic to $T_{3,2}$ by the uniqueness
of the pullback square described in Thm.\ (\ref{PBR32}).

\item 


Now, in the above two sequences, $T_{3,2}^{\ast}$ induces a natural map 
$(a_1,a_2) \to (a_2,a_1)$ on $H^2(R_{3,2},\KZ)$, further the 
induced maps on the remaining cohomology groups except $H^2(E_{3,2},\KZ)$ 
are the identity.

We have that $\hat{p}^{\ast} \circ T^{\ast}_{3,2} = 
{\tilde{T}}^{\ast} \circ p^{\ast}.$ As a result, 
$\tilde{T}(p^{\ast}(a_2)) = \hat{p}^{\ast}(a_1).$
Also, we have 
$T^{\ast}_{3,2} \circ p_! = \hat{p}_! \circ {\tilde{T}}^{\ast}.$
Hence 
$\hat{p}_! \circ {\tilde{T}}^{\ast}(b) = T^{\ast}_{3,2}(l).$ 
However, from the Gysin sequence for $\hat{E}_{3,2},$ we have 
$$
\KZ \hat{p}^{\ast}(a_1) \stackrel{\hat{p}_!}{\to} \KZ l 
\stackrel{\cup a_2}{\to} \KZ a_2 l \to \KZ \hat{h}.
$$
Now the map $H^1(R_{3,2},\KZ) \to H^3(R_{3,2},\KZ)$ is an 
isomorphism and hence $\hat{p}_!$ is zero.
As a result, $T^{\ast}_{3,2}(l)=0,$ and hence ${\tilde{T}}^{\ast}(y) = 0.$

Consider the above pair of Gysin sequences beginning at 
$H^3(R_{3,2},\KZ)$:
We have for $E_{3,2}$
$$
\KZ l \stackrel{\cup a_1}{\to} \KZ a_2 l \stackrel{p^{\ast}}{\to}
\KZ p^{\ast}(a_2 l) \oplus \KZ h \stackrel{p_!}{\to} \KZ a_1 \oplus \KZ a_2
\stackrel{\cup a_1}{\to} \KZ a_1^2 \oplus \KZ a_2^2 \oplus \KZ x
$$
and for $\hat{E}_{3,2}$
$$ 
\KZ l \stackrel{\cup a_2}{\to} \KZ a_2 l \stackrel{\hat{p}^{\ast}}{\to}
\KZ \hat{h} \stackrel{\hat{p}_!}{\to} \KZ a_1 \oplus \KZ a_2
\stackrel{\cup a_2}{\to} \KZ a_1^2 \oplus \KZ a_2^2 \oplus \KZ x
$$
From the above action of $T^{\ast}_{3,2}$ on $H^2(R_{3,2},\KZ)$
it is clear that $\tilde{T}^{\ast}(h) = \hat{h}.$ Also, since
$p^{\ast}$ is an isomorphism and $\hat{p}^{\ast}$ is zero, it is clear that
$\tilde{T}^{\ast}(p^{\ast}(a_2 l)) = 0.$
From the above, we have
\begin{itemize}
\item $\tilde{T}^{\ast}(y) = 0$
\item $\tilde{T}^{\ast}(p^{\ast}(a_2)) = \hat{p}^{\ast}(a_1)$
\item $\tilde{T}^{\ast}(p^{\ast}(a_2 l)) = 0$
\item $\tilde{T}^{\ast}(h) = \hat{h}.$
\item This is clear since, by the above, we know that 
$\hat{p}_!(\tilde{T}(b)) = 0.$ The result for the form of the T-dual
triple is obvious.
\end{itemize}
\item From the above, it is clear that 
\begin{itemize}
\item $T_{3,2}^{\ast}(l) = 0$
\item $T_{3,2}^{\ast}(a_1,0) = (0,a_2)$ and
$T_{3,2}^{\ast}(0,a_2) = (a_1,0).$
\item $T_{3,2}^{\ast}(a_2 l) = 0$
\end{itemize}
Therefore $T_{3,2}^2$ cannot be homotopic to the identity.
\end{enumerate}
\end{proof}

\begin{theorem}
\begin{enumerate}
\item There is a natural map $\gamma:R_3 \to R_{3,2}$ and hence
there is a natural triple associated to a pair $(p,H)$ over any
space, namely the triple $(p,p^{\ast}(p_!H),H).$
\item The T-dual of the triple in the previous item is the triple 
$(\hat{p},\hat{p}^{\ast}(\hat{p}_!\hat{H}),\hat{H}),$
that is, $T_{3,2} \circ \gamma = \gamma \circ T_3.$
\item Each class $\alpha$ in $H^2(K(\KZ^2,2),\KZ)$ gives a
map $\gamma_\alpha:R_3 \to R_{3,2}.$ Each of these maps is an 
inverse to the forgetful map $R_{3,2} \to R_3$ which `forgets' the
class $b.$
\end{enumerate}
\end{theorem}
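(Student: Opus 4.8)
The plan is to extract all three parts from the representability of $P_{3,2}$ together with the cohomological description of $T_{3,2}$ established above, reducing everything to the behaviour of integration over the circle fibre under pullback and under topological T-duality. Throughout I write $(p_i,H_i)$ for the universal pair over $R_i$, and I use $q_3\circ T_{3,2}=T_3\circ q_3$ and $H^2(R_{3,2},\KZ)=q_3^{\ast}H^2(R_3,\KZ)=\KZ a_1\oplus\KZ a_2$ with $a_1=q_2^{\ast}(b)$, all from the preceding theorems; note in particular that $a_1$ is $q_3^{\ast}$ of the Chern class of the universal bundle over $R_3$ and $a_2$ is $q_3^{\ast}(p_{3!}H_3)$.

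For (1), I would first check that the rule sending a pair $([p],H)$ over a $CW$-complex $W$ — a circle bundle $p:E\to W$ with $H\in H^3(E,\KZ)$ — to the triple $([p],\,p^{\ast}p_!(H),\,H)$ is natural in $W$: for $f:V\to W$ the base-change identity $f^{\ast}\circ p_!=(f^{\ast}p)_!\circ\tilde f^{\ast}$ gives $f^{\ast}(p^{\ast}p_!H)=(f^{\ast}p)^{\ast}(f^{\ast}p)_!(\tilde f^{\ast}H)$, so the construction commutes with pullback of pairs and triples. It therefore defines a natural transformation $P_3\Rightarrow P_{3,2}$, hence a map $\gamma:R_3\to R_{3,2}$ classifying the triple $([p_3],\,p_3^{\ast}p_{3!}H_3,\,H_3)$ over $R_3$; forgetting the $b$-class recovers $([p],H)$, so $q_3\circ\gamma\simeq\operatorname{id}_{R_3}$, and pulling back the universal triple along $\gamma$ furnishes the claimed canonical triple $(p,p^{\ast}p_!H,H)$ attached to any pair.

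For (2), observe that $q_3\circ(T_{3,2}\circ\gamma)=T_3\circ q_3\circ\gamma=T_3=q_3\circ(\gamma\circ T_3)$, so both composites are lifts of $T_3$ along $q_3$; hence the triples they classify over $R_3$ share the same underlying pair — the T-dual pair $([p^{\#}],H^{\#})=T_3([p],H)$ — and only the $b$-classes need be matched. By naturality of T-duality, $T_{3,2}\circ\gamma$ classifies $\gamma^{\ast}$ of the T-dual of the universal triple, which by the preceding theorem lives on $\hat E_{3,2}$ and has $b$-class a multiple $k\,\hat p^{\ast}(a_1)$ of the generator of $H^2(\hat E_{3,2},\KZ)$; since $\gamma^{\ast}(a_1)=(q_3\circ\gamma)^{\ast}c_1(p_3)=c_1(p_3)$, the $b$-class of $T_{3,2}(\gamma([p],H))$ works out to $k\,(p^{\#})^{\ast}c_1(p)$, whereas by Part (1) that of $\gamma(T_3([p],H))$ is $(p^{\#})^{\ast}(p^{\#}_!H^{\#})=(p^{\#})^{\ast}c_1(p)$, using the standard topological T-duality relation $p^{\#}_!H^{\#}=c_1(p)$ (see \cite{Bunke}). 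Thus the identity $T_{3,2}\circ\gamma=\gamma\circ T_3$ comes down to $k=1$. \emph{Pinning down this normalisation — identifying the T-dual $b$-class exactly, not merely modulo $\ker((p^{\#})^{\ast})=\langle p^{\#}_!H^{\#}\rangle$ — is the step I expect to be the main obstacle}: I would derive $k=1$ either from the explicit action of $\tilde T$ on the low-degree cohomology of $E_{3,2}$ in the preceding theorem together with the Buscher rule that T-duality inverts (not merely rescales) the fibre coordinate, or, failing a clean argument, by evaluating both sides on a test space such as $\KCP^2$ (whose T-dual is a lens space $L(2,j)$) and comparing the resulting $b$-classes with the crossed-product computation of \cite{ATMP}.

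For (3), I would use that $q_3:R_{3,2}\to R_3$ has homotopy fibre $K(\KZ,2)\times K(\KZ,1)$ and that $R_3$ is simply connected (indeed $\pi_1(R_3)=0$ and $H^3(R_3,\KZ)=0$), so by the theorem above $\tilde R_{3,2}\simeq R_3\times K(\KZ,2)$ with $q_3$ lifting to the projection onto the first factor. Since $R_3$ is simply connected every map $R_3\to R_{3,2}$ lifts to $R_3\to R_3\times K(\KZ,2)$, and such a lift is a section of the projection precisely when it is the graph of a map $g:R_3\to K(\KZ,2)$, i.e.\ of a class in $H^2(R_3,\KZ)$; pushing the graph back down to $R_{3,2}$ gives a one-sided inverse of $q_3$. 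Factoring $g$ through the natural map $(c_1,\,p_{3!}H_3):R_3\to K(\KZ^2,2)$ identifies these sections with $H^2(K(\KZ^2,2),\KZ)$, so to each $\alpha\in H^2(K(\KZ^2,2),\KZ)$ there corresponds a map $\gamma_{\alpha}:R_3\to R_{3,2}$ with $q_3\circ\gamma_{\alpha}\simeq\operatorname{id}_{R_3}$, which attaches to a pair $([p],H)$ the triple $([p],b_{\alpha},H)$ with $b_{\alpha}\in H^2(E,\KZ)$ read off from $\alpha$ through the Gysin sequence of $p$; the map of Part (1) is $\gamma=\gamma_{\alpha_0}$ for the $\alpha_0$ corresponding to $p^{\ast}p_!H$.
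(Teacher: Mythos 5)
Your proposal is correct and follows essentially the same route as the paper: part (1) is the section of $q_3$ classifying the triple $([p],p^{\ast}p_!H,H)$ (the paper builds it as a lift through $\tilde R_2\simeq K(\KZ,2)\times K(\KZ,2)$, you build it by naturality and representability, which amounts to the same thing), part (2) reduces via $q_3\circ T_{3,2}=T_3\circ q_3$ to matching $b$-classes, and part (3) parametrizes sections of $q_3$ by $H^2(K(\KZ^2,2),\KZ)$ exactly as in the paper. The one step you flag as the main obstacle --- fixing the normalisation $k=1$ rather than merely $\hat b=k\,\hat p^{\ast}(a_1)$ --- is in fact also the soft spot of the paper's own argument, which asserts that ``by the above theorem the T-dual is exactly $(q,q^{\ast}q_!\hat h,\hat h)$'' even though that theorem only gives $k\in\KZ$ unspecified; so the checks you propose (the action of $\tilde T$ on low-degree cohomology, or evaluation on a test space) would, if carried out, strengthen rather than merely reproduce the published proof.
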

\begin{proof}
\begin{enumerate}
\item Note that we have a natural map $R_3 \to K(\KZ,2) \times K(\KZ,2)$
which is given by sending $(p,H) \to (p,p_!(H)).$  We have a natural
identification (up to homotopy) of $K(\KZ,2) \times K(\KZ,2)$ with 
$\tilde{R}_2$ and, as a result, there is a natural map 
$\gamma:R_3 \to R_2.$ 
We may naturally associate to any $3$-pair $(p,H)$ over
a space $X$ a $2$-pair $(p,p^{\ast}p_!H)$ over that space. 
Note that the map $\gamma$ commutes with the natural maps 
$\mu:R_3 \to K(\KZ,2)$ and $\nu:R_2 \to K(\KZ,2).$ Hence,
there is a natural lift $\eta$ of $\gamma$ to $R_{3,2}$ given by
$\eta:R_3 \to R_{3,2}$ such that 
$\eta(a) = (a, \gamma(a)) \in R_3 \times R_2.$ 
This gives a natural triple associated to any pair $(p,H):$
To $(p,H)$ we associate $(p,p^{\ast}p_!H,H).$ 

\item We know that under T-duality, the bundle $E_3$ on $R_3$ is mapped
to the bundle $\hat{E}_3$ on $R_3.$ If we examine the cohomology of 
these bundles, we see that the T-dual of the triple 
$(p,p^{\ast}p_!h,h)$ should be a triple of the form
$(q, k.q^{\ast}q_!\hat{h}, \hat{h}), k \in \KZ.$  By the above
theorem, the T-dual is exactly the triple $(q, q^{\ast}q_!\hat{h}, \hat{h}).$
This implies that the map $\gamma$ commutes with $T_{3,2}$ and $T_3:$
That is, $T_{3,2} \circ \gamma = \gamma \circ T_3.$

\item Note that each map $f:K(\KZ^2,2) \to K(\KZ^2,2)$ up to 
homotopy gives a lift of the form above, if the map is such
that $f(a,b) = (a, \alpha(a,b)).$ Thus, each element $\alpha$ in 
$H^2(K(\KZ^2,2),\KZ)$ gives such a lift.
It is clear that each of these lifts correspond to triples of the form 
$(p, k.p^{\ast}(a_1), h), k \in \KZ$ on $R_3.$
Note that each of these lifts is an inverse to the forgetful map 
$R_{3,2} \to R_3$ which `forgets' the element $b$ of a triple. 
\end{enumerate}
\end{proof}

We may now prove the result we conjectured in Sec.\ (\ref{SecInvol}):
\begin{corollary}
Let $W$ be a CW-complex and $p:E \to W$ a principal $S^1$-bundle
over $W.$ Let $b \in H^2(E,\KZ)$ and $H \in H^3(E,\KZ)$ be the
$B$-class and $H$-flux on the bundle. Let $([q], b^{\#},H^{\#})$
be the T-dual triple.
Topological T-duality induces a bijection between the sets
$$
\{b + l p^{\ast}p_!(H)\} \protect{\mbox{ and }}
\{b^{\#} + m q^{\ast}{q_!(H^{\#})}\}.
$$
\end{corollary}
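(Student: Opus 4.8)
The plan is to reduce the statement to the universal triple over $R_{3,2}$ and then read it off from the cohomology computations already assembled. First I would fix the principal bundle $p:E\to W$ and the $H$-flux $H\in H^3(E,\KZ)$ and note that the assignment $b'\mapsto (b')^{\#}$, sending the triple $([p],b',H)$ to the $B$-class of its T-dual, is a well-defined map $\Phi:H^2(E,\KZ)\to H^2(E^{\#},\KZ)$: by the bundle-level duality used in the Lemma of Section \ref{SecInvol}, the T-dual bundle $q:E^{\#}\to W$ (with $[q]=p_!(H)$) and the T-dual flux $H^{\#}$ (with $q_!(H^{\#})=[p]$) depend only on $(p,H)$, not on $b'$. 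The corollary is then the assertion that $\Phi$ maps the coset $\{b+l\,p^{\ast}p_!(H):l\in\KZ\}$ bijectively onto $\{b^{\#}+m\,q^{\ast}q_!(H^{\#}):m\in\KZ\}$, where $b^{\#}=\Phi(b)$.

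Next I would classify the triple by a map $f:W\to R_{3,2}$, with covering maps $\tilde f:E=f^{\ast}E_{3,2}\to E_{3,2}$ and $\hat g:E^{\#}=f^{\ast}\hat E_{3,2}\to\hat E_{3,2}$. Naturality of $p_!$ and of pullback gives $p^{\ast}p_!(H)=\tilde f^{\ast}\bigl(p^{\ast}(a_2)\bigr)$ on $E$ (using $p_!(h)=a_2$ from Theorem \ref{ThmR32}) and $q^{\ast}q_!(H^{\#})=\hat g^{\ast}\bigl(\hat p^{\ast}(a_1)\bigr)$ on $E^{\#}$ (using $\hat p_!(\hat h)=a_1$ from the lemma computing $H^{\ast}(\hat E_{3,2},\KZ)$). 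Since the T-dual of the triple classified by $f$ is the triple classified by $T_{3,2}\circ f$, whose covering map $\tilde T:\hat E_{3,2}\to E_{3,2}$ and its action on cohomology were computed in the theorem describing the structure of $T_{3,2}$, naturality identifies $\Phi$ restricted to our coset with the pullback along $f$ of the universal map $\tilde T^{\ast}$ restricted to $\{b_{\mathrm{univ}}+l\,p^{\ast}(a_2):l\in\KZ\}$. From $\tilde T^{\ast}(p^{\ast}(a_2))=\hat p^{\ast}(a_1)$ and $\tilde T^{\ast}(b_{\mathrm{univ}})\in\KZ\,\hat p^{\ast}(a_1)$ (both in that theorem) one sees that, universally, $\tilde T^{\ast}$ carries $\{b_{\mathrm{univ}}+l\,p^{\ast}(a_2)\}\cong\KZ$ bijectively onto $\KZ\,\hat p^{\ast}(a_1)=\{\hat b+m\,\hat p^{\ast}(a_1)\}$; applying $f$ yields $\Phi(b+l\,p^{\ast}p_!(H))=b^{\#}+l\,q^{\ast}q_!(H^{\#})$, which gives surjectivity onto the target coset.

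The remaining point, and the one I expect to be the real obstacle, is injectivity for arbitrary $W$: the map $l\mapsto b^{\#}+l\,q^{\ast}q_!(H^{\#})$ is a bijection of cosets exactly when the cyclic subgroups $\langle p^{\ast}p_!(H)\rangle\subseteq H^2(E,\KZ)$ and $\langle q^{\ast}q_!(H^{\#})\rangle\subseteq H^2(E^{\#},\KZ)$ have the same order. By the Gysin sequences of $p$ and $q$, using $p^{\ast}p_!(H)=p^{\ast}([q])$ and $q^{\ast}q_!(H^{\#})=q^{\ast}([p])$, these orders are the least $n>0$ with $n[q]\in\KZ[p]$ and the least $m>0$ with $m[p]\in\KZ[q]$ respectively; I would prove $n=m$ by exploiting the T-duality correspondence on the total spaces together with the relations $[q]=p_!(H)$, $[p]=q_!(H^{\#})$ and $[p]\cup[q]=0$ in $H^4(W,\KZ)$, which amounts to an unbased refinement of the isomorphism $H^2(E,\KZ)/\langle p^{\ast}p_!H\rangle\cong H^2(E^{\#},\KZ)/\langle q^{\ast}q_!H^{\#}\rangle$ proved in Section \ref{SecInvol} under the hypothesis $H^1(W,\KZ)=0$. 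Removing that hypothesis is the crux: one has to show the T-duality correspondence on $E\times_W E^{\#}$, pushed through the Gysin maps, matches the two cyclic subgroups compatibly even when $H^1(W,\KZ)\neq 0$, after which the bijection — and with it the conjecture of Section \ref{SecInvol} — follows.
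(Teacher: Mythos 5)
Your first two paragraphs are, in substance, the paper's own proof carried out in more detail: the paper reduces to the universal bundle $E_{3,2}$ and its dual $\hat E_{3,2}$ over $R_{3,2}$, notes that $\tilde T^{\ast}(b)$ lies in $\KZ\,\hat p^{\ast}(a_1)$ while $\tilde T^{\ast}(p^{\ast}(a_2))=\hat p^{\ast}(a_1)$, and then says ``by pullback, the result is true for $X$.'' Up to that point you and the paper agree, and your naturality bookkeeping is correct.

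The genuine gap is exactly where you place it, in your third paragraph --- but the repair you sketch cannot succeed, because the claim $n=m$ is false in general. Pulling back the universal bijection of $\KZ$-torsors only yields a $\KZ$-equivariant surjection of $\{b+l\,p^{\ast}p_!(H)\}$ onto $\{b^{\#}+m\,q^{\ast}q_!(H^{\#})\}$; for this to be injective the generators $p^{\ast}p_!(H)$ and $q^{\ast}q_!(H^{\#})$ must have equal order, and they need not. Take $W=S^2$, $[p]=2$, so $E=L(2,1)={\mathbb R}{\mathbb P}^3$ with $H^2(E,\KZ)\simeq\KZ_2$, and choose $H\in H^3(E,\KZ)\simeq\KZ$ with $p_!(H)=4$. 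Then $[q]=4$, $E^{\#}=L(4,1)$ with $H^2(E^{\#},\KZ)\simeq\KZ_4$, and $q_!(H^{\#})=[p]=2$. Since $p^{\ast}$ and $q^{\ast}$ are reduction mod $2$ and mod $4$ respectively, $p^{\ast}p_!(H)=p^{\ast}(4)=0$ in $\KZ_2$ while $q^{\ast}q_!(H^{\#})=q^{\ast}(2)=2\neq 0$ in $\KZ_4$: the first coset is a singleton and the second has two elements, so no bijection between them exists. (The quotient groups $H^2(E,\KZ)/\langle p^{\ast}p_!H\rangle$ and $H^2(E^{\#},\KZ)/\langle q^{\ast}q_!H^{\#}\rangle$ are both $\KZ_2$, consistent with the Lemma of Sec.\ (\ref{SecInvol}).) So the statement survives only if read as a bijection of quotient sets, or of the indexing families over $\KZ$ --- not as the element-wise bijection of the two cosets that your argument, and the literal wording of the Corollary, aim at; no amount of exploiting the correspondence space will produce the order-matching your plan requires.
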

\begin{proof}
Consider the universal bundle $E_{3,2}$ over $R_{3,2}$ and the
T-dual bundle $\hat{E}_{3,2}$ over $R_{3,2}.$ 
The result is obvious for the universal bundle since 
$b^{\#}$ is always of the form $k \cdot q^{\ast}{q_!(H^{\#})}, k \in \KZ.$ 
By pullback, the result is true for $X.$
\end{proof}

We now return to the question raised in the paragraph after 
Lemma (\ref{LemATMP}) in Sec.\ (\ref{SecBr}). By Lemma (\ref{LemATMP}) 
and Sec.\ (\ref{SecModel}), for a given space $X,$ at fixed $H$
this is equivalent to knowing the image of the map $T_{3,2}$ studied above. The above
Corollary indicates that all automorphisms with Phillips-Raeburn invariant in 
a given coset map into automorphisms with Phillips-Raeburn invariant in 
another coset. This constrains the map and determines it in a large variety
of cases as was seen previously in Sec.\ (\ref{SecInvol}).

\section{Acknowledgements}
I thank Professor Jonathan Rosenberg, University of Maryland, College Park,
for encouraging me to work on this problem and for 
many useful discussions.  I acknowledge financial
aid from a Research Assistantship supported by the
NSF Grant DMS-0504212 during a portion of this work. 

I thank Professor Peter Bouwknegt, ANU, for useful discussions on the
physical meaning of the triples studied here.
I acknowledge financial aid
from a Postdoctoral Fellowship supported by 
the ARC Discovery Project `Generalized Geometries and 
their Applications' during a portion of this work.

I thank the Department of Mathematics, Harish-Chandra Research Institute,
Allahabad, for support from a postdoctoral fellowship during a 
portion of this work.

\end{document}